\newif\ifisarxiv
  \newcommand{\appref}[1]{Appendix~\ref{#1}}
  \newcommand{\appref}[1]{\cite[Appendix {#1}]{extendedversion}}
\title{Beyond Monads and Biproducts: A Uniform Interpretation of Parallelism in Intuitionistic Logic}
\titlerunning{Beyond Monads and Biproducts: A Uniform Interpretation of Parallelism in IL}
\author{Alejandro Díaz-Caro}{Université de Lorraine, CNRS, Inria, LORIA, France\and Universidad Nacional de Quilmes, Argentina \and \url{https://members.loria.fr/ADiazCaro/} }{alejandro@diaz-caro.info}{https://orcid.org/0000-0002-5175-6882}{}
\author{Octavio Malherbe}{Universidad de la República, Facultad de Ingeniería, IMERL, Uruguay}{malherbe@fing.edu.uy}{}{}
\authorrunning{A. Díaz-Caro and O. Malherbe} 
\keywords{Algebraic lambda calculus, Categorical semantics, Disjunction, Proof theory}
\DeclareFontFamily{U}{stmry}{}
\DeclareFontShape{U}{stmry}{b}{n}
{  <4.25>stmary5
  <5> <6> <7> <8> <8.5> <9> <10> gen * stmary
  <10.95><12><14.4><17.28><20.74><24.88>stmary10%
}{}
\DeclareFontShape{U}{stmry}{m}{n}
{ <4.25>stmary5 
  <5> <6> <7> <8> <8.5> <9> <9.5> <10> gen * stmary
  <10.95><12><14.4><17.28><20.74><24.88>stmary10%
}{}
\DeclareFontFamily{U}{cmllr}{}
\DeclareFontShape{U}{cmllr}{bx}{n}
{ <4.25>stmary5 
  <5> <6> <7> <8> <8.5> <9> <9.5> <10> gen * stmary
  <10.95><12><14.4><17.28><20.74><24.88>stmary10%
}{}
\DeclareFontShape{U}{cmllr}{m}{n}
{ <4.25>stmary5 
  <5> <6> <7> <8> <8.5> <9> <9.5> <10> gen * stmary
  <10.95><12><14.4><17.28><20.74><24.88>stmary10%
}{}
\newcommand\Obj{\mathsf{Obj}}
\newcommand\Arr{\mathsf{Arr}}
\newcommand\restate[4]{\noindent{\textcolor{lipicsGray}{$\blacktriangleright$}\nobreakspace\sffamily\bfseries #1~\ref{#2}~{\normalfont\sffamily (#3)}.}\nobreakspace{\em #4}}
\newcommand\scal[1]{\mathfrak{#1}}
\newcommand\sstare[1]{\protect{\star_{#1}}}
\newcommand\sstar[1]{\sstare{\scal #1}}
\newcommand\irule[3]{\infer[\mbox{\footnotesize $#3$}]{#2}{#1}}
\newcommand\TheCat{\ensuremath{\mathbf{Mag}_{\mathbf{Set}}}\xspace}
\newcommand\TheCatAlg{\ensuremath{\mathbf{AMag}^{\mathcal{S}}_{\mathbf{Set}}}\xspace}
\newcommand\alglambda{\lambda_{\textsf{alg}}^{\mathcal S}}
\newcommand\plus{\parallel}
\newcommand\inl{\mathsf{inl}}
\newcommand\inr{\mathsf{inr}}
\newcommand\elimtop{\delta_{\top}}
\newcommand\elimbot{\delta_{\bot}}
\newcommand\elimor{\delta_{\vee}}
\newcommand\pair[2]{\langle #1, #2 \rangle}
\newcommand\abstr[1]{#1.}
\newcommand\xlra[1]{\xrightarrow{#1}}
\newcommand\home[2]{[#1\to #2]}
\newcommand\sem[1]{\left\llbracket {#1}\right\rrbracket}
\newcommand\Id{\mathsf{id}}
\newcommand\lra{\longrightarrow}
\newcommand\lla{\longleftarrow}
\newcommand\inclusion{\mathrm{inc}}
\newcommand\coproducto[2]{[#1,#2]}
\newcommand\cp{\ensuremath{\mathbin{\raisebox{0.3ex}{\scalebox{0.8}{\ooalign{$\bigcirc$\cr${\scalebox{0.8}{\hspace{0.05mm}$\dashv\vdash$}}$}}}}}}
\newcommand\cplabel{\scalebox{0.7}{$\cp$}}
\newcommand\symbolOpSG{\mathbin{\scalebox{1.1}{$\ast$}}}
\newcommand\sumhat[1]{\mathbin{\symbolOpSG_{{}_{\!#1}}}}
\newcommand\prodhat[1]{\mathbin{\bullet_{{}_{\!#1}}}}
\newcommand\subcp{\scalebox{0.5}{\cp}}
\newcommand\sumcp{\protect{\ensuremath{\sumhat{\subcp}}}}
\newcommand\prodcp{\ensuremath{\prodhat{\subcp}}}
\begin{document}

\maketitle


\begin{abstract}
  Traditional approaches to modelling parallelism and algebraic structure in lambda calculi often rely on monads---as in Moggi's framework---or on rich categorical structures such as biproducts---as used in certain models of linear logic. In this work, we propose a minimal alternative that captures both parallelism and weighted parallelism (linear combinations) within the setting of intuitionistic propositional logic, without resorting to monads or assuming the existence of biproducts.

We introduce two lambda calculi: a parallel lambda calculus and an algebraic lambda calculus, both extending full propositional intuitionistic logic. Their semantics are given in two categories: ${\mathbf{Mag}_{\mathbf{Set}}}$, whose objects are magmas and arrows are functions in $\mathbf{Set}$; and ${\mathbf{AMag}^{\mathcal{S}}_{\mathbf{Set}}}$, whose objects are action magmas.

The key technical challenge addressed is the interpretation of disjunction in the presence of parallel and algebraic operators. Since the usual coproduct structure is unavailable in our minimal setting, we propose a novel set-theoretic interpretation based on the union of the disjoint union and the Cartesian product. This allows for the construction of sound and adequate models for both calculi.

Our results offer a unified and structurally lightweight framework for modelling parallelism and algebraic effects in intuitionistic logic, opening the way to alternatives beyond the traditional monadic or linear logic approaches.
\end{abstract}

\section{Introduction}
\subparagraph*{Motivation}
In his 1991 paper~\cite{Moggi91}, Eugenio Moggi introduced a uniform framework to handle effectful computations in the lambda calculus, using monads to encapsulate various computational effects such as non-determinism, state, exceptions, and input/output. This monadic approach has since been widely adopted in functional programming languages, providing a powerful abstraction for reasoning about side effects. In the context of intuitionistic propositional logic---which corresponds categorically to cartesian closed categories---non-determinism can be captured, for instance, in the category $\mathbf{Set}$ by the map $\chi : A \times A \to \mathcal{P}A$ defined as $(x,y) \mapsto \{x,y\}$. This non-deterministic operator is sometimes referred to as a \emph{parallel operator}, as it can be seen as preserving all possible outputs in parallel.

While this monadic framework is highly versatile, it does not extend seamlessly
to all logical systems. In particular, within Intuitionistic Linear Logic
(ILL), the monadic encoding of non-determinism becomes problematic: the map
$\chi$ is not linear. For instance, $\chi(x_1 + x_2, y_1 + y_2) = \{x_1 + y_1,
x_2 + y_2\}$, whereas $\chi(x_1, y_1) + \chi(x_2, y_2) = \{x_1, y_1\} + \{x_2,
y_2\}$, which does not yield the same set under any reasonable definition of
set addition.

An alternative approach, suitable for linear settings, was developed through a
line of work on relational models. In particular,
in~\cite{BucciarelliEhrhardManzonettoLFCS09}, building on previous work
developed in Manzonetto’s doctoral thesis~\cite{ManzonettoThesis08} and in
\cite{BucciarelliEhrhardManzonettoCSL07}, and in the context of untyped
$\lambda$-calculus models, the hom-sets of the category $\mathbf{MRel}$,
restricted to a reflexive object, were enriched with a commutative semiring
structure in order to model a parallel and non-deterministic $\lambda$-calculus.
In the typed setting, this approach was pursued in weighted relational models
of typed $\lambda$-calculi~\cite{LairdManzonettoMcCuskerPaganiLICS13}, where the
effect is captured by the map $\chi' : A \times A \to A$ defined by
$(x,y) \mapsto x+y$. In~\cite{DiazcaroMalherbe24}, this idea was extended into a
more general categorical characterisation, relying on the presence of biproducts
to define the sum.

The guiding question of the present work is the following: \emph{Can we construct an alternative to Moggi's monad, inspired by the use of biproducts, that works in the context of propositional logic, where biproducts are not available?} We show that the answer is affirmative by providing a concrete model that captures this effect without requiring the full structure of a biproduct.

Our proposal is a structurally simple alternative capable of capturing both
parallelism and weighted parallelism (or linear combinations). We move from the
category $\mathbf{Set}$ to a richer setting based on magmas, where we define an
ad hoc operation that mimics---to some extent---the behaviour of a biproduct,
though it is not a biproduct in the categorical sense.

More precisely, we introduce the category $\TheCat$, whose objects are magmas with no required algebraic properties, and whose arrows are simply functions from the category $\mathbf{Set}$. In addition, we define the category $\TheCatAlg$, whose objects are \emph{action magmas}---magmas equipped with an external scalar action---and whose arrows are again functions from $\mathbf{Set}$. These constructions allow us to interpret the parallel lambda calculus and the algebraic lambda calculus, respectively, in a way reminiscent of the interpretation of the parallel linear lambda calculus in the category of vector spaces, but without requiring the existence of biproducts.

A key difficulty in adapting these techniques to intuitionistic propositional logic lies in the treatment of disjunction. In contrast to  models of linear logic equipped with biproducts---where additive conjunction and disjunction are both interpreted via this common structure---the absence of a unified product-coproduct structure in propositional logic means that disjunction cannot be handled in the same way as conjunction. This asymmetry poses significant challenges in modelling the behaviour of parallel or algebraic combinations involving disjunctive terms. In this work, we address this difficulty directly by proposing a concrete interpretation of disjunction that preserves the intuition of additive structure, while remaining compatible with the categorical limitations of the setting.

\subparagraph*{Antecedents}
Several foundational works have explored extensions of the lambda calculus to express superposition, linear combinations, and parallelism. The Lineal calculus~\cite{ArrighiDowekLMCS17} was introduced to represent quantum programs by incorporating the principle that, since data and programs are unified in the lambda calculus, the superposition of data implies the superposition of programs. This idea was formalised through a linear structure supporting algebraic combinations of terms.

An independent development is the Algebraic Lambda Calculus~\cite{Vaux2009}, a simplification of the Differential Lambda Calculus~\cite{EhrhardRegnierTCS03}, which emphasises algebraic structure rather than operational semantics. Later, Lineal and the Algebraic Lambda Calculus were shown to be closely related~\cite{AssafDiazcaroPerdrixTassonValironLMCS14}.

Several semantic models have addressed non-determinism and parallelism. In~\cite{DezanideLiguoroPipernoSIAM98}, a filter model was proposed for a concurrent lambda calculus, where terms are interpreted as sets of possible outcomes. A relational model capturing non-deterministic and parallel behaviour was introduced in~\cite{BucciarelliEhrhardManzonettoAPAL12}, and a general framework using weighted relations, allowing for deterministic, non-deterministic, and probabilistic computation, was proposed in~\cite{LairdManzonettoMcCuskerPaganiLICS13}.

More recently, the $\odot$-connective was introduced in~\cite{DiazcaroDowekTCS23a} and extended in~\cite{DiazcaroDowekMSCS24} to express quantum superpositions and measurements within a logical framework. These systems include additive structure in proof terms, such as sums and scalar products, aiming to represent quantum behaviour. A categorical interpretation for such calculi in a linear setting---using biproducts in the category of vector spaces---was developed in~\cite{DiazcaroMalherbe24}.

\subparagraph*{Our perspective}
While Moggi's monadic approach provides a general and robust framework for
handling computational effects in intuitionistic settings, our goal is to
explore whether the structural techniques developed in the linear logic
context---particularly those relying on biproducts---can be adapted to
intuitionistic propositional logic. Instead of using a monad to encapsulate
non-determinism or parallelism, we investigate how to reconstruct these effects
using categorical structures that simulate the role of biproducts, even in
their absence. In this way, we aim to propose a conceptually unified
alternative for modelling such effects across both linear and non-linear
logical frameworks.
\subparagraph*{Plan of the paper and contributions.}
In this paper, we introduce two lambda calculi for propositional logic—one modelling parallelism and the other algebraic combinations—based on minimal categorical structure inspired by biproducts, and avoiding the use of monads.
\begin{itemize}
  \item In \cref{sec:calculus}, we introduce the parallel lambda calculus $\lambda_\parallel$ for the full intuitionistic propositional logic. 
    It is mostly based on the in-left-right-+-calculus (inlr-calculus for short)~\cite{DiazcaroDowekInlr}, with a slight modification. In the inlr-calculus the parallel of $\inl(t)$ and $\inr(u)$ reduces to a new term $\mathsf{inlr}(t,u)$. In our presentation we just keep $\inl(t)\plus\inr(u)$, however, the two calculi are equivalent.
    This calculus follows the approach of~\cite{CirsteaFaureKirchnerHOSC07} for the structure operator, where the parallel operator is neither idempotent, commutative, nor associative.

      \cref{sec:syntax} gives the syntax, deduction, and reduction rules. 
      \cref{sec:correctness} restate the correctness properties of the calculus, proved in \cite{DiazcaroDowekInlr} and adapted to our setting in \appref{A}.
      \cref{sec:category} introduces the category \TheCat, which is the category whose objects are magmas and whose arrows are functions from the $\mathbf{Set}$ category. We also introduce the product $\cp$ and prove several necessary properties. 
      \cref{sec:model} provides the interpretation of $\lambda_\parallel$ within the category \TheCat, and prove its soundness and adequacy results, that is, if one term reduces to another, they are interpreted by the same arrow, and if two closed proof-terms are interpreted by the same arrows, then they are computationally equivalent.
  \item In \cref{sec:algebraiccalculus}, we introduce the algebraic lambda calculus $\alglambda$ for the full intuitionistic propositional logic. This calculus is an extension of the $\lambda_\parallel$ calculus with scalars and follows the approach of the Algebraic lambda calculus~\cite{Vaux2009}.

      \cref{sec:syntaxAlg} gives the syntax, deduction, and reduction rules.
      \cref{sec:correctnessalg} states the correctness properties of the calculus as straightforward adaptations of those from \cref{sec:correctness}.
      \cref{sec:categoryalg} introduces the concept of action magmas and define the category \TheCatAlg, whose objects are action magmas, and whose arrows are functions from the $\mathbf{Set}$ category. We also extend the product $\cp$ into an action magma. 
      \cref{sec:modelAlg} provides the interpretation of the algebraic lambda calculus within the category \TheCatAlg, and prove its soundness and adequacy.
  \item In \cref{sec:conclusion}, we present some concluding remarks.
\end{itemize}

\section{The \texorpdfstring{calculus $\lambda_\parallel$}{parallel lambda calculus}}\label{sec:calculus}
\subsection{Syntax, deduction, and reduction rules}\label{sec:syntax}

The connectives in this logic are those found in standard intuitionistic propositional logic: $\top$, $\bot$, $\Rightarrow$, $\wedge$, and $\vee$.
The syntax of the proof-terms are also standard, with the addition of the parallel construction as mentioned in the introduction.
\[
  \begin{array}{rl@{\qquad}l}
    t =~ & x \mid t \plus t \\ 
    & \mid \star \mid \elimtop(t,t) & (\top)\\
    &\mid \elimbot(t) & (\bot) \\
    & \mid \lambda \abstr{x}t\mid t~t & (\Rightarrow)\\
    & \mid \pair{t}{t} \mid \pi_1(t)\mid\pi_2(t) & (\wedge) \\
    & \mid \inl(t)\mid \inr(t) \mid \elimor(t,\abstr{x}t,\abstr{y}t) & (\vee)
  \end{array}
\]
where $x$ and $y$ range over a countable set of variables.

The terms $\star$, $\lambda x.t$, $\pair tu$, $\inl(t)$, and $\inr(t)$ are called introductions. The terms $\elimtop(t,u)$, $\elimbot(t)$, $\pi_1(t)$, $\pi_2(t)$, and $\elimor(t,\abstr{x}u,\abstr{y}v)$ are called eliminations. Variables and $t\plus u$ are neither introductions nor eliminations, except for $\inl(t)\plus\inr(u)$ which can be considered as an introduction. Free variables are defined as usual, and the substitution of $x$ by $u$ in $t$ is written $(u/x)t$, with the usual renaming to avoid variable captures. 

Our notation is inspired by logic. However, $\elimtop(t,r)$ could also be written $t;r$ to denote sequencing, $\elimbot(t)$ as $\mathsf{error}(t)$ to denote raising an exception, $\elimor(t,\abstr{x}u,\abstr{y}v)$ as $\mathsf{case}~t~\mathsf{of}~\{\inl(x) \to u; \inr(y) \to v\}$ for pattern matching, and $\pi_1(t)$ and $\pi_2(t)$ as $\mathsf{fst}(t)$ and $\mathsf{snd}(t)$ for projections. From this viewpoint, the calculus can be seen as a programming language.

The also standard deduction rules are the following: 
  $$\irule{}
    {x:A,\Gamma \vdash x:A}
    {\mbox{ax}}
    \quad
    \irule{\Gamma \vdash t:A & \Gamma \vdash u:A}
    {\Gamma \vdash t \plus u:A}
    {\mbox{par}}
    \quad
    \irule{}
    {\Gamma \vdash \star:\top}
    {\top_i}
    \quad
    \irule{\Gamma \vdash t:\top & \Gamma \vdash u:C}
    {\Gamma \vdash \elimtop(t,u):C}
    {\top_e}
  $$
  $$
    \irule{\Gamma \vdash t:\bot}
    {\Gamma \vdash \elimbot(t):C}
    {\bot_e}
    \qquad
    \irule{x:A,\Gamma \vdash t:B}
    {\Gamma \vdash \lambda \abstr{x}t:A \Rightarrow B}
    {\Rightarrow_i}
    \qquad
    \irule{\Gamma \vdash t:A\Rightarrow B & \Gamma \vdash u:A}
    {\Gamma \vdash t~u:B}
    {\Rightarrow_e}$$
  $$\irule{\Gamma \vdash t:A & \Gamma \vdash u:B}
    {\Gamma \vdash \pair{t}{u}:A \wedge B}
    {\wedge_i}
    \qquad
    \irule{\Gamma \vdash t:A \wedge B}
    {\Gamma \vdash \pi_1(t):A}
    {\wedge_{e1}}
    \qquad
    \irule{\Gamma \vdash t:A \wedge B}
    {\Gamma \vdash \pi_2(t):B}
    {\wedge_{e2}}$$
  $$\irule{\Gamma \vdash t:A}
    {\Gamma \vdash \inl(t):A \vee B}
    {\vee_{i1}}
    \qquad
    \irule{\Gamma \vdash t:B}
    {\Gamma \vdash \inr(t):A \vee B}
    {\vee_{i2}}
  $$
  $$
    \irule{\Gamma \vdash t:A \vee B & x:A,\Gamma \vdash u:C & y:B,\Gamma \vdash v:C}
    {\Gamma \vdash \elimor(t,\abstr{x}u,\abstr{y}v):C}
    {\vee_e}$$

The reduction relation is the contextual closure of the relation defined by the rules:
\parbox{0.35\textwidth}{
  \begin{align}
    \elimtop(\star, t) & \longrightarrow t \label{ruelimtop}\\
    (\lambda \abstr{x}t)~u & \longrightarrow  (u/x)t \label{rubeta}\\
    \pi_1\pair{t}{u} & \longrightarrow  t \label{ruelimand1}\\
    \pi_2\pair{t}{u} & \longrightarrow  u \label{ruelimand2}
  \end{align}
}\quad \parbox{0.62\textwidth}{
  \begin{align}
    \elimor(\inl(t),\abstr{x}v,\abstr{y}w) & \longrightarrow  (t/x)v \label{ruelimorinl}\\
    \elimor(\inr(u),\abstr{x}v,\abstr{y}w) & \longrightarrow (u/y)w \label{ruelimorinr} \\
    \elimor(\inl(t)\plus\inr(u),\abstr{x}v,\abstr{y}w) & \longrightarrow  (t/x)v \plus (u/y)w \label{ruelimorinlr1}
\end{align}}\vspace{-1\baselineskip}
\begin{align}
  {\star} \plus \star & \lra  \star \label{rusumstar} \\
  (\lambda \abstr{x}t) \plus (\lambda \abstr{x}u) & \lra  \lambda \abstr{x}(t \plus u) \label{rusumlam}\\
  \pair{t_1}{u_1} \plus \pair{t_2}{u_2}  & \lra  \pair{t_1 \plus t_2}{u_1 \plus u_2} \label{rusumpair} \\
  \inl(t_1)\plus\inl(t_2) & \lra \inl(t_1\plus t_2) \label{rusuminl}\\
  \inl(t_1)\plus(\inl(t_2)\plus\inr(u_1)) & \lra \inl(t_1\plus t_2)\plus\inr(u_1)\label{rusuminlinlr}\\
  \inr(u_1)\plus\inl(t_1) & \lra \inl(t_1)\plus\inr(u_1) \label{rusuminrinl}\\
  \inr(u_1)\plus\inr(u_2) & \lra \inr(u_1\plus u_2) \label{rusuminr}\\
  \inr(u_1)\plus(\inl(t_1)\plus\inr(u_2)) & \lra \inl(t_1)\plus\inr(u_1\plus u_2) \label{rusuminrinlr}\\
  (\inl(t_1)\plus\inr(u_1))\plus\inl(t_2) &\lra \inl(t_1\plus t_2)\plus\inr(u_1) \label{rusuminlrinl}\\
  (\inl(t_1)\plus\inr(u_1))\plus\inr(u_2) &\lra \inl(t_1)\plus\inr(u_1\plus u_2) \label{rusuminlrinr}\\
  (\inl(t_1)\plus\inr(u_1))\plus(\inl(t_2)\plus\inr(u_2)) &\lra \inl(t_1\plus t_2)\plus\inr(u_1\plus u_2) \label{rusuminlr}
\end{align}

The first group of reduction rules is standard, except for the last rule, which allows reducing $\inl$ and $\inr$ in parallel. The second group establishes the commutation of the parallel construction with all the connectives. The parallel construction commutes with introductions, as seen in the first three rules of the second group for the cases of $\top$, $\Rightarrow$, and $\wedge$. The case of $\vee$ is left to the next eight rules since multiple cases must be considered due to the fact that there are two normal forms for disjunctions, and disjunction is neither commutative nor associative in Natural Deduction.

Instead of commuting the parallel construction with the introduction of disjunction, we could have considered doing so with its elimination, using the rule $\elimor(t\plus u,x.v,y.w)\lra\elimor(t,x.u,y.w)\plus\elimor(u,x.v,y.w)$, which can also be proven valid in our model (the proof of soundness for this rule is given in \appref{C}). However, we chose to consider commutation with the introduction, as it leads to a better introduction property: a closed irreducible proof of $\top$ is $\star$, a closed irreducible proof of an implication is a lambda abstraction, a closed irreducible proof of a conjunction is a pair, and a closed irreducible proof of a disjunction is either $\inl$, $\inr$, or the parallel of both  (see \cref{thm:IP}).

Along this paper if $R$ is a relation, we write $R^*$ for its transitive-reflexive closure.

\subsection{Correctness}\label{sec:correctness}
The correctness properties of this section have been proven in a preprint for the inlr calculus~\cite{DiazcaroDowekInlr}, and the proofs are the same for the $\lambda_\parallel$ calculus.

\begin{theorem}
  [Subject reduction~{\cite[Appendix A.1]{DiazcaroDowekInlr}}]\label{thm:SR}
  If $\Gamma\vdash t:A$ and $t\lra u$, then $\Gamma\vdash u:A$.
  \qed
\end{theorem}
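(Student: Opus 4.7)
The plan is to proceed by induction on the derivation of $t \longrightarrow u$, taking $\longrightarrow$ to be the closure of the base rules of Figure~\ref{fig:odotCalculus-reduction} under one-hole contexts. The congruence cases are immediate from the inductive hypothesis once one observes that every deduction rule preserves derivability when a subterm is replaced by another of the same type and in the same context, so the heart of the proof lies in verifying the base cases.

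Two preliminaries are needed. First, a generation lemma: for each introduction-shaped term ($\star$, $\lambda\abstr{x}t$, $\pair{t}{u}$, $\inl(t)$, $\inr(t)$) the type at the root forces the types of the immediate subterms in the evident way. Second, a substitution lemma: if $\Gamma, x:A \vdash t:B$ and $\Gamma \vdash u:A$, then $\Gamma \vdash (u/x)t : B$, proved by a routine induction on the derivation of $\Gamma, x:A \vdash t:B$, with the usual care to avoid variable capture.

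With these in hand the first block of rules is handled by the standard recipe: invert the elimination at the root of the left-hand side, extract the typing of the introduction directly beneath it, and either read off the result (as for $\elimtop(\star,t)$, $\pi_1\pair{t}{u}$, $\pi_2\pair{t}{u}$) or apply the substitution lemma (as for $(\lambda\abstr{x}t)\,u$ and the two $\elimor$ redexes).

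The parallel-commutation rules of the second block are treated by inverting the \textbf{par} rule to obtain two derivations of the same type for the two operands, then inverting the introduction rule on each. For instance, if $\Gamma \vdash \pair{t}{v} \plus \pair{u}{w} : C$, inversion forces $C = A \wedge B$ with $\Gamma \vdash t,u : A$ and $\Gamma \vdash v,w : B$; reapplying \textbf{par} to $t,u$ and to $v,w$ and reassembling with $\wedge$-i yields $\Gamma \vdash \pair{t \plus u}{v \plus w} : A \wedge B$. The cases $\star \plus \star$ and $(\lambda\abstr{x}t)\plus(\lambda\abstr{x}u)$ are analogous, the latter using that (up to $\alpha$-conversion) the two abstractions bind the same variable so that \textbf{par} can be applied inside the common extended context $x:A,\Gamma$. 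The rule I expect to be the mildest obstacle is the $\elimor$ commutation, since on the right-hand side the branches $\abstr{x}v$ and $\abstr{y}w$ are duplicated and one must check that both copies inherit exactly the typings obtained from the single $\vee$-e derivation on the left; since these branches carry identical typings, reassembly through two $\vee$-e derivations followed by \textbf{par} goes through. In every parallel case, context compatibility between the two halves is trivially preserved because \textbf{par} shares $\Gamma$ between its premises, so no reconciliation of contexts is ever required.
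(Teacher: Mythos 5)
Your proposal is correct as a direct proof, but it is not what the paper does: the paper offers no proof of this theorem at all. Since $\lambda_\parallel$ is obtained from the $\odot$-calculus of~\cite{DiazcaroDowekTCS23a} by simply dropping the $\odot$ connective, the authors inherit subject reduction verbatim from Theorem~3.3 of that paper --- every typing derivation and every reduction step of $\lambda_\parallel$ is literally one of the $\odot$-calculus, and the reduct of a fragment term stays in the fragment, so the statement transfers immediately. Your route is the standard self-contained argument: induction on the contextual closure of $\longrightarrow$, a syntax-directed generation (inversion) lemma, a substitution lemma, and a case analysis covering both the $\beta$-like rules and the parallel-commutation rules; your treatment of the delicate cases (substitution for $(\lambda\abstr{x}t)\,u$ and the two $\elimor$ redexes, re-assembly via \textbf{par} after inverting the introductions, duplication of the identically typed branches in the $\elimor$-commutation) is exactly right, and the shared context $\Gamma$ in \textbf{par} indeed removes any context-reconciliation issue. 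What each approach buys: the citation keeps the paper short and leans on an already-verified, more general result (which also covers the $\odot$ cases you never see), at the cost of an implicit appeal to the fact that the fragment is closed under typing and reduction; your direct proof makes the result self-contained and makes explicit where the new parallel rules interact with the type system, at the cost of redoing routine work. Either is acceptable; just be aware that if you invoke the fragment argument instead, the only thing to check is the (trivial) closure of the fragment under $\longrightarrow$ and the coincidence of the two typing relations on fragment terms.
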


\begin{theorem}
  [Strong normalisation~{\cite[Appendix A.4]{DiazcaroDowekInlr}}]
  \label{thm:SN}
  If $\Gamma\vdash t:A$ then $t$ is strongly normalising.
  \qed
\end{theorem}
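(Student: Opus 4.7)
The plan is to exploit the fact that $\lambda_\parallel$ is presented as a strict fragment of the $\odot$-calculus of~\cite{DiazcaroDowekTCS23a}: its proof-terms are a sub-syntax obtained by omitting the $\odot$ connective and its associated constructors and destructors, its typing rules (Figure~\ref{fig:odotCalculus-typing}) are those of the larger system restricted to this sub-syntax, and its reduction rules (Figure~\ref{fig:odotCalculus-reduction}) are a subset of those of the $\odot$-calculus. Consequently, strong normalisation transfers immediately from the ambient calculus, and no independent reducibility argument is needed.

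To make the inheritance rigorous, I would define the evident syntactic inclusion $\iota \colon \lambda_\parallel \hookrightarrow \odot\text{-calculus}$ and verify two compatibility lemmas by straightforward induction. First, if $\Gamma \vdash t : A$ in $\lambda_\parallel$, then $\Gamma \vdash \iota(t) : A$ in the $\odot$-calculus, since every typing rule of Figure~\ref{fig:odotCalculus-typing} is also a typing rule of the larger system. Second, if $t \longrightarrow u$ in $\lambda_\parallel$, then $\iota(t) \longrightarrow \iota(u)$ in the $\odot$-calculus, by inspection of the ten reduction rules in Figure~\ref{fig:odotCalculus-reduction}, each of which appears verbatim in the $\odot$-calculus and does not mention the excluded connective. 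Any infinite reduction sequence from a typable $t$ in $\lambda_\parallel$ would then lift, via $\iota$, to an infinite reduction sequence from the typable $\iota(t)$ in the $\odot$-calculus, contradicting the cited Corollary~3.31.

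The substantive technical content therefore lives in the cited paper rather than here. Were one to prove strong normalisation directly for $\lambda_\parallel$, the natural route would be the Girard--Tait method of reducibility candidates, assigning to each formula $A$ a set $\sem{A}$ of strongly normalising terms closed under $\plus$ and under the relevant elimination contexts, and then showing by induction on typing that $\Gamma \vdash t : A$ implies that every well-typed substitution instance of $t$ lies in $\sem{A}$. The main obstacle in that direct argument would be the commutation rule $\elimor(t\plus u,\abstr{x}v,\abstr{y}w) \longrightarrow \elimor(t,\abstr{x}v,\abstr{y}w)\plus\elimor(u,\abstr{x}v,\abstr{y}w)$, which duplicates the branches $v$ and $w$ and therefore increases the raw size of the term. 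Naive size measures fail on this rule, so one has to combine the reducibility predicate with a lexicographic measure that decreases on the principal premise $t\plus u$ before accounting for the duplicated branches; this is precisely the kind of care taken in~\cite{DiazcaroDowekTCS23a}, which is why importing their result is far preferable to reproving it from scratch.
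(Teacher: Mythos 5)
Your proposal is correct and matches the paper's approach: the paper also obtains strong normalisation by observing that $\lambda_\parallel$ is a fragment of the $\odot$-calculus (its typing and reduction rules being a subset of the ambient system's) and inheriting the result from the cited Corollary~3.31, without any independent reducibility argument. Your explicit inclusion map and the two compatibility lemmas merely spell out what the paper leaves implicit.
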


\begin{theorem}[Confluence]
  \label{thm:conf}
  The $\lambda_\parallel$ calculus is confluent.
\end{theorem}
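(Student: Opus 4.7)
The plan is to apply Newman's lemma. Since $\lambda_\parallel$ is strongly normalising by Theorem~\ref{thm:SN}, confluence reduces to local confluence: whenever $t\longrightarrow u_1$ and $t\longrightarrow u_2$, there exists a common reduct $w$ with $u_1\longrightarrow^\ast w$ and $u_2\longrightarrow^\ast w$. I would carry this out by the standard case analysis on the relative positions of the two contracted redexes in $t$.

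If the two positions are disjoint, the diagram closes in one step on each side by firing the opposite redex. If the two positions coincide at the same root, one must inspect critical pairs; however, every left-hand side in Figure~\ref{fig:odotCalculus-reduction} has a distinct head pattern---$\elimtop(\star,\_)$, $(\lambda x.\_)\_$, $\pi_i\pair{\_}{\_}$, $\elimor(\inl(\_),\ldots)$, $\elimor(\inr(\_),\ldots)$ for the $\beta$-style rules and $\star\plus\star$, $(\lambda x.\_)\plus(\lambda x.\_)$, $\pair{\_}{\_}\plus\pair{\_}{\_}$, $\elimor(\_\plus\_,\ldots)$ for the parallel-commutation rules---so no term is simultaneously the root of two distinct redexes and there are no root-level critical pairs to resolve.

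The substantive case is when one redex lies strictly inside the pattern of the other. If the enclosing rule does not duplicate the subterm containing the inner redex (e.g.\ $\pi_1\pair{t}{u}$ with an inner reduction discarded or preserved by the projection), joinability follows by firing the residual redex on each branch. If the enclosing rule duplicates it---the $\beta$-rule $(\lambda x.t)u\longrightarrow(u/x)t$ when $x$ occurs several times in $t$ and the inner redex lies in $u$, together with all four parallel-commutation rules, which copy the pair components, lambda bodies, or eliminator branches---one closes the diagram by contracting the residual of the inner redex in every produced copy. A short substitution lemma, asserting that $u\longrightarrow u'$ implies $(u/x)t\longrightarrow^\ast (u'/x)t$ and that $t\longrightarrow t'$ implies $(u/x)t\longrightarrow (u/x)t'$, handles the $\beta$ sub-cases.

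The main obstacle I anticipate is the bookkeeping for the parallel-commutation rules, which simultaneously push $\plus$ inside a constructor while duplicating the context surrounding the inner redex; for instance, in $\elimor(t\plus u,\abstr{x}v,\abstr{y}w)$ a redex inside $v$ must be tracked in two independent copies after commutation, and these copies may interact with further nested par-rules. The key observation that keeps this manageable is that each parallel-commutation rule preserves the syntactic structure of its arguments verbatim, so every inner redex has a well-defined residual in each copy and can be contracted independently; the closing reduction sequences are then straightforward to spell out, if a little lengthy.
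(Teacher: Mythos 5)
Your route is genuinely different from the paper's: the paper does not use Newman's lemma at all, but observes that the system is left-linear and has no critical pairs and then invokes Nipkow's confluence theorem for orthogonal higher-order rewrite systems (\cite[Theorem 6.8]{Nipkow}), which needs neither strong normalisation nor any explicit case analysis. Your Newman-based argument could in principle work, but note that it only ever proves confluence of \emph{typed} terms, since Theorem~\ref{thm:SN} gives strong normalisation only for $\Gamma\vdash t:A$ (you also tacitly need Theorem~\ref{thm:SR} to stay inside the typed fragment along the reduction sequences).

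More importantly, your overlap analysis has a genuine gap. You only rule out root--root overlaps (``no term is simultaneously the root of two distinct redexes''), and in the nested case you argue exclusively with residuals, which presupposes that the inner redex sits inside a \emph{variable} position of the outer left-hand side. But the outer rule $\elimor(t\plus u,\abstr{x}v,\abstr{y}w)\longrightarrow\elimor(t,\abstr{x}v,\abstr{y}w)\plus\elimor(u,\abstr{x}v,\abstr{y}w)$ has the non-variable subterm $t\plus u$ in its pattern, and this subterm unifies with the left-hand sides $\star\plus\star$, $(\lambda\abstr{x}t)\plus(\lambda\abstr{x}u)$ and $\pair{t}{v}\plus\pair{u}{w}$. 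For such peaks your recipe fails: after the outer step the inner redex has no residual. Concretely, from $\elimor(\star\plus\star,\abstr{x}v,\abstr{y}w)$ one step yields $\elimor(\star,\abstr{x}v,\abstr{y}w)\plus\elimor(\star,\abstr{x}v,\abstr{y}w)$ and the other yields $\elimor(\star,\abstr{x}v,\abstr{y}w)$, two distinct normal forms; so local confluence as you state it is simply false on raw terms. These configurations are excluded only because they are ill-typed (the scrutinee of $\elimor$ must have type $B\vee C$, whereas the three offending sums have type $\top$, an implication, or a conjunction), so your proof needs an explicit typing argument discharging exactly these overlaps; without it the ``substantive case'' does not close. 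This is also the point where the paper's appeal to the absence of critical pairs is doing real work, in the typed/sorted reading of the rewrite system.
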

\begin{proof}
  The relation $\lra$ from \cref{sec:syntax} is left linear and has no critical pairs. And, as proved in~\cite[Theorem 6.8]{Nipkow}, higher-order left linear systems without critical pairs are confluent.
\end{proof}

\begin{theorem}[Introduction property~{\cite[Appendix A.3]{DiazcaroDowekInlr}}]
  \label{thm:IP}
  Let $\vdash t:A$ be irreducible. Then: 
    If $A=\top$, $t=\star$.
    $A$ is not $\bot$.
    If $A=B \Rightarrow C$, $t=\lambda \abstr{x}u$.
    If $A=B \wedge C$, $t=\pair{u}{v}$.
    If $A=B \vee C$, $t=\inl(u)$, $t=\inr(v)$, or $t=\inl(u)\plus\inr(v)$.
      \qed
\end{theorem}

\subsection{The category \texorpdfstring{\TheCat}{Mag-Set}}\label{sec:category}
\begin{definition}[The category \TheCat]
  The category \TheCat is determined by the following data:
    Objects are magmas $(A,\sumhat A)$.
    Arrows are maps in $\mathbf{Set}$.
\end{definition}

  \begin{remark}
    Note that $\TheCat$ is not an algebraic category, in the sense that the natural arrows for the algebraic structure (such as homomorphisms) are not considered. Instead, the objects are plain magmas, without any algebraic property, and the arrows are simply functions in $\mathbf{Set}$, with composition and identities inherited from $\mathbf{Set}$. The reasons for this choice are discussed in detail in \cref{sec:conclusion}.
  \end{remark}

The next definitions give some objects and arrows used later.
\begin{definition}
  \label{def:objects}
  Let $A$ and $B$ be objects in \TheCat, we define the following objects.
  \begin{itemize}
    \item $\emptyset$ with the empty map as its operation.
    \item $\{\star\}$ with the operation defined as $\star\sumhat{\{\star\}}\star=\star$.
    \item $A\times B$ with the operation defined as
          $(a_1,b_1)\sumhat{A\times B}(a_2,b_2) = (a_1\sumhat A a_2,b_1\sumhat B b_2)$.
    \item $\home AB$ with the operation defined as
          $(f\sumhat{\home AB}g)(a) = f(a)\sumhat{B}g(a)$.
  \end{itemize}
\end{definition}

\begin{theorem}[CCC]
  \label{thm:closure}
  The category \TheCat is Cartesian closed.
\end{theorem}
\begin{proof}
  We need to prove that there is an adjunction $A\times\_\dashv\home{A}{\_}$ in \TheCat.
  For all $A,B$ objects in \TheCat, we have $Hom_{\TheCat}(A,B) = Hom_{\mathbf{Set}}(A,B)$.
  Since, $A\times B$ and $\home BC$ are objects in \TheCat,
  by the adjunction in $\mathbf{Set}$, we have
  \(
    Hom_{\TheCat}(A\times B,C)
    =
    Hom_{\mathbf{Set}}(A\times B,C) 
    \simeq
    Hom_{\mathbf{Set}}(A,\home BC)  
    =
    Hom_{\TheCat}(A,\home BC)
  \).
\end{proof}

As usual, the disjoint union of two sets $A$ and $B$ is denoted by $A\uplus B$ and is defined as $A\uplus B = (A\times\{0\})\cup (B\times\{1\})$ where $0$ and $1$ are distinct elements that do not belong to either to $A$ or $B$. If we want $A\uplus B$ to be an object in \TheCat, we must define its magma operation. For example, we can choose $(a_1,0)\sumhat{\uplus}(a_2,0) = (a_1\sumhat A a_2,0)$, and similarly $(b_1,1)\sumhat{\uplus}(b_2,1) = (b_1\sumhat B b_2,1)$. However, it is not clear how to define $(a,0)\sumhat{\uplus}(b,1)$. We could choose to define it as either $(a,0)$ or $(b,1)$, but this would break the symmetry. This is the same problem encountered in $\inl(t)\plus\inr(u)$. Indeed, in the sequent $\vdash\inl(t)\plus\inr(u):A\vee B$, we are given both a proof of $A$ and a proof of $B$. However, to produce a proof of $A\vee B$, we would typically need to discard one of them, whereas with the parallel operator, we can retain both. This is why we need to define a new operation, which we call pokeball product and denote by $\cp$.

\begin{definition}[The pokeball product]
  We define $A\cp B = (A\uplus B)\cup (A\times B)$.
\end{definition}
\begin{definition}
  [The operation {\normalfont $\sumcp$}]
  \label{def:sumcp}
  Let $(A,\sumhat A),(B,\sumhat B)\in\Obj(\TheCat)$.
  We define $\sumhat{A\subcp B}$ ($\sumcp$ for short) as 
  $(A\cp B)\times (A\cp B)\xlra{\sumcp} A\cp B$ given by 
  \[
    (c_1,c_2)\mapsto
    \left\{
    \begin{array}{ll}
      (a_1\sumhat A a_2,0)                & \mbox{if } c_1 = (a_1,0)\mbox{ and }c_2 =(a_2,0)     \\
      (a,b)                               & \mbox{if } c_1 = (a,0)\mbox{ and }c_2 =(b,1)         \\
      (a_1\sumhat A a_2,b)                & \mbox{if } c_1 = (a_1,0)\mbox{ and }c_2 =(a_2,b)     \\
      (a,b)                               & \mbox{if } c_1 = (b,1)\mbox{ and }c_2 =(a,0)         \\
      (b_1\sumhat B b_2,1)                & \mbox{if } c_1 = (b_1,1)\mbox{ and }c_2 =(b_2,1)     \\
      (a,b_1\sumhat B b_2)                & \mbox{if } c_1 = (b_1,1)\mbox{ and }c_2 =(a,b_2)     \\
      (a_1\sumhat A a_2,b)                & \mbox{if } c_1 = (a_1,b)\mbox{ and }c_2 =(a_2,0)     \\
      (a,b_1\sumhat B b_2)                & \mbox{if } c_1 = (a,b_1)\mbox{ and }c_2 =(b_2,1)     \\
      (a_1\sumhat A a_2,b_1\sumhat B b_2) & \mbox{if } c_1 = (a_1,b_1)\mbox{ and }c_2 =(a_2,b_2)
    \end{array}
    \right.
  \]
\end{definition}
\begin{definition}
  Let $A\xlra{f}A'$ and $B\xlra{g}B'$.
  The arrow $A\cp B\xlra{f\cplabel g} A'\cp B'$ is defined as follows.
  \[
    c\mapsto
    \left\{
    \begin{array}{ll}
      (f(a),0)    & \mbox{if } c = (a,0) \\
      (g(b),1)    & \mbox{if } c = (b,1) \\
      (f(a),g(b)) & \mbox{if } c = (a,b)
    \end{array}
    \right.
  \]
\end{definition}

The pokeball product entails a bifunctorial functor.
\begin{lemma}
  [Bifunctoriality]\label{lem:cpbifunctorial}
  Let $A\xlra{f}A'$, $A\xlra{f'}A'$, $B\xlra{g}B'$, and $B\xlra{g'}B'$.
  Then $A\cp\_$ is a functor, and it is bifunctorial, that is
  \(
    (f\cp g)\circ(f'\cp g') = (f\circ f')\cp(g\circ g')
  \).
\end{lemma}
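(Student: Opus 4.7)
The proof should reduce to a straightforward case analysis on the three kinds of elements of $A\cp B = (A\uplus B)\cup(A\times B)$: elements of the form $(a,0)$, elements of the form $(b,1)$, and elements of the form $(a,b)\in A\times B$. Since arrows in \TheCat are just set functions and the definition of $f\cp g$ matches on exactly these three shapes, functoriality in each argument and bifunctoriality all amount to pointwise equalities checked shape-by-shape.

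First I would dispatch identity preservation: for $f=\Id_A$ and $g=\Id_B$, the three clauses in the definition of $f\cp g$ collapse to $(a,0)\mapsto (a,0)$, $(b,1)\mapsto(b,1)$, and $(a,b)\mapsto(a,b)$, so $\Id_A\cp\Id_B=\Id_{A\cp B}$. This together with the bifunctoriality equation (for the case where one of $g,g'$ is an identity) yields functoriality of $A\cp\_$ (and dually of $\_\cp B$).

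The heart of the argument is the bifunctoriality equation. The key observation is that $f\cp g$ preserves the ``shape'' of its input: it sends elements tagged by $0$ to elements tagged by $0$, elements tagged by $1$ to elements tagged by $1$, and elements of $A\times B$ to elements of $A'\times B'$. Consequently, applying $(f\cp g)\circ(f'\cp g')$ to $c\in A\cp B$ never mixes cases. Case $c=(a,0)$ yields $(f\cp g)(f'(a),0)=(f(f'(a)),0)=((f\circ f')(a),0)$, matching $((f\circ f')\cp(g\circ g'))(c)$; case $c=(b,1)$ is symmetric; case $c=(a,b)$ gives $(f\cp g)(f'(a),g'(b))=(f(f'(a)),g(g'(b)))$, which again matches $((f\circ f')\cp(g\circ g'))(a,b)$.

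No real obstacle arises: the only subtlety is keeping the case analysis honest, particularly noting that the stability of shapes under $f\cp g$ is exactly what makes the composition well-behaved without requiring $f,g$ to be homomorphisms of magmas. In particular, the magma operations $\sumcp$ play no role here, since bifunctoriality is purely about the underlying set-level action of $\cp$ on arrows.
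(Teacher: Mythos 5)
Your proof is correct, and it is organized differently from the paper's. The paper first shows that the partial functors $A\cp\_$ and $\_\cp B$ are functors (identity and composition in one argument at a time, checked by evaluation on the three shapes $(a,0)$, $(b,1)$, $(a,b)$), and then invokes MacLane's criterion (Proposition 3.1 of the cited reference) that two families of partial functors assemble into a bifunctor provided the interchange square $(f\cp\Id)\circ(\Id\cp g)=(\Id\cp g)\circ(f\cp\Id)$ commutes, which it again checks by evaluation. You instead verify the full equation $(f\cp g)\circ(f'\cp g')=(f\circ f')\cp(g\circ g')$ directly, using the observation that $f\cp g$ preserves the shape of its argument so the composite never mixes cases, and then recover functoriality of $A\cp\_$ (and $\_\cp B$) as the special case where the arrows in the other argument are identities, together with the separately checked $\Id_A\cp\Id_B=\Id_{A\cp B}$. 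The computational content is the same three-way case analysis in both arguments, so nothing is lost; your route is self-contained (no appeal to the MacLane proposition) and delivers the stated equation in one pass, while the paper's route only has to check diagrams involving arrows that are the identity in one coordinate, at the price of citing the standard categorical fact. Your closing remark that the magma structure $\sumcp$ is irrelevant here is also accurate, since the action of $\cp$ on arrows is defined purely at the level of underlying sets.
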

\begin{proof}
  ~
  \begin{itemize}
    \item We prove that $A\cp\_$ is a functor. That is, we need to prove the commutation of the following diagrams.
      \begin{center}
	\begin{tikzcd}
	  A\cp B\ar[rd,"\Id\cplabel(h\circ g)"',sloped]\ar[r,"\Id\cplabel g"] & A\cp B'\ar[d,"\Id\cplabel h"]\\
	  & A\cp B'' 
	\end{tikzcd}
	\qquad\qquad
	\begin{tikzcd}
	  A\cp B\ar[r,"\Id_A\cplabel\Id_B"]\ar[r,"\Id_{A\subcp B}"'] & A\cp B
	\end{tikzcd}
      \end{center}
      We prove the diagrams by giving their evaluations. 
      \begin{center}
	\begin{tikzcd}
	  (a,0)\ar[mapsto,rd,"\Id\cplabel(h\circ g)"',sloped]\ar[mapsto,r,"\Id\cplabel g"] & (a,0)\ar[mapsto,d,"\Id\cplabel h"]\\
	  & (a,0)
	\end{tikzcd}
	\hfill
	\begin{tikzcd}
	  (b,1)\ar[mapsto,rd,"\Id\cplabel(h\circ g)"',sloped]\ar[mapsto,r,"\Id\cplabel g"] & (g(b),1)\ar[mapsto,d,"\Id\cplabel h"] \\
	  & (h(g(b)),1) 
	\end{tikzcd}
	\hfill
	\begin{tikzcd}
	  (a,b)\ar[mapsto,rd,"\Id\cplabel(h\circ g)"',sloped]\ar[mapsto,r,"\Id\cplabel g"] & (a,g(b))\ar[mapsto,d,"\Id\cplabel h"] \\
	  & (a,h(g(b))) 
	\end{tikzcd}
      \end{center}
      \begin{center}
	\begin{tikzcd}
	  (a,0)\ar[mapsto,r,"\Id_A\cplabel\Id_B"]\ar[mapsto,r,"\Id_{A\subcp B}"'] & (a,0)
	\end{tikzcd}
	\hfill
	\begin{tikzcd}
	  (b,1)\ar[mapsto,r,"\Id_A\cplabel\Id_B"]\ar[mapsto,r,"\Id_{A\subcp B}"'] & (b,1)
	\end{tikzcd}
	\hfill
	\begin{tikzcd}
	  (a,b)\ar[mapsto,r,"\Id_A\cplabel\Id_B"]\ar[mapsto,r,"\Id_{A\subcp B}"'] & (a,b)
	\end{tikzcd}
      \end{center}

      The proof of $\_\cp B$ is also analogous.
    \item To prove that it is bifunctorial it suffices to prove that the following diagram commutes~\cite[Proposition 3.1]{MacLane}.
      \begin{center}
	\begin{tikzcd}
	  A\cp B\ar[r,"\Id\cplabel g"]\ar[d,"f\cplabel \Id"'] & A\cp B'\ar[d,"f\cplabel  \Id"]\\
	  A'\cp B\ar[r,"\Id\cplabel g"'] & A'\cp B'
	\end{tikzcd}
      \end{center}

      We prove this diagram by giving its possible evaluations. 
      \[
	\begin{tikzcd}
	  (a,0)\ar[mapsto,r,"\Id\cplabel g"]\ar[mapsto,d,"f\cplabel \Id"'] & (a,0)\ar[mapsto,d,"f\cplabel  \Id"]\\
	  (f(a),0)\ar[mapsto,r,"\Id\cplabel g"'] & (f(a),0)
	\end{tikzcd}\qquad
	\begin{tikzcd}
	  (b,1)\ar[mapsto,r,"\Id\cplabel g"]\ar[mapsto,d,"f\cplabel \Id"'] & (g(b),1)\ar[mapsto,d,"f\cplabel  \Id"]\\
	  (b,1)\ar[mapsto,r,"\Id\cplabel g"'] & (g(b),1)
	\end{tikzcd}\qquad
	\begin{tikzcd}
	  (a,b)\ar[mapsto,r,"\Id\cplabel g"]\ar[mapsto,d,"f\cplabel \Id"'] & (a,g(b))\ar[mapsto,d,"f\cplabel  \Id"]\\
	  (f(a),b)\ar[mapsto,r,"\Id\cplabel g"'] & (f(a),g(b))
	\end{tikzcd}
	\tag*{\qedhere}
      \]
  \end{itemize}
\end{proof}

While the pokeball product will not be a coproduct in \TheCat (see Remark~\ref{rmk:coprodOnly}), we can define some maps that will allow us to interpret the disjunction with the pokeball product in a manner similar to a coproduct. First, we note that the usual injections \(i_1\) and \(i_2\) are magma homomorphisms.

\begin{lemma}\label{lem:injhomomorphism}
  The maps $A\xlra{i_1} A\cp B$ and $B\xlra{i_2} A\cp B$ are homomorphisms.
\end{lemma}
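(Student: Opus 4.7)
The plan is to prove this by a direct verification, since the maps $i_1$ and $i_2$ are the usual set-theoretic injections into the disjoint union component of $A \cp B$, namely $i_1(a) = (a,0)$ and $i_2(b) = (b,1)$, and magma homomorphism is a single algebraic condition to check.

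To show $i_1 : (A,\sumhat A) \to (A \cp B, \sumcp)$ is a homomorphism, I need to verify that for all $a_1,a_2 \in A$,
\[
  i_1(a_1 \sumhat A a_2) \;=\; i_1(a_1) \sumcp i_1(a_2).
\]
The left-hand side unfolds to $(a_1 \sumhat A a_2,0)$. For the right-hand side, both arguments $i_1(a_1) = (a_1,0)$ and $i_1(a_2) = (a_2,0)$ fall into the first clause of the case analysis defining $\sumcp$ (the one for $c_1 = (a_1,0)$ and $c_2 = (a_2,0)$), producing $(a_1 \sumhat A a_2, 0)$, as required. The argument for $i_2$ is symmetric: $i_2(b_1 \sumhat B b_2) = (b_1 \sumhat B b_2, 1)$, and applying $\sumcp$ to $(b_1,1)$ and $(b_2,1)$ matches the fifth clause in the definition, which yields the same element.

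There is no real obstacle here; the only thing to observe is that the definition of $\sumcp$ has been carefully set up so that its ``pure-left'' and ``pure-right'' clauses reproduce $\sumhat A$ and $\sumhat B$ respectively. In particular, one does not need to invoke any of the mixed clauses (those handling elements of the Cartesian component $A \times B$), since the images of $i_1$ and $i_2$ lie entirely within the disjoint-union component $A \uplus B$ and are closed under $\sumcp$ in their respective coordinates.
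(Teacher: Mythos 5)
Your verification is correct and matches the paper's own proof exactly: both check $i_1(a_1)\sumcp i_1(a_2)=(a_1\sumhat A a_2,0)=i_1(a_1\sumhat A a_2)$ via the first clause of the definition of $\sumcp$, and the symmetric computation for $i_2$ via the pure-right clause. Nothing further is needed.
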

\begin{proof}
  ~
  \begin{itemize}
    \item
          $
            i_1(a_1)\sumcp i_1(a_2)  = (a_1,0)\sumcp(a_2,0)
            = (a_1\sumhat A a_2,0)
            = i_1(a_1\sumhat A a_2)
          $
    \item
          $
            i_2(b_1)\sumcp i_2(b_2)  = (b_1,1)\sumcp(b_2,1)
            = (b_1\sumhat B b_2,1)
            = i_2(b_1\sumhat B b_2)
          $
          \qedhere
  \end{itemize}
\end{proof}

We can define  the inclusion $A\times B\xlra{\inclusion} A\cp B$ by $(a,b)\mapsto (a,b)$.
An interesting property is that after two injections and the operation $\sumcp$, we can recover the original element, as stated by the following lemma.
\begin{lemma}
  \label{lem:inclusion}
  $\sumcp\circ(i_1\times i_2) = \inclusion$.
\end{lemma}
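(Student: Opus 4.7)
The plan is to verify the equality of these two maps $A\times B \to A\cp B$ pointwise; this is a direct unfolding of definitions with no real combinatorial obstacle, but the bookkeeping between the $A\uplus B$ and $A\times B$ components of $A\cp B$ needs to be tracked carefully.

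Pick an arbitrary $(a,b)\in A\times B$. Applying $i_1\times i_2$ yields $((a,0),(b,1))$, where by $(a,0)$ I mean the image of $a$ under the canonical injection into the $A\uplus B$ summand of $A\cp B$, and similarly for $(b,1)$. Then I apply $\sumcp$ to the pair $((a,0),(b,1))$. Consulting the nine-case definition of $\sumcp$, this input matches exactly the second clause, namely the clause for $c_1=(a,0)$ and $c_2=(b,1)$, whose output is the element $(a,b)$ now living in the $A\times B$ summand of $A\cp B$. Since $\inclusion(a,b)=(a,b)$ by definition of the inclusion, the two composites agree at this point.

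Because $(a,b)$ was arbitrary, the two functions agree on all of $A\times B$, which gives the claimed equality. The only subtlety worth flagging is the potential notational confusion between a tagged pair like $(a,0)\in A\uplus B\subseteq A\cp B$ and an untagged pair $(a,b)\in A\times B\subseteq A\cp B$; but this is exactly the distinction that the second clause of $\sumcp$ is designed to bridge, absorbing two left/right-injected summands into a single product pair, which is precisely what $\inclusion$ produces directly.
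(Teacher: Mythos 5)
Your proof is correct and follows the same route as the paper: evaluate both composites at an arbitrary $(a,b)\in A\times B$, use the clause of $\sumcp$ for $(a,0)\sumcp(b,1)=(a,b)$, and compare with $\inclusion(a,b)=(a,b)$. The paper's proof is just a one-line version of this same pointwise computation.
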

\begin{proof}
  Let $a\in A$ and $b\in B$. Then 
  $\sumcp\circ(i_1\times i_2)(a,b) = (a,0)\sumcp(b,1) = (a,b)$.
\end{proof}

\begin{definition}\label{def:coproduct}
  Let $A\xlra f C$ and $B\xlra g C$ be arrows in \TheCat.
  We define the arrow $A\cp B\xlra{\coproducto{f}{g}}C$ as follows.
  \[
    c\mapsto
    \left\{
    \begin{array}{ll}
      f(a)               & \mbox{if } c = (a,0) \\
      g(b)               & \mbox{if } c = (b,1) \\
      f(a)\sumhat C g(b) & \mbox{if } c = (a,b)
    \end{array}
    \right.
  \]
\end{definition}

\begin{lemma}[Weak coproduct]
  \label{lem:coprod}
  Then map $\coproducto{f}{g}$ makes the following diagram commute
  \[
    \begin{tikzcd}[column sep=2cm]
      A\ar[dr,"f"',sloped]\ar[r,"i_1"] & A\cp B\ar[d,"\coproducto{f}{g}",pos=0.4] & B\ar[dl,"g"',sloped]\ar[l,"i_2"']\\
      & C
    \end{tikzcd}
  \]
\end{lemma}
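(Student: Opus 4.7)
The plan is a direct computation by case analysis on the definition of $\coproducto{f}{g}$, so this should be essentially unfolding of definitions.

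First I would recall the injections: $A\xlra{i_1} A\cp B$ sends $a\mapsto (a,0)$ and $B\xlra{i_2} A\cp B$ sends $b\mapsto (b,1)$ (these are the standard disjoint-union injections, restricted to the $A\uplus B$ component of $A\cp B = (A\uplus B)\cup(A\times B)$; their codomain is extended along the inclusion $A\uplus B\hookrightarrow A\cp B$). Lemma~\ref{lem:injhomomorphism} already tells us these are well-defined arrows of \TheCat.

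Next, I would verify the two triangles separately. For the left triangle, pick $a\in A$; then $(\coproducto{f}{g}\circ i_1)(a) = \coproducto{f}{g}(a,0)$, and by the first clause in Definition~\ref{def:coproduct} this equals $f(a)$. For the right triangle, pick $b\in B$; then $(\coproducto{f}{g}\circ i_2)(b) = \coproducto{f}{g}(b,1)$, which by the second clause equals $g(b)$. Hence $\coproducto{f}{g}\circ i_1 = f$ and $\coproducto{f}{g}\circ i_2 = g$, which is exactly the commutativity of the diagram.

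There is no real obstacle here: the three clauses in the definition of $\coproducto{f}{g}$ were tailored so that the two disjoint-union cases reproduce $f$ and $g$ respectively, while the third clause (on the $A\times B$ part) is what gives the arrow content beyond the bare coproduct data and is not tested by the diagram. In particular, no magma structure or properties of $\sumhat{C}$ are used for this lemma; those only appear in the third clause and will only become relevant when one wants to compare $\coproducto{f}{g}$ with $\inclusion$ via Lemma~\ref{lem:inclusion}, or to discuss uniqueness (which, as the paper already indicates, does not hold in \TheCat since arrows need not be homomorphisms).
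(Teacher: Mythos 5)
Your proof is correct and matches the paper's argument exactly: evaluate $\coproducto{f}{g}\circ i_1$ on $a\in A$ and $\coproducto{f}{g}\circ i_2$ on $b\in B$, and the first two clauses of Definition~\ref{def:coproduct} give $f(a)$ and $g(b)$ respectively. The extra remarks about the third clause and uniqueness are accurate but not needed for the lemma.
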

\begin{proof}
  ~
  \begin{itemize}
    \item Left triangle: $\coproducto{f}{g}(i_1(a)) = \coproducto{f}{g}(a,0) = f(a)$.
    \item Right triangle: $\coproducto{f}{g}(i_2(b)) = \coproducto{f}{g}(b,1) = g(b)$.
      \qedhere
  \end{itemize}
\end{proof}

\begin{remark}[Not a coproduct]
  \label{rmk:coprodOnly}
  \cref{lem:coprod} does not imply that $A\cp B$ is a coproduct in \TheCat, since the map $\coproducto{f}{g}$ is clearly not unique.

  However, in the category $\mathbf{Mag}$ of magmas, whose arrows are homomorphism (which is not Cartesian closed, and so it is not a good candidate for us), the map $\coproducto fg$ is indeed a coproduct, since it is easy to prove that it is an homomorphism, and its unicity can be proven using the homomorphism property as follows.

  Let $h:A\cp B\to C\in\Arr(\mathbf{Mag})$ such that
  \(
    h(i_1(a))  = h(a,0)  = f(a) 
  \) and
  \(
    h(i_2(b))  = h(b,1)  = g(b)
  \).
  Then,
  \(
    h(a,b) = h((a,0) \sumcp(b,1)) = h(a,0)\sumhat C h(b,1) = f(a)\sumhat C g(b)
  \).
  Therefore, if arrows are homomorphisms, $h = \coproducto{f}{g}$.
\end{remark}

\begin{definition}
  \label{def:arrows}
  Let $A$ and $B$ be objects in \TheCat, we define the following arrows.
  \begin{itemize}
    \item $A\xlra{!}\{\star\}$ defined by $a\mapsto \star$.
    \item $A\xlra\Delta A\times A$ defined by $a\mapsto (a,a)$.
    \item $(A\cp B)\times C\xlra d (A\times C)\cp (B\times C)$ defined by
      \[
	(e,c)\mapsto\left\{
	\begin{array}{ll}
	  ((a,c),0)  & \text{if } e = (a,0)\\
	  ((b,c),1)  & \text{if } e = (b,1)\\
	  ((a,c),(b,c)) & \text{if } e = (a,b)
	\end{array}
      \right.
      \]
    \item $A\times A\xlra{\symbolOpSG}  A$ defined by $(a_1,a_2)\mapsto a_1\sumhat Aa_2$.
    \item $\delta = (\Id\times\sigma\times\Id)\circ(\Id\times\Delta)$. That is, $A\times B\times C\xlra\delta A\times C\times B\times C$.
  \end{itemize}
\end{definition}

\subsection{Model of the \texorpdfstring{$\lambda_\parallel$ calculus}{parallel lambda calculus}}\label{sec:model}
\subsubsection{Interpretation}\label{sec:interpretation}
The interpretation of the propositions and contexts of the proof system is the following: 
  \[
    \begin{array}{rclr@{\qquad}|@{\qquad}lrcl}
      \multicolumn{3}{c}{\text{Interpretation of propositions}} &  &  & \multicolumn{3}{c}{\text{Interpretation of contexts}} \\
      \sem\top             & = &  \{\star\}            &&& \sem\emptyset    & = & \{\star\}                \\
      \sem\bot             & = & \emptyset             &&& \sem{x:A,\Gamma} & = & \sem A\times\sem{\Gamma} \\
      \sem{A\Rightarrow B} & = & \home{\sem A}{\sem B} &&& \\
      \sem{A\wedge B}      & = & \sem A\times\sem B   &&& \\
      \sem{A\vee B}        & = & \sem A\cp\sem B &&&
    \end{array}
  \]
  The interpretation of the deduction rules is the following 
  We refer to each deduction rule by its last sequent, since the system is syntax directed.
  \begin{align*}
    \sem{x:A,\Gamma \vdash x:A}
    & =\sem A\times\sem\Gamma\xlra{\pi_1}\sem A
    \\
    \sem{\Gamma \vdash t \plus u:A}
    & =\sem\Gamma\xlra{\Delta}\sem\Gamma\times\sem\Gamma\xlra{t\times u} \sem A\times\sem A\xlra{\sumhat{\sem A}}\sem A
    \\
    \sem{\Gamma \vdash \star:\top}
    & =\sem\Gamma\xlra{!}\{\star\}=\sem\top
    \\
    \sem{\Gamma \vdash \elimtop(t,u):C}
    & = \sem\Gamma\xlra{\Delta}\sem\Gamma\times\sem\Gamma\xlra{t\times u}\{\star\}\times\sem C
    \xlra{\pi_2}\sem C
    \\
    \sem{\Gamma \vdash \elimbot(t):C}
    & =\sem\Gamma\xlra t \emptyset\xlra{\emptyset}\sem C
    \\
    \sem{\Gamma \vdash \lambda \abstr{x}t:A \Rightarrow B}
    & =\sem\Gamma\xlra{\eta^{\sem A}}\home{\sem A}{\sem A\times \sem\Gamma}\xlra{\home{\sem A}{t}}\home{\sem A}{\sem B}
    \\
    \sem{\Gamma \vdash t~u:B}
    & = \sem\Gamma\xlra\Delta\sem\Gamma\times\sem\Gamma\xlra{t\times u}\home{\sem A}{\sem B}\times\sem A \xlra{\varepsilon}\sem B
    \\
    \sem{\Gamma \vdash \pair{t}{u}:A \wedge B}
    & = \sem\Gamma\xlra\Delta \sem\Gamma\times\sem\Gamma \xlra{t\times u} \sem A\times\sem B
    \\
    \sem{\Gamma \vdash \pi_1(t):A}
    & = \sem\Gamma\xlra{t}\sem{A\wedge B}=\sem A\times\sem B \xlra{\pi_1}\sem A
    \\
    \sem{\Gamma \vdash \pi_2(t):A}
    & = \sem\Gamma\xlra{t}\sem{A\wedge B}=\sem A\times\sem B \xlra{\pi_2}\sem B
    \\
    \sem{\Gamma \vdash \inl(t):A \vee B}
    & =\sem\Gamma\xlra{t}\sem A\xlra{i_1}\sem A\cp\sem B
    \\
    \sem{\Gamma \vdash \inr(t):A \vee B}
    & =\sem\Gamma\xlra{t}\sem B\xlra{i_2}\sem A\cp\sem B
    \\
    \sem{\Gamma \vdash \elimor(t,\abstr{x}u,\abstr{y}v):C}
    & = \sem{\Gamma}\xlra{\Delta}\sem{\Gamma}\times\sem{\Gamma} \xlra{t\times\Id} (\sem A\cp\sem B)\times\sem\Gamma\\
    &\phantom{=~} \xlra{d} (\sem A\times\sem\Gamma)\cp (\sem B\times\sem\Gamma) \xlra{\coproducto uv} \sem C
  \end{align*}

In some diagrams we may write $A$ instead of $\sem A$, to simplify the notation.
\subsubsection{Soundness}\label{sec:soundness}
The soundness of the model is proven in \cref{thm:soundness}. 
As usual, we require the following substitution lemma. Its proof is done by induction on $t$ and given in \appref{B.1}.

\begin{lemma}
  [Substitution]\label{lem:substitution}
  If ${x:A},\Gamma\vdash {t:B}$ and $\Gamma\vdash {u:A}$, then
  $\sem{\Gamma\vdash (u/x)t:B} = \sem{\Gamma\vdash t:B}\circ(\sem{\Gamma\vdash u:A}\times\Id)\circ\Delta$.
  \qed
\end{lemma}

\begin{theorem}
  [Soundness]\label{thm:soundness}
  If $t\lra r$ and $\Gamma\vdash t:A$, then
  $\sem{\Gamma\vdash t:A} = \sem{\Gamma\vdash r:A}$.
\end{theorem}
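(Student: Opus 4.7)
My plan is to proceed by induction on the derivation of $t\lra r$. The congruence (contextual) cases are routine: they follow directly from the compositional shape of the interpretation in Figure~\ref{fig:deductionrulesinterpretation} together with the functoriality of $\times$, $\home{\_}{\_}$, and of $\cp$ (Lemma~\ref{lem:cpbifunctorial}). So the real work is to verify the ten base cases, one per reduction rule of Figure~\ref{fig:odotCalculus-reduction}.

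First I would dispatch the classical cut-elimination rules (top group). The $\elimtop$, projection, and $\beta$-redex rules reduce to standard computations in a Cartesian closed category (Theorem~\ref{thm:closure}), with the $\beta$-redex additionally invoking the Substitution Lemma (Lemma~\ref{lem:substitution}). For the two $\elimor$ rules on injections, the key observation is that precomposing the distributor $d$ of Definition~\ref{def:arrows} with $i_1\times\Id$ (resp.\ $i_2\times\Id$) lands entirely in the first (resp.\ second) clause of $\coproducto{u}{v}$, so that the whole composite collapses to $u$ (resp.\ $v$) applied to the pair $\pair{t}{\Id}$, at which point Lemma~\ref{lem:substitution} closes the case.

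Next come the parallel commutations. The rules $\star\plus\star\lra\star$, $(\lambda x.t)\plus(\lambda x.u)\lra\lambda x.(t\plus u)$, and $\pair{t}{v}\plus\pair{u}{w}\lra\pair{t\plus u}{v\plus w}$ are essentially immediate, because the magma operations on $\{\star\}$, on $\home{\sem A}{\sem B}$, and on $\sem A\times\sem B$ from Definition~\ref{def:objects} are, respectively, trivial, pointwise, and componentwise, so the magma sum literally commutes with each introduction.

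The main obstacle is the final rule $\elimor(t\plus u,x.v,y.w)\lra\elimor(t,x.v,y.w)\plus\elimor(u,x.v,y.w)$, which is the only case in which $\symbolOpSG$ meets the non-standard disjunction object $\sem A\cp\sem B$ and its operation $\sumcp$. Here I would aim to show that the composite $\coproducto{v}{w}\circ d$, viewed as a map from $(\sem A\cp\sem B)\times\sem\Gamma$ to $\sem C$, sends a $\sumcp$-sum in its first coordinate to a $\sumhat{\sem C}$-sum in $\sem C$; this is precisely the equation one gets after chasing both sides of the rule through Figure~\ref{fig:deductionrulesinterpretation}. I expect to prove it by a nine-way case analysis matching the nine clauses in the definition of $\sumcp$: in each branch one checks by inspection that the two outputs of $d$ land in summands where $\coproducto{v}{w}$ reduces to $v$, to $w$, or to $v\sumhat{\sem C}w$, and the resulting expressions match the corresponding clause of the pointwise magma on $\sem C$. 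Since no algebraic axioms (associativity, commutativity, idempotence) are ever invoked, this is purely symbol-pushing bookkeeping, which then closes the proof.
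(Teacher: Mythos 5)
Your overall architecture matches the paper's: induction on $\lra$, congruence cases by compositionality and Lemma~\ref{lem:cpbifunctorial}, the cut-elimination rules via the CCC structure (Theorem~\ref{thm:closure}) and Lemma~\ref{lem:substitution}, and the three parallel-commutation rules for $\star$, $\lambda$ and pairs via the magma structures of Definition~\ref{def:objects}. The gap is in the one case you yourself identify as the main obstacle.

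The lemma you propose for $\elimor(t\plus u,\abstr xv,\abstr yw)\lra\elimor(t,\abstr xv,\abstr yw)\plus\elimor(u,\abstr xv,\abstr yw)$ --- that $\coproducto{v}{w}\circ d\colon(\sem A\cp\sem B)\times\sem\Gamma\to\sem C$ sends a $\sumcp$-sum in its first coordinate to a $\sumhat{\sem C}$-sum --- is false, and the nine-way case analysis does not close. Write $h(c,g)=\coproducto{v}{w}(d(c,g))$. In the clause $c_1=(a_1,0)$, $c_2=(a_2,0)$ the required equation is $h((a_1\sumhat{A}a_2,0),g)=h((a_1,0),g)\sumhat{C}h((a_2,0),g)$, i.e.\ $v(a_1\sumhat{A}a_2,g)=v(a_1,g)\sumhat{C}v(a_2,g)$; since $v$ is an arbitrary map in $\mathbf{Set}$ and not a magma homomorphism, this has no reason to hold. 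Likewise the clause $c_1=(b,1)$, $c_2=(a,0)$ yields $v(a,g)\sumhat{C}w(b,g)$ on one side and $w(b,g)\sumhat{C}v(a,g)$ on the other, so it needs commutativity of $\sumhat{C}$, and mixed clauses such as $(a_1,0)\sumcp(a_2,b)$ additionally need associativity. So, far from being ``purely symbol-pushing with no algebraic axioms,'' seven of the nine clauses demand exactly the homomorphism, associativity and commutativity assumptions that this model deliberately does not have --- this is the point of Remark~\ref{rmk:coprodOnly} and of the ``Homomorphisms'' discussion in the conclusion, which is why $A\cp B$ is not claimed to be a coproduct in \TheCat.

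The only clause that holds unconditionally is $(a,0)\sumcp(b,1)=(a,b)$, where both sides evaluate to $v(a,g)\sumhat{C}w(b,g)$, and that is the only instance the paper's own proof uses: in the diagram of Figure~\ref{fig:commutingSoundness} the two components of the scrutinee are routed through $i_1$ and $i_2$ and then summed (cf.\ Lemma~\ref{lem:inclusion}), so the element reaching $d$ is always of the product form $(a,b)$, never an arbitrary $\sumcp$-sum. To repair your argument you must either set up the diagram so that only this configuration arises, as the paper does, or supply a genuinely different (and substantially harder) argument for scrutinees denoting arbitrary elements of $\sem A\cp\sem B$; the blanket identity on which your case analysis rests is simply not valid in \TheCat.
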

\begin{proof}
  By induction on the relation $\lra$.
  The full proof is given in \appref{B.2}.
  We only show two interesting cases here, involving the parallel operator.
  \begin{itemize}
    \item Rule $\elimor(\inl(t)\plus\inr(u),\abstr{x}v,\abstr{y}w)\longrightarrow (t/x)v\plus (u/y)w$. 
      The commuting diagram is
      \[
	\begin{tikzcd}[column sep=1cm,row sep=5mm]
	  {\Gamma\times\Gamma} &[-5mm] &[-4mm]&[-1.4cm] {(\Gamma\times\Gamma)\times\Gamma} \\
	  & {(\Gamma\times\Gamma)\times(\Gamma\times\Gamma)} \\
	  & {(\Gamma\times\Gamma)\times(\Gamma\times\Gamma)} \\
	  && {(A\times B)\times(\Gamma\times\Gamma)} \\
	  {C\times C} & {(A\times\Gamma)\times (B\times\Gamma)} && {(A\times B)\times\Gamma} \\
	  C &&& {(A\cp B)\times(A\cp B)\times\Gamma} \\
	  {(A\times\Gamma)\cp(B\times\Gamma)} &&& {(A\cp B)\times\Gamma}
	  \arrow["{\Delta\times\Id}", from=1-1, to=1-4]
	  \arrow["{\Delta\times\Delta}", dashed, from=1-1, to=2-2,sloped]
	  \arrow["{(t/x)v\times (u/x)w}"',sloped, from=1-1, to=5-1]
	  \arrow["{\Id\times\Delta}", dashed, from=1-4, to=2-2,sloped]
	  \arrow["{(t\times u)\times\Id}", from=1-4, to=5-4]
	  \arrow["{\Id\times\sigma\times\Id}", dashed, from=2-2, to=3-2]
	  \arrow["{(t\times\Id)\times (u\times\Id)}"', out=200,in=160, dashed, from=2-2, to=5-2]
	  \arrow["{(t\times\Id)\times (u\times\Id)}", dashed, from=3-2, to=5-2, pos=0.3]
	  \arrow["{\Id\times\sigma\times\Id}", dashed, from=4-3, to=5-2,sloped]
	  \arrow["{\sumhat C}"', from=5-1, to=6-1]
	  \arrow["{v\times w}", dashed, from=5-2, to=5-1]
	  \arrow["{\Id\times\Delta}", dashed, from=5-4, to=4-3,sloped]
	  \arrow["\delta", dashed, from=5-4, to=5-2]
	  \arrow["{(i_1\times i_2)\times\Id}", from=5-4, to=6-4]
	  \arrow["{\sumcp\times\Id}", from=6-4, to=7-4]
	  \arrow["{\coproducto{v}{w}}", from=7-1, to=6-1]
	  \arrow["d", from=7-4, to=7-1]
	  \arrow["\inclusion", from=5-2, to=7-1,sloped,dashed]
	  \arrow["\inclusion"', from=5-4, to=7-4,dashed,out=185,in=175,looseness=1.4]
	\end{tikzcd}
      \]
	The top diagram commutes by the functoriality of $\times$.
	The upper-left diagram commutes by \cref{lem:substitution} and the functoriality of $\times$.
	The upper-middle diagram commutes by coherence, when precomposed with $\Delta$s.
	The upper-right-top diagram commutes by the naturality of $(\Id\times\sigma\times\Id)\circ(\Id\times\Delta)$.
	The upper-right-bottom diagram is the definition of the map $\delta$.
	The bottom-left diagram commutes by definition of $\coproducto vw$.
	The bottom-right diagram commutes by definition of $\sumcp$.
	The bottom-middle diagram commutes since $\delta$ is a restriction of $d$.

    \item Rule $\inr(u_1) \plus (\inl(t_1)\plus\inr(u_2))  \longrightarrow \inl(t_1)\plus\inr(u_1 \plus u_2) $. The commuting diagram is
      \[
	\begin{tikzcd}[column sep=1cm]
	  {\Gamma\times(\Gamma\times\Gamma)} &{B\times (A\times B)}& {(A\cp B)\times((A\cp B)\times (A\cp B))} \\
	  {A\times(B\times B)} && {(A\cp B)\times (A\cp B)} \\
	  {A\times B} &{(A\cp B)\times(A\cp B)}& {A\cp B}
	  \arrow["{u_1\times (t_1\times u_2)}",sloped, from=1-1, to=1-2]
	  \arrow["{t_1\times(u_1\times u_2)}"', from=1-1, to=2-1]
	  \arrow["{\Id\times\sumcp}", from=1-3, to=2-3]
	  \arrow["{i_2\times(i_1\times i_2)}",sloped, from=1-2, to=1-3]
	  \arrow["{i_2\times\inclusion}"',sloped, dashed, from=1-2, to=2-3]
	  \arrow["{\simeq}"',sloped, dashed, from=2-1, to=1-2]
	  \arrow["{\Id\times\sumhat B}"', from=2-1, to=3-1]
	  \arrow["\sumcp", from=2-3, to=3-3]
	  \arrow["\inclusion", dashed, from=3-1, to=3-3,bend left=15,sloped]
	  \arrow["{i_1\times i_2}"', from=3-1, to=3-2,sloped]
	  \arrow["\sumcp"', from=3-2, to=3-3,sloped]
	\end{tikzcd}
      \]
	The upper-left diagram commutes by coherence.
	The upper-right diagram commutes by the functoriality of $\times$ and \cref{lem:inclusion}.
	The bottom diagram commutes by \cref{lem:inclusion}.
	The middle diagram commutes since
	    $\sumcp\circ (i_2\times\inclusion) (b_1,(a,b_2))
	    = (b_1,1)\sumcp (a,b_2)
	    = (a,b_1\sumhat B b_2)
	    $.
	    \qedhere
  \end{itemize}
\end{proof}

\needspace{2\baselineskip}
\begin{remark}
  ~
  \begin{itemize}
    \item The case of the rule
      $\elimor(\inl(t)\plus\inr(u),\abstr{x}v,\abstr{y}w)\longrightarrow
      (t/x)v\plus (u/y)w$ uses the fact that $\delta$ is a restriction of $d$.
      Indeed, $A\cp B$ includes $A\times B$, and thus, $\inl(t)\plus\inr(u)$
      can be seen as a pair of terms.

    \item The case of the rule $\inr(u_1) \plus (\inl(t_1)\plus\inr(u_2))
      \longrightarrow \inl(t_1)\plus\inr(u_1 \plus u_2)$ is solved by the
      definition of $\sumcp$. Indeed, we do not ask all the operations to be
      commutative or associative, but we can still use an operation with these
      properties to solve the problem.
  \end{itemize}
\end{remark}

\subsubsection{Adequacy}\label{sec:adequacy}
A proof of completeness would mean that if two closed proofs have the same interpretation, they both reduce to the same irreducible proof. However, this is too much to ask for in a calculus without $\eta$-reduction. Indeed, $\vdash \lambda x. tx: A \Rightarrow B$ and $\vdash t: A \Rightarrow B$ have the same interpretation, but they do not reduce to the same proof. We can only ask for the following weaker property, which is usually called adequacy. That is, if two proof-terms have the same interpretation, then they are computationally equivalent. The notion of computational equivalence is that no other program can distinguish between them. That is, putting them in any context will yield the same result. It is sufficient, however, to consider only elimination contexts of some proposition large enough to distinguish between two programs, such as $\top \vee \top$, the usual booleans.

Indeed, if $K$ is one of these contexts, putting two terms $t_1$ and $t_2$ inside $K$ is sufficient, and testing whether an introduction term such as $\pair Ku$ behaves the same when $t_1$ or $t_2$ are put inside $K$ does not add any additional information. Only $K$, the elimination context, is sufficient to distinguish between the two terms.

The notion of an elimination context is given below. Without loss of generality, we consider the elimination context to always produce proof-terms for smaller propositions.
\begin{definition}
  [Elimination context]\label{def:elimcontext}
  An elimination context is a proof-term produced by the following grammar, where $[\cdot]$ denotes a distinguished variable.
  \[
    K := [\cdot] \mid K~t\mid \pi_1(K)\mid \pi_2(K)\mid \elimor(K,\abstr xt,\abstr yu)
  \]
  where in the proof $\elimor(K,\abstr xt,\abstr yu)$, if $[\cdot]:C\vdash K:A\vee B$, then $x:A\vdash t:D$ and $y:B\vdash u:D$, with $|D|<|A\vee B|$.
  As usual, we write $K[t]$ for $(t/[\cdot])K$.
\end{definition}

We define the computational equivalence as follows.

\begin{definition}
  [Computational equivalence]
  Two proofs $\vdash t:A$ and $\vdash u:A$
  are computationally equivalent, written $t\sim u$,
  if for each elimination context $[\cdot]:A\vdash K:\top\vee\top$, there exists an irreducible proof $\vdash v:\top\vee\top$ such that $K[t]\lra^* v$ and $K[u]\lra^* v$.
\end{definition}

Finally, we can state and prove the adequacy of the model as follows.
\begin{theorem}
  [Adequacy]\label{thm:adequacy}
  If $\sem{\vdash t:A} = \sem{\vdash u:A}$, then $t\sim u$.
\end{theorem}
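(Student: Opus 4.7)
The plan is to reduce adequacy to a three-way case analysis on the denotations of the closed $\rightsquigarrow$-normal proofs of $\top\vee\top$. I first fix an arbitrary elimination context $[\cdot]:A\vdash K:\top\vee\top$ and apply the substitution lemma (Lemma~\ref{lem:substitution}), which gives $\sem{\vdash K[t]:\top\vee\top}=\sem{[\cdot]:A\vdash K:\top\vee\top}\circ(\sem{\vdash t:A}\times\Id)\circ\Delta$ and the analogous identity for $u$; the hypothesis $\sem{\vdash t:A}=\sem{\vdash u:A}$ then forces $\sem{K[t]}=\sem{K[u]}$. Using strong normalisation (Theorem~\ref{thm:SN}), confluence (Theorem~\ref{thm:conf}) and soundness (Theorem~\ref{thm:soundness}), I pass to the unique $\longrightarrow$-normal forms $v$ of $K[t]$ and $w$ of $K[u]$, for which $\sem v=\sem w$ and both are typed by $\top\vee\top$.

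Next I bring $v$ and $w$ into a small, explicit set of canonical shapes. Lemma~\ref{lem:IPinlr} gives $v\rightsquigarrow^* r_v$ with $r_v$ of the form $\inl(s)$, $\inr(s')$, or $\inl(s)\plus\inr(s')$ and $s,s'$ closed of type $\top$. Combining subject reduction (Theorem~\ref{thm:SR}), Theorem~\ref{thm:SN} applied to $s$ and $s'$, and the introduction property (Theorem~\ref{thm:IP}) at $\top$, each such sub-proof $\longrightarrow$-reduces to $\star$; since $\longrightarrow\,\subseteq\,\rightsquigarrow$, I extend the reduction to obtain $v\rightsquigarrow^* v^*$ with $v^*\in\mathcal B:=\{\inl(\star),\inr(\star),\inl(\star)\plus\inr(\star)\}$, and similarly $w\rightsquigarrow^* w^*\in\mathcal B$. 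Soundness of $\rightsquigarrow$ (Lemma~\ref{lem:soundnessinlr}) then yields $\sem{v^*}=\sem v=\sem w=\sem{w^*}$.

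The finishing step is a direct denotational computation inside $\sem{\top\vee\top}=\{\star\}\cp\{\star\}$: unfolding Figure~\ref{fig:deductionrulesinterpretation}, $\sem{\inl(\star)}=(\star,0)$, $\sem{\inr(\star)}=(\star,1)$ and $\sem{\inl(\star)\plus\inr(\star)}=(\star,0)\sumcp(\star,1)=(\star,\star)$, three pairwise distinct elements of $\{\star\}\cp\{\star\}$. Hence $\sem{v^*}=\sem{w^*}$ forces $v^*=w^*$, whence $v\equiv_\vee w$ by Definition~\ref{def:paralequivalence} (with common witness $v^*=w^*$); since $K$ was arbitrary, $t\sim u$. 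I expect the main obstacle to be conceptual rather than calculational: the argument hinges on the construction $A\cp B=(A\uplus B)\cup(A\times B)$ faithfully separating the three canonical $\rightsquigarrow$-normal proofs of $\top\vee\top$, which is precisely the biproduct-mimicking property that $\cp$ was engineered to supply, and which is what makes the weaker notion of adequacy (rather than completeness) go through in the absence of $\eta$-reduction.
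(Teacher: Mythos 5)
Your proof is correct, but it follows a genuinely different route from the paper's. The paper proves adequacy by induction on the structure of $A$: at each type it normalises $t$ and $u$, uses the introduction property to expose their head constructors, reads off from the equality of interpretations that the corresponding subterms have equal interpretations, and invokes the induction hypothesis (in the $\Rightarrow$ case it additionally decomposes any elimination context as $K'[[\cdot]~v]$ and uses the substitution lemma to pass to the instantiated bodies). You avoid the induction on $A$ altogether: you transport the hypothesis $\sem{\vdash t:A}=\sem{\vdash u:A}$ through an arbitrary elimination context $K$ by a single application of Lemma~\ref{lem:substitution} (with $\Gamma=\emptyset$), normalise $K[t]$ and $K[u]$, and conclude at the observation type $\top\vee\top$, where Lemma~\ref{lem:IPinlr} together with Theorems~\ref{thm:SR}, \ref{thm:SN} and \ref{thm:IP} reduces everything to the three canonical $\rightsquigarrow$-normal forms $\inl(\star)$, $\inr(\star)$, $\inl(\star)\plus\inr(\star)$, whose denotations $(\star,0)$, $(\star,1)$, $(\star,\star)$ are pairwise distinct in $\{\star\}\cp\{\star\}$. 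This is a more uniform and shorter argument; what it hinges on—the disjointness of the $i_1$-image, the $i_2$-image and the $\inclusion$-image inside $\cp$—is exactly the fact the paper's own $\vee$ case already relies on (``if $t\rightsquigarrow^*\inl(t')$ then $u\rightsquigarrow^*\inl(u')$\dots''), so you introduce no assumption beyond the paper's. Two points worth making explicit if you write it up: you use (as the paper implicitly does) that plugging a closed well-typed term into the hole preserves typing, and that $\longrightarrow$ and $\hookrightarrow$ are closed under term contexts so that the sub-proofs of type $\top$ inside $\inl(\cdot)$, $\inr(\cdot)$ can be reduced to $\star$ within a $\rightsquigarrow^*$ sequence. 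By contrast, the paper's type-directed proof yields along the way the finer information that corresponding subterms of the normal forms of $t$ and $u$ are themselves computationally equivalent, which your global argument does not provide (and does not need).
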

\begin{proof}
  By induction on the structure of $A$.
  The full details are given in \appref{B.3}.
  We only show one interesting case here.
  If $A=B\vee C$, then, by \cref{thm:SR,thm:SN,thm:IP},
  $t\lra^*\inl(t')$,
  $t\lra^*\inr(t')$,
  or
  $t\lra^*\inl(t')\plus\inr(t'')$;
  and
  $u\lra^*\inl(u')$,
  $u\lra^*\inr(u')$,
  $u\lra^*\inl(u')\plus\inr(u'')$.

  However, by \cref{thm:soundness}, and the fact that
  $\sem{\vdash t:A}=\sem{\vdash u:A}$, we have that
    if $t\lra^*\inl(t')$ then $u\lra^*\inl(u')$, 
    if $t\lra^*\inr(t')$ then $u\lra^*\inr(u')$, and 
    if $t\lra^*\inl(t')\plus\inr(t'')$ then $u\lra^*\inl(u')\plus\inr(u'')$.
  We must consider these three cases. We show here only the third case as example.
  We have
    $$\begin{tikzcd}[column sep=1cm,row sep=1cm]
      \{\star\}\ar[r,"\Delta"] & \{\star\}\times\{\star\}\ar[r,"t'\times t''",bend left]\ar[r,"u'\times u''"',bend right] & A\times B \ar[r,"i_1\times i_2"] &
      (A\cp B)\times (A\cp B)\ar[r,"\sumhat\times"]
      &
      A\cp B
    \end{tikzcd}$$
  Thus, $\sem{\vdash t':A} = \sem{\vdash u':A}$ and $\sem{\vdash t'':B} = \sem{\vdash u'':B}$, and so, by the induction hypothesis $t'\sim u'$ and $t''\sim u''$. Hence, $t\sim u$.
\end{proof}

\section{\texorpdfstring{The calculus $\alglambda$}{The algebraic lambda calculus}}\label{sec:algebraiccalculus}
\subsection{Syntax, deduction, and reduction rules}\label{sec:syntaxAlg}
In this section, we consider the $\alglambda$ calculus as an extension with scalars of the $\lambda_\parallel$ calculus. This calculus is similar to the Algebraic lambda calculus~\cite{Vaux2009}, but considering all the logical connectives, making it closer to the $\odot^{\mathcal S}$-calculus~\cite{DiazcaroDowekTCS23a}. The idea is to consider linear combinations of terms, as in $\scal a\cdot t\parallel\scal b\cdot u$, where we continue writing $\parallel$ for the operation instead of the more common $+$ from algebraic calculi, to emphasise its connection with the language defined in the previous section.

In the Algebraic lambda calculus~\cite{Vaux2009}, the scalars belong to a semiring with zero and unity (i.e.~a rig) and it is claimed not to use a ring since there are some confluence problems with negative scalars in its untyped setting. Indeed, a term $Y_t=(\lambda x.xx\plus t)(\lambda x.xx\plus t)$ reducing to $Y_t\plus t$ is the source of non-confluence when considering $\scal a\cdot Y_t\plus -\scal a\cdot Y_t$, which may reduce to $t\plus \cdots\plus t$ for any amount of $t$ in parallel (even zero), since at any point $\scal a\cdot Y_t$ may be cancelled with $-\scal a\cdot Y_t$ on these calculi.
There are several solutions to this problem. One is to consider only positive scalars, as done in the Algebraic lambda calculus. Another one is to consider a call-by-value beta reduction together with a linearity rule, as done in~\cite{ArrighiDowekLMCS17} (specifically, the solution involves restricting some rewriting rules to closed normal forms, which forces a call-by-value strategy). However, the simpler solution to this lack of confluence is to consider a typed setting ensuring strong normalisation, where the term $Y_t$ is not typable. For an extensive discussion on the different design choices for algebraic calculi, see~\cite{AssafDiazcaroPerdrixTassonValironLMCS14}. In our case, we relax all the restrictions on scalars and just require them to belong to a bi-magma, as defined next.
\begin{definition}[Bi-magma]
  A bi-magma is a set with two operations $(\mathcal S,+,\cdot)$.
\end{definition}

The syntax of the $\alglambda$ calculus extends the $\lambda_\parallel$ calculus introduced in \cref{sec:syntax} by incorporating scalars from a fixed bi-magma $\mathcal S$, as follows:
The term $\star$ is replaced with $\sstar s$, and a new term $\scal s \bullet t$ is added, where $\scal s \in \mathcal S$ and $t$ is a term.

The deduction rules are also those given in \cref{sec:syntax}, with the exception of rule $\top_i$, and the addition of new families of rules ``$\top_i(\scal s)$'' and ``prod($\scal s$)'' as follows.
\[
  \infer[\top_i(\scal s)]{\Gamma\vdash \sstar s:\top}{}
  \qquad
  \qquad
  \infer[\mbox{prod($\scal s$)}]{\Gamma\vdash \scal s\bullet t:A}{\vdash t:A}
\]
As with the rule par, the rules prod($\scal s$) do not introduce new provable formulas. They can be read as ``If we have a proof of $A$, then we have a proof of $A$''.

The reduction relation is similar to that of the $\lambda_\parallel$ calculus given in \cref{sec:syntax}, 
with modifications to the elimination of $\top$ and the commutation of $\plus$ with $\top_i(\scal s)$, since $\top_i$ now includes scalars:
  \begin{align*}
    \elimtop(\sstar s,t)                               & \longrightarrow  \scal s\bullet t &
    {\sstare{\scal s_1}} \plus {\sstare{\scal s_2}}                             & \longrightarrow  \sstare{\scal s_1+\scal s_2} 
  \end{align*}
  Additionally, a third group is added for the commutation rules of $\text{prod}(\scal s)$ with all the connectives. Indeed, just as with the rule $\text{par}$, the rule $\text{prod}(\scal s)$ can appear between the introduction and the elimination of a connective.
As in the case of the $\text{par}$ rule, we must commute it with either the elimination or the introduction. Following the same convention, we always commute the $\text{prod}(\scal s)$ rule with introductions:
  \begin{align*}
    \scal s_1\bullet \sstare{\scal s_2} & \longrightarrow  \sstare{\scal s_1\cdot\scal s_2}
    &\scal s\bullet\inl(t) & \longrightarrow \inl(\scal s\bullet t) 
    \\
    \scal s\bullet \lambda \abstr{x}t &\longrightarrow  \lambda \abstr{x}\scal s\bullet t 
    &\scal s\bullet\inr(t) & \longrightarrow \inr(\scal s\bullet t)  
    \\
    \scal s\bullet\pair{t}{v} & \longrightarrow  \pair{\scal s\bullet t}{\scal s\bullet v} 
    & \scal s\bullet(\inl(t)\plus\inr(v)) & \longrightarrow \inl(\scal s\bullet t)\plus\inr(\scal s\bullet v)
  \end{align*}

\subsection{Correctness}\label{sec:correctnessalg}
The correctness properties are straightforward adaptations of the proofs of \cref{thm:SR,thm:SN,thm:conf,thm:IP}, and are therefore omitted.
The only significant difference is that the introduction property now states that if $\vdash t : \top$ is irreducible, then $t = \sstar s$ for some $\scal s \in \mathcal S$.
%
%
%

\subsection{The category \texorpdfstring{\TheCatAlg}{AMag-S-Set}}\label{sec:categoryalg}
\begin{definition}[Action magma]
  An action magma $(A,\sumhat A,\prodhat A)$ over $\mathcal S$ consists of a magma $(A,\sumhat A)$ and a bi-magma $(\mathcal S,+,\cdot)$, with a map $\mathcal S\times A\xlra{\prodhat A}A$ such that it distributes with respect to $\sumhat A$, that is
  \(
    \scal s\prodhat A (a_1\sumhat A a_2) = \scal s\prodhat A a_1\sumhat A \scal s\prodhat A a_2
  \).
\end{definition}

\begin{definition}[The category \TheCatAlg]
  The category \TheCatAlg is determined by the following data:
    Objects are action magmas over the fixed bi-magma $\mathcal S$.
    Arrows are maps in $\mathbf{Set}$.
\end{definition}

\begin{theorem}[CCC]
  \label{thm:closureAlg}
  The category \TheCatAlg is Cartesian closed.
\end{theorem}
\begin{proof}
  The argument is the same as the one given in \cref{thm:closure}.
\end{proof}

We now extend the pokeball to the Action Magmas by defining its action.
\begin{definition}
  [The operation {\normalfont$\prodcp$}]\label{def:action}
  Let $(A,\sumhat A,\prodhat A),(B,\sumhat B,\prodhat B)\in\Obj(\TheCatAlg)$.
  We define $\prodhat{A\subcp B}$ ($\prodcp$ for short) as 
  $\mathcal S\times (A\cp B)\xlra{\prodcp} A\cp B$ 
  given by
  \[
    (\scal s,c)\mapsto
    \left\{
      \begin{array}{ll}
	(\scal s\prodhat A a,0)               & \mbox{if } c = (a,0) \\
	(\scal s\prodhat B b,1) 		& \mbox{if } c = (b,1) \\
	(\scal s\prodhat A a,\scal s\prodhat B b)   & \mbox{if } c = (a,b)
      \end{array}
    \right.
  \]
\end{definition}

\begin{definition}
  \label{def:objectsalg}
  Let $A,B\in\Obj(\TheCatAlg)$.
  The following are also objects in \TheCatAlg.
  \begin{itemize}
    \item $(A\cp B,\sumcp,\prodcp)$ with the action given in \cref{def:action}.
    \item $(\emptyset,\sumhat\emptyset,\prodhat\emptyset)$ with the action defined as the empty map.
    \item $(\{\star\},\sumhat{\{\star\}},\prodhat{\{\star\}})$ with the action defined as $\scal s\prodhat{\{\star\}}\star = \star$.
    \item A bi-magma $(\mathcal S,+,\cdot)$ with the magma operation defined by $+$ and the action defined by $\cdot$.
    \item $(A\times B,\sumhat{A\times B},\prodhat{A\times B})$ with the action defined by $\scal s\prodhat{A\times B}(a,b) = (\scal s\prodhat A a,\scal s\prodhat B b)$.
    \item $(\home AB,\sumhat{\home AB},\prodhat{\home AB})$ with the action defined by $(\scal s\prodhat{\home AB}f)(a) = \scal s\prodhat B f(a)$.
  \end{itemize}
\end{definition}

\begin{definition}
  \label{def:arrowsalg}
  Let $A$ be an object in $\TheCatAlg$, we define the following arrows.
  \begin{itemize}
    \item $\{\star\}\xlra{\scal s}\mathcal S$ is defined as $\star\mapsto\scal s$.
    \item $\mathcal S\times A\xlra{\prodhat A}A$ is defined as $(\scal s,a)\mapsto\scal s\prodhat A a$.
    \item $A\xlra{\hat{\scal s}}A$ is defined as $A\xlra{\rho}\{\star\}\times A\xlra{\scal s\times\Id}\mathcal S\times A\xlra{\prodhat A} A$.
  \end{itemize}
\end{definition}

\subsection{Model of the \texorpdfstring{$\alglambda$ calculus}{algebraic lambda calculus}}\label{sec:modelAlg}
\subsubsection{Interpretation}

The interpretation of propositions and contexts from \cref{sec:interpretation} remains the same, except that the interpretation of $\top$ is updated as $\sem\top = \mathcal S$.

The interpretation of the deduction rules is also the same as that from \cref{sec:interpretation}, except for the introduction and elimination of $\top$, and the extra rules prod($\scal s$), which are the following: 
\[
  \begin{array}{r@{\,}l}
  \sem{\Gamma\vdash \sstar s:\top} &= \sem{\Gamma}\xlra{!}\{\star\}\xlra{\scal s}\mathcal S\\
  \sem{\Gamma\vdash\elimtop(t,u):A} &=\sem{\Gamma}\xlra{\Delta}\sem{\Gamma}\times\sem{\Gamma}\xlra{t\times u} \mathcal S\times\sem A\xlra{\prodhat A}\sem A\\
  \sem{\Gamma\vdash \scal s\bullet t:A} &=\sem{\Gamma}\xlra{t}\sem A\xlra{\hat{\scal s}}\sem A
\end{array}
\]

\subsubsection{Soundness}
The proof of soundness is updated from the proof of soundness from \cref{sec:soundness}.

As before, we require the substitution lemma (proved in \appref{D.1}).

\begin{lemma}
  [Substitution]\label{lem:substitutionAlg}
  If ${x:A},\Gamma\vdash {t:B}$ and $\Gamma\vdash {u:A}$, then
  $\sem{\Gamma\vdash (u/x)t:B} = \sem{\Gamma\vdash t:B}\circ(\sem{\Gamma\vdash u:A}\times\Id)\circ\Delta$.
  \qed
\end{lemma}

\begin{theorem}
  [Soundness]\label{thm:soundnessAlg}
  If $t\lra r$ and $\Gamma\vdash t:A$, then
  $\sem{\Gamma\vdash t:A} = \sem{\Gamma\vdash r:A}$.
\end{theorem}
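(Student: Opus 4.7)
The plan is to proceed by induction on the reduction relation $\lra$, following the same scheme as the proof of Theorem~\ref{thm:soundness}. The congruence (context closure) cases are standard, and every reduction rule that is shared verbatim with $\lambda_\parallel$ — namely $(\lambda x.t)u \lra (u/x)t$, the two projection rules, the two $\elimor$ rules on $\inl/\inr$, and the commutations of $\plus$ with $\lambda$, pairing, and $\elimor$ — is treated exactly as in Theorem~\ref{thm:soundness}, invoking now the updated substitution lemma (Lemma~\ref{lem:substitutionAlg}). In particular, the diagram of Figure~\ref{fig:commutingSoundness} is reused unchanged for the $\elimor(t\plus u,\ldots)$ rule, since that argument depends only on $\sumcp$, $d$, $\delta$, and bifunctoriality, none of which are affected by the addition of scalars.

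For the rules that changed or are new, I would verify each by chasing an element $g\in\sem\Gamma$ through the interpretation. For $\elimtop(s.\star,t)\lra s\bullet t$, both sides send $g \mapsto s\prodhat{\sem A} t(g)$: the left-hand side by $g \mapsto (g,g)\mapsto(s,t(g))\mapsto s\prodhat{\sem A} t(g)$, the right-hand side by $g \mapsto t(g)\mapsto(\star,t(g))\mapsto(s,t(g))\mapsto s\prodhat{\sem A} t(g)$, using the definition of $\hat s$. For $s_1.\star\plus s_2.\star \lra (s_1+s_2).\star$, both sides are the constant map $g\mapsto s_1+s_2$, since the magma operation on $\sem\top=\mathcal S$ is $+$. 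For $s_1\bullet s_2.\star \lra (s_1\cdot s_2).\star$, both collapse to the constant $s_1\cdot s_2$ via $\hat{s_1}\circ s_2$.

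For the commutations of $\bullet$ with introductions, the key observation is that the action on each compound object is defined pointwise. Concretely: the action on $\home{\sem A}{\sem B}$ is $(s\prodhat{\home{\sem A}{\sem B}}f)(a) = s\prodhat{\sem B}f(a)$, so $\hat s\circ\sem{\lambda x.t}$ and $\sem{\lambda x.s\bullet t}$ both send $g$ to the function $a\mapsto s\prodhat{\sem B} t(a,g)$; the action on $\sem A\times\sem B$ is componentwise, handling $\pair{t}{v}$; and for $\inl$ and $\inr$, the action $\prodcp$ satisfies $s\prodcp(a,0) = (s\prodhat{\sem A}a,0) = i_1(s\prodhat{\sem A}a)$ and symmetrically for $i_2$, so $\hat s \circ i_1 = i_1\circ\hat s$ and $\hat s\circ i_2 = i_2\circ\hat s$.

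The main obstacle — and the only place where the $\cp$ construction genuinely participates — is verifying the two rules $s\bullet\inl(t)\lra\inl(s\bullet t)$ and $s\bullet\inr(t)\lra\inr(s\bullet t)$. What makes them go through is that $\prodcp$ was defined precisely so that its restriction along $i_1$ (resp.\ $i_2$) is $\prodhat{\sem A}$ (resp.\ $\prodhat{\sem B}$); this is the algebraic counterpart of Lemma~\ref{lem:injhomomorphism}. Once that compatibility is recorded, the verification is a one-line computation on each branch of the case split defining $\prodcp$. Apart from this, no new conceptual difficulty arises beyond the one already resolved in Theorem~\ref{thm:soundness}.
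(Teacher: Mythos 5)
Your proposal is correct and follows essentially the same route as the paper's proof: induction on $\longrightarrow$, reusing the $\lambda_\parallel$ cases (with Lemma~\ref{lem:substitutionAlg} in place of Lemma~\ref{lem:substitution}) and checking the new or modified rules directly from the definitions of the actions on $\mathcal S$, $\sem A\times\sem B$, $\home{\sem A}{\sem B}$, and $\prodcp$ (in particular $s\prodcp(a,0)=(s\prodhat{\sem A}a,0)$ and its symmetric counterpart, which is exactly the paper's justification for the $\inl$/$\inr$ cases). The only difference is presentational: you chase elements where the paper draws the corresponding commuting diagrams, but the underlying computations coincide.
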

\begin{proof}
  By induction on the relation $\longrightarrow$.
  The full proof can be found in \appref{D.2}.
  We only give a few illustrative examples.

  \noindent  Rule $\elimtop(\sstar s,t) \longrightarrow  \scal s\bullet t$. The commuting diagram is the following:
      \[
	\begin{tikzcd}[row sep=1.3cm]
	  \Gamma & {\Gamma\times \Gamma} & {\{\star\}\times A}  
	  \\
	  A & A & {\mathcal S\times A}
	  \arrow["\Delta", from=1-1, to=1-2]
	  \arrow["t"', from=1-1, to=2-1]
	  \arrow["{{!}\times t}", from=1-2, to=1-3]
	  \arrow["{\scal s\times \Id}", from=1-3, to=2-3]
	  \arrow["\rho", dashed, from=2-1, to=1-3,sloped]
	  \arrow["{\hat{\scal s}}"', from=2-1, to=2-2]
	  \arrow["{\prodhat A}", from=2-3, to=2-2]
	\end{tikzcd}
      \]
	    The commutation of the left diagram is trivial. 
	    The right diagram is the definition of $\hat{\scal s}$.
    Rule $\sstare{\scal s_1}\plus \sstare{\scal s_2} \longrightarrow  \sstare{\scal s_1+\scal s_2}$. The commuting diagram is the following:
      \[
	\begin{tikzcd}[row sep=1.3cm]
	  \Gamma & {\{\star\}} & {\mathcal S} 
	  \\
	  {\Gamma\times\Gamma} & {\{\star\}\times \{\star\}} & {\mathcal S\times \mathcal S}
	  \arrow["{{!}}", from=1-1, to=1-2]
	  \arrow["\Delta"', from=1-1, to=2-1]
	  \arrow["{\scal s_1+\scal s_2}", from=1-2, to=1-3]
	  \arrow["\Delta"', from=1-2, to=2-2,dashed]
	  \arrow["{{!}\times{!}}"', from=2-1, to=2-2]
	  \arrow["{\scal s_1\times\scal s_2}"', from=2-2, to=2-3]
	  \arrow["\sumhat{\mathcal S}"', from=2-3, to=1-3]
	\end{tikzcd}
      \]
	    The left diagram commutes by naturality of $\Delta$.
	    The right diagram commutes since the magma operation of $\mathcal S$ is defined with respect to the bi-magma operation: $\sumhat{\mathcal S}=+$.
    Rule $\scal s\bullet\inl(t) \longrightarrow \inl(\scal s\bullet t)$. The commuting diagram is
      \[
	\begin{tikzcd}[row sep=1.3cm]
	  \Gamma & A & {A\cp B} 
	  \\
	  & A & {A\cp B}
	  \arrow["t", from=1-1, to=1-2]
	  \arrow["{i_1}", from=1-2, to=1-3]
	  \arrow["{\hat{\scal s}}"', from=1-2, to=2-2]
	  \arrow["{\hat{\scal s}}", from=1-3, to=2-3]
	  \arrow["{i_1}"', from=2-2, to=2-3]
	\end{tikzcd}
      \]
	    The diagram commutes since $\scal s\prodcp(a,0) = (\scal s\prodhat A a,0)$ by definition.
\end{proof}

\subsubsection{Adequacy}
In this setting, the concept of computational equivalence is simpler than in the calculus $\lambda_\parallel$, since the proposition $\top\vee\top$ no longer represents the booleans but rather the action magma $\mathcal S^2$. Thus, to distinguish two programs, it is sufficient to use the proposition $\top$, as it contains as many scalars as $\mathcal S$ (unless $\mathcal S=\emptyset$, in which case the language becomes trivial, or $\mathcal S=\{\star\}$, in which case the language is analogous to $\lambda_\parallel$).

\begin{theorem}
  [Adequacy]\label{thm:adequacyAlg}
  If $\sem{\vdash t:A} = \sem{\vdash u:A}$, then $t\sim u$.
\end{theorem}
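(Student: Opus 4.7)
The plan is to mimic the structure of the adequacy proof for $\lambda_\parallel$ (Theorem~\ref{thm:adequacy}), doing induction on $A$ and adjusting each case for the presence of scalars. The main ingredients will be: strong normalisation (Theorem~\ref{thm:SNalg}) together with the introduction property (Theorem~\ref{thm:IPalg}) to reduce $t$ and $u$ to their canonical forms; soundness (Theorem~\ref{thm:soundnessAlg}) and the substitution lemma (Lemma~\ref{lem:substitutionAlg}) to transport the equality of interpretations to these canonical forms; and, for the disjunction case, the relation $\rightsquigarrow$ of Definition~\ref{def:relationsquig} together with the algebraic analogues of Lemmas~\ref{lem:IPinlr} and~\ref{lem:soundnessinlr}.

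For $A=\top$, strong normalisation and the introduction property give $t\longrightarrow^* s_1.\star$ and $u\longrightarrow^* s_2.\star$; since $\sem{\vdash s.\star:\top}$ is the map $\star\mapsto s$, the assumption forces $s_1=s_2$, so every elimination context $K$ produces the common reduct $K[s_1.\star]$, which strongly normalises. For $A=\bot$, the introduction property rules out closed proofs, so the statement is vacuous. For $A=B\Rightarrow C$ or $A=B\wedge C$, the canonical forms are $\lambda \abstr{x}t'$ and $\pair{t'}{t''}$ respectively; choosing elimination contexts that apply the lambda to a fixed witness, or that project a component, reduces the problem to instances of the induction hypothesis at $B$ or $C$, whose common reducts can then be substituted back into the outer context.

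The delicate case is $A=B\vee C$. By the algebraic version of Lemma~\ref{lem:IPinlr}, both $t$ and $u$ $\rightsquigarrow^*$-reduce to a proof of the form $\inl(t')$, $\inr(u')$, or $\inl(t')\plus\inr(u')$. The key observation is that the interpretations of these three canonical forms land in disjoint regions of $\sem B\cp\sem C=(\sem B\uplus\sem C)\cup(\sem B\times\sem C)$: the left summand, the right summand, or the product, respectively. Equality of interpretations therefore forces $t$ and $u$ to share the same canonical shape, with equal interpretations on their components, so the induction hypothesis at $B$ and $C$ supplies computational equivalences on each component. An elimination context of the form $\elimor([\cdot],\abstr xK_B,\abstr yK_C)$, with $K_B$ and $K_C$ obtained from the hypothesis at the smaller types, then lifts these componentwise equivalences to a common reduct of $K[t]$ and $K[u]$.

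The main obstacle will be the $\vee$ case: beyond engineering the correct elimination contexts, we need to reconcile the use of $\rightsquigarrow^*$ (to expose the canonical forms) with the $\longrightarrow^*$ that appears in the definition of $\sim$, appealing to soundness of $\rightsquigarrow$ and confluence (Theorem~\ref{thm:confalg}). The discriminating power of the scalars makes the parallel disjunction equivalence $\equiv_\vee$ unnecessary here, since each of the three canonical shapes is pinned down by its interpretation; still, care is required to check that the common $\rightsquigarrow$-normal form is actually witnessed by a common $\longrightarrow^*$-reduct once all the eliminations fire inside $K$.
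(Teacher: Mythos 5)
Your proposal follows essentially the same route as the paper's proof: induction on $A$, canonical forms via strong normalisation and the introduction property, soundness (and the substitution lemma for $\Rightarrow$) to transfer the equality of interpretations, and in the $\vee$ case the algebraic analogue of Lemma~\ref{lem:IPinlr} plus the fact that the three canonical shapes have interpretations in disjoint parts of $\sem B\cp\sem C$, so the shapes coincide and the induction hypothesis applies componentwise, with $\equiv_\vee$ indeed not needed. This matches the paper's argument, including its (equally light) treatment of the interplay between $\rightsquigarrow^*$ and $\longrightarrow^*$ inside elimination contexts.
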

\begin{proof}
  By induction on the structure of $A$.
  See \appref{D.3}.
\end{proof}

\section{Conclusion}\label{sec:conclusion}
\subparagraph*{Summary}
In this paper, we have proposed a categorical framework for modelling parallelism and algebraic combinations within intuitionistic propositional logic, without relying on monads or biproducts. Our approach is based on the minimal structure of magmas and action magmas, with arrows taken from the category $\mathbf{Set}$, which allows us to interpret two lambda calculi: a parallel lambda calculus $\lambda_\parallel$ and its algebraic extension $\alglambda$.

A central challenge addressed in this work is the interpretation of disjunction in the presence of parallel and scalar operators. 
Unlike in certain models of linear logic---those equipped with biproducts---where it is possible to identify disjunction and conjunction via a common categorical structure, intuitionistic
logic does not provide this feature. Our solution, based on a set-theoretic construction combining the disjoint union and the Cartesian product, allows us to model disjunction while preserving the intended behaviour of parallel composition. This highlights the disjunction as the most delicate connective to handle in this context, and its resolution was crucial to enabling a uniform interpretation of both calculi.

By showing that these effects can be captured without requiring monads or biproducts, we provide an alternative and structurally simpler foundation for reasoning about parallelism in logic. In particular, our work suggests that many of the insights from linear logic---including the idea of interpreting computational effects through categorical structure---can be translated to a broader logical setting, provided that we carefully reconsider the structural assumptions.

\subparagraph*{Homomorphisms}
Neither \TheCat nor \TheCatAlg are algebraic categories, in the sense that the natural arrows for the given objects would be homomorphisms preserving the operations. The reason for this differs for each calculus and deserves some discussion.

In the case of the $\lambda_\parallel$ calculus, we could have opted for a category whose arrows preserve the operations of the objects. However, in such a case, it would not have been sufficient to work with magmas, since the arrow $\sumcp:(A\cp B)\times (A\cp B)\lra A\cp B$ is not a homomorphism unless it is associative and commutative, which is not the case unless $\sumhat A$ and $\sumhat B$ are. Indeed, if $f=\sumcp$, then 
$f((c_1,c_2)\sumhat\times (c_3,c_4)) =f(c_1\sumcp c_3,c_2\sumcp c_4) =(c_1\sumcp c_3)\sumcp(c_2\sumcp c_4)
\neq
(c_1\sumcp c_2)\sumcp(c_3\sumcp c_4)
=
f(c_1,c_2)\sumcp f(c_3,c_4)$.
Moreover, the category of commutative magmas is not Cartesian closed, and thus, we would not be able to interpret the $\lambda_\parallel$ calculus in it.

The advantage of using the category \TheCat is that it allows us to interpret the parallel operator in a more general setting, where the operator does not need to be associative or commutative. It also gives us a closer connection to the algebraic lambda calculus, where the interpretation cannot be done in a category with homomorphisms.
Indeed, it is well known~\cite{AssafDiazcaroPerdrixTassonValironLMCS14} that $(\lambda x.t)(\scal a\bullet u)$ and $\scal a\bullet (\lambda x.t)u$ do not yield the same result, unless the linearity is enforced by the reduction strategy as is the case of Lineal~\cite{ArrighiDowekLMCS17}\footnote{In Lineal, $(\lambda x.t)(\scal a\bullet u)$ does not beta-reduce, but first reduces to $\scal a\bullet (\lambda x.t)u$ until $u$ does not include any more scalars or sums, so, forcing the linearity by evaluation.}. 
An easy counterexample, taking $\mathcal S=\mathbb N$, is the following:
\(
  (\lambda x.\sstare 1)(2\bullet t)
  \longrightarrow 
  \sstare 1 
  \neq
  \sstare 2
  \lla
  2\bullet \sstare 1
  \lla
  2\bullet (\lambda x.\sstare 1)t
\).


\subparagraph*{The interpretation of disjunctions}
As mentioned in the introduction, the fact that the parallel of disjunctions can be seen as a pair is evident. For example, compare $\inl(t)\parallel\inr(u)$ with $\pair{t}{u}$, where $t$ and $u$ are terms of type $A$ and $B$, respectively.
This resemblance becomes even more apparent when considering the Linear Logic additive disjunction and conjunction, which are both interpreted as a biproduct.

The interpretation of disjunction that we gave for the non-linear case, namely $A\cp B = (A \cup B) \cup (A \times B)$, attempts to mimic this biproduct-like behaviour, but without making disjunction and conjunction coincide. Instead, it establishes an inclusion.
Indeed, when linearity is not considered, we cannot simply add a zero to a conjunction to obtain a disjunction.
In a conjunction, both $t$ and $u$ are present, whereas in a disjunction with parallel, both $t$ and $u$ can be present — but it is still possible to have only one of them. This provides a more refined interpretation of disjunction.

We stress that our system does feature a proper disjunction: the only novelty is that,
due to the presence of the parallel operator, terms such as $\inl(v)\parallel\inr(w)$ may
also be viewed as conjunctive. This viewpoint is consistent with the recent
preprint~\cite{DiazcaroDowekInlr}, where instead of a parallel operator an additional
introduction rule $\mathsf{inlr}(v,w)$ is considered, which is equivalent to
$\inl(v)\parallel\inr(w)$.

This perspective also highlights the added value of our model: the introduction of the
pokeball product provides a semantic account of this phenomenon, which had previously
been treated only at the syntactic level. In particular, it captures the situation where both
sides of a disjunction are present simultaneously. The pokeball product shows that this
behaviour can be modelled categorically in intuitionistic logic, without requiring
biproducts, thereby revealing a subtle logical feature of disjunction in the presence of
parallelism.

\subparagraph*{Future directions}
The calculi we presented are strongly normalising. As future work, we plan to investigate
the addition of fixpoint operators to extend their expressive power while ensuring that
desirable semantic properties are preserved.

More broadly, we believe that the approach developed here could be extended to capture
other computational effects beyond parallelism and algebraic combinations, in a spirit
similar to Moggi’s monadic framework. Identifying the appropriate categorical structures
for such effects remains an open and promising direction for future work.

\bibliography{biblio}

\ifisarxiv
  \appendix
  \section[Full proofs from Section~\ref{sec:correctness}]{Full proofs from \cref{sec:correctness}}\label{A}
  This section is a nearly literal reproduction of the proofs from the draft in~\cite{DiazcaroDowekInlr}, with only minimal changes required to adapt them to the $\lambda^\parallel$ calculus.

  \subsection[Proof of Theorem~\ref{thm:SR}]{Proof of \cref{thm:SR}}
  As usual, we start with a substitution lemma
  (\cref{prop:subst}) which is used in
  the proof of the main theorem
  (\cref{thm:SR}).

  The substitution lemma says that substituting a hypothesis in proof, by an
  actual proof of that hypothesis, yields a valid proof.
  \begin{proposition}
    [Substitution]
    \label{prop:subst}
    Let $t$ be a proof of $A$ in context $\Gamma,x:B$ and $u$ be a proof of $B$
    in context $\Gamma$.  Then, $(u/x)t$ is a proof of $A$ in context $\Gamma$.
  \end{proposition}
  \begin{proof}
    By induction on the structure of $t$.
    \begin{itemize}
      \item Let $t=x$, thus $B=A$ and $(u/x)t = u$, and since $u$ is a proof of
	$B$ in context $\Gamma$, we have that $(u/x)t$ is a proof of $A$ in
	context $\Gamma$.
      \item Let $t=y$, thus $y:A\in\Gamma$ and $(u/x)t=y$, hence $(u/x)t$ is a
	proof of $A$ in context $\Gamma$.
      \item Let $t=v \plus w$, 
	thus $(u/x)t=(u/x)v\plus (u/x)w$ and
	$v$ and $w$ are proofs of $A$ in context $\Gamma,x:B$. Hence, by the
	induction hypothesis,
	$(u/x)v$ is a proof of $A$ in context $\Gamma$ and
	$(u/x)w$ is a proof of $A$ in context $\Gamma$. Therefore, 
	$(u/x)t$ is a proof of $A$ in context $\Gamma$.
      \item Let $t=\star$, thus $A=\top$ and $(u/x)t=\star$, hence $(u/x)t$ is a
	proof of $A$ in context $\Gamma$.
      \item Let $t= \elimtop(v,w)$,  
	thus $(u/x)t=\elimtop((u/x)v,(u/x)w)$ and
	$v$ is a proof of $\top$ in the context $\Gamma,x:B$ and $w$ is a proof of $A$ in context $\Gamma,x:B$. Hence, by the induction
	hypothesis,
	$(u/x)v$ is a proof of $\top$ in context $\Gamma$ and
	$(u/x)w$ is a proof of $A$ in context $\Gamma$. Therefore, 
	$(u/x)t$ is a proof of $A$ in context $\Gamma$.
      \item Let $t= \elimbot{A}(v)$, 
	thus $(u/x)t=\elimbot{A}((u/x)v)$ and
	$v$ is a proof of $\bot$ in context $\Gamma,x:B$.
	Hence, by the induction hypothesis,
	$(u/x)v$ is a proof of $\bot$ in context $\Gamma$. Therefore,
	$(u/x)t$ is a proof of $A$ in context $\Gamma$.
      \item Let $t= \lambda \abstr{y}v$,
	thus $(u/x)t=\lambda\abstr{y}((u/x)v)$, $A=C\Rightarrow D$, and
	$v$ is a proof of $D$ in context $\Gamma,x:B,y:C$.
	Hence, by the induction hypothesis,
	$(u/x)v$ is a proof of $D$ in context $\Gamma,y:C$. Therefore,
	$(u/x)t$ is a proof of $A$ in context $\Gamma$.
      \item Let $t=v~w$,
	thus $(u/x)t=(u/x)~v(u/x)w$ and
	$v$ is a proof of $C\Rightarrow A$ in the context $\Gamma,x:B$ and $w$ is a proof of $C$ in context $\Gamma,x:B$. Hence, by the induction
	hypothesis,
	$(u/x)v$ is a proof of $C\Rightarrow A$ in context $\Gamma$ and
	$(u/x)w$ is a proof of $C$ in context $\Gamma$. Therefore, 
	$(u/x)t$ is a proof of $A$ in context $\Gamma$.
      \item Let $t= \pair{v}{w}$,
	thus $(u/x)t=\pair{(u/x)v}{(u/x)w}$, $A=C\wedge D$, 
	$v$ is a proof of $C$ in the context $\Gamma,x:B$, 
	and $w$ is a proof of $D$ in context $\Gamma,x:B$. Hence, by the
	induction hypothesis,
	$(u/x)v$ is a proof of $C$ in context $\Gamma$ and
	$(u/x)w$ is a proof of $D$ in context $\Gamma$. Therefore, 
	$(u/x)t$ is a proof of $A$ in context $\Gamma$.
      \item Let $t= \pi_1(v)$,
	thus $(u/x)t=\pi_1((u/x)v)$ and
	$v$ is a proof of $C\wedge D$ in the context $\Gamma,x:B$ 
	Hence, by the
	induction hypothesis,
	$(u/x)v$ is a proof of $C\wedge D$ in context $\Gamma$.
	Therefore, 
	$(u/x)t$ is a proof of $A$ in context $\Gamma$.
      \item Let $t= \pi_2(v)$. This case is  analogous to the previous case.
      \item Let $t= \inl(v)$,
	thus $(u/x)t=\inl((u/x)v)$, $A=C\vee D$, and
	$v$ is a proof of $C$ in context $\Gamma,x:B$.
	Hence, by the induction hypothesis,
	$(u/x)v$ is a proof of $C$ in context $\Gamma,y:C$. Therefore,
	$(u/x)t$ is a proof of $A$ in context $\Gamma$.
      \item Let $t= \inr(v)$. This case is  analogous to the previous case.
      \item Let $t= \elimor(v,\abstr{y_1}w_1,\abstr{y_2}w_2)$,
	thus $(u/x)t=\elimor({(u/x)v},\abstr{y_1}{(u/x)w_1},\abstr{y_2}w_2)$, 
	$v$ is a proof of $C\vee D$ in the context $\Gamma,x:B$, 
	$w_1$ is a proof of $A$ in context $\Gamma,x:B,y_1:C$, and
	$w_2$ is a proof of $A$ in context $\Gamma,x:B,y_2:D$.
	Hence, by the induction hypothesis,
	$(u/x)v$ is a proof of $C\vee D$ in context $\Gamma$,
	$(u/x)w_1$ is a proof of $A$ in context $\Gamma,y_1;C$, and
	$(u/x)w_2$ is a proof of $A$ in context $\Gamma,y_2;D$.
	Therefore, $(u/x)t$ is a proof of $A$ in context $\Gamma$.
	\qedhere
    \end{itemize}
  \end{proof}

  \begin{proof}[Proof of \cref{thm:SR}]
    By induction on the reduction relation. The inductive cases are trivial, so we
    only treat the basic cases corresponding to reduction rules~\eqref{ruelimtop}
    to~\eqref{rusuminlr}.
    \begin{enumerate}
      \item Let $t=\elimtop(\star, v)$ and $u=v$. Then, by inversion $u$ is a
	proof of $A$ in the context $\Gamma$.
      \item Let $t=(\lambda \abstr{x}v)~w$ and $u=(w/x)v$. Then $v$ is a proof
	of $A$ in the context $\Gamma,x:B$ and $w$ is a proof of $B$ in the
	context $\Gamma$. Thus, by \cref{prop:subst}, we have that
	$(w/x)v$ is a proof of $A$ in the context $\Gamma$.
      \item Let $t=\pi_1\pair{v_1}{v_2}$ and $u=v_1$.
	Then $v_1$ is a proof of $A$ in the context $\Gamma$.
      \item Let $t=\pi_2\pair{v_1}{v_2}$.
	This case is analogous to the previous case.
      \item Let $t=\elimor(\inl(v),\abstr{x}w_1,\abstr{y}w_2)$ and $u=(v/x)w_1$.
	Then $v$ is a proof of $B$ in the context $\Gamma$ and $w_1$ is a proof
	of $A$ in the context $\Gamma,x:B$. Thus, by
	\cref{prop:subst}, we have that $(v/x)w_1$ is a proof of $A$
	in the context $\Gamma$.
      \item Let $t=\elimor(\inr(v),\abstr{x}w_1,\abstr{y}w_2)$ and $u=(v/y)w_2$.
	This case is analogous to the previous case.
      \item Let $t=\elimor(\inl(v_1)\plus\inr(v_2),\abstr{x}w_1,\abstr{y}w_2)$ and $u=(v_1/x)w_1 \plus (v_2/y)w_2$.
	Then 
	$v_1$ is a proof of $B$ in the context $\Gamma$,
	$v_2$ is a proof of $C$ in the context $\Gamma$,
	$w_1$ is a proof of $A$ in the context $\Gamma,x:B$, and
	$w_2$ is a proof of $A$ in the context $\Gamma,y:C$.
	Thus, by \cref{prop:subst}, we have that
	$(v_1/x)w_1$ is a proof of $A$ in the context $\Gamma$ and 
	$(v_2/x)w_2$ is a proof of $A$ in the context $\Gamma$.
	Thus, $(v_1/x)w_1 \plus (v_2/y)w_2$ is a proof of $A$ in the context $\Gamma$.
      \item Let $t=\star \plus \star$ and $u=\star$. Then $A=\top$, and we have
	that $\star$ is a proof of $\top$ in the context $\Gamma$.
      \item Let $t=(\lambda \abstr{x}v) \plus (\lambda \abstr{x}w)$ and $u=\lambda \abstr{x}(v \plus w)$.
	Then, $A=B\Rightarrow C$,
	$v$ is a proof of $C$ in the context $\Gamma,x:B$, and
	$w$ is a proof of $C$ in the context $\Gamma,x:B$.
	Therefore, $v\plus w$ is a proof of $C$ in the context $\Gamma,x:B$,
	and so $\lambda \abstr{x}(v \plus w)$ is a proof of $A$ in the context
	$\Gamma$.
      \item Let $t=\pair{v_1}{w_1} \plus \pair{v_2}{w_2}$ and $u=\pair{v_1\plus v_2}{w_1\plus w_2}$.
	Then, $A=B\wedge C$,
	$v_1$ is a proof of $B$ in the context $\Gamma$,
	$v_2$ is a proof of $B$ in the context $\Gamma$,
	$w_1$ is a proof of $C$ in the context $\Gamma$, and
	$w_2$ is a proof of $C$ in the context $\Gamma$.
	Thus, $\pair{v_1\plus v_2}{w_1\plus w_2}$ is a proof of $A$ in the context
	$\Gamma$.
      \item Let $t=\inl(v_1) \plus \inl(v_2)$ and $u=\inl(v_1\plus v_2)$
	Then, $A=B\vee C$,
	$v_1$ is a proof of $B$ in the context $\Gamma$,
	$v_2$ is a proof of $B$ in the context $\Gamma$,
	Thus, $\inl{v_1\plus v_2}$ is a proof of $A$ in the context $\Gamma$.
      \item Let $t=\inl(v_1) \plus \inl(v_2)\plus\inr(w)$ and $u=\inl(v_1 \plus v_2)\plus\inr(w) $
	Then, $A=B\vee C$,
	$v_1$ is a proof of $B$ in the context $\Gamma$,
	$v_2$ is a proof of $B$ in the context $\Gamma$,
	$w$ is a proof of $C$ in the context $\Gamma$, and
	Thus, $\inl(v_1\plus v_2)\plus\inr(w)$ is a proof of $A$ in the context $\Gamma$.
      \item Let $t=\inr(w) \plus \inl(v)$ and $u=\inl(v)\plus\inr(w)$
	Then, $A=B\vee C$,
	$v$ is a proof of $B$ in the context $\Gamma$,
	$w$ is a proof of $C$ in the context $\Gamma$, and
	Thus, $\inl(v)\plus\inr(w)$ is a proof of $A$ in the context $\Gamma$.
      \item Let $t=\inr(w_1) \plus \inr(w_2)$ and $u=\inr(w_1 \plus w_2) $
	Then, $A=B\vee C$,
	$w_1$ is a proof of $C$ in the context $\Gamma$, and
	$w_2$ is a proof of $C$ in the context $\Gamma$.
	Thus, $\inr{w_1\plus w_2}$ is a proof of $A$ in the context $\Gamma$.
      \item Let $t=\inr(w_1) \plus \inl(v)\plus\inr(w_2)$ and $u=\inl(v)\plus\inr(w_1 \plus w_2) $
	Then, $A=B\vee C$,
	$v$ is a proof of $B$ in the context $\Gamma$,
	$w_1$ is a proof of $C$ in the context $\Gamma$, and
	$w_2$ is a proof of $C$ in the context $\Gamma$.
	Thus, $\inl(v)\plus\inr(w_1\plus w_2)$ is a proof of $A$ in the context $\Gamma$.
      \item Let $t=\inl(v_1)\plus\inr(w) \plus \inl(v_2) $ and $u= \inl(v_1 \plus v_2)\plus\inr(u)$
	Then, $A=B\vee C$,
	$v_1$ is a proof of $B$ in the context $\Gamma$,
	$v_2$ is a proof of $B$ in the context $\Gamma$,
	$w$ is a proof of $C$ in the context $\Gamma$, and
	Thus, $\inl(v_1\plus v_2)\plus\inr(w)$ is a proof of $A$ in the context $\Gamma$.
      \item Let $t=\inl(v)\plus\inr(w_1) \plus \inr(w_2) $ and $u= \inl(v)\plus\inr(w_1 \plus w_2) $
	Then, $A=B\vee C$,
	$v$ is a proof of $B$ in the context $\Gamma$,
	$w_1$ is a proof of $C$ in the context $\Gamma$, and
	$w_2$ is a proof of $C$ in the context $\Gamma$.
	Thus, $\inl(v)\plus\inr(w_1\plus w_2)$ is a proof of $A$ in the context $\Gamma$.
      \item Let $t=\inl(v_1)\plus\inr(w_1) \plus \inl(v_2)\plus\inr(w_2) $ and $u= \inl(v_1 \plus v_2)\plus\inr(w_1 \plus w_2)$
	Then, $A=B\vee C$,
	$v_1$ is a proof of $B$ in the context $\Gamma$,
	$v_2$ is a proof of $B$ in the context $\Gamma$,
	$w_1$ is a proof of $C$ in the context $\Gamma$, and
	$w_2$ is a proof of $C$ in the context $\Gamma$.
	Thus, $\inl(v_1\plus v_2)\plus\inr(w_1\plus w_2)$ is a proof of $A$ in the context $\Gamma$.
	\qedhere
    \end{enumerate}
  \end{proof}

  \subsection[Proof of Theorem~\ref{thm:SN}]{Proof of \cref{thm:SN}}

  \begin{definition}[Length of reduction]
    If $t$ is a strongly terminating proof, we write $|t|$ for the
    maximum length of a reduction sequence issued from $t$.
  \end{definition}

  \begin{proposition}[Termination of a parallel]
    \label{terminationsum}
    If $t$ and $u$ strongly terminate, then so does $t \plus u$. 
  \end{proposition}

  \begin{proof}
    We prove that all the one-step reducts of $t \plus u$ strongly
    terminate, by induction first on $|t| + |u|$ and then on the size of
    $t$.

    If the reduction takes place in $t$ or in $u$ we apply the induction
    hypothesis.
    Otherwise, the reduction is at the root and the rule used is one of
    the rules
    \eqref{rusumstar} to \eqref{rusuminlr}.

    In the case \eqref{rusumstar}, the proof $\star$ is irreducible,
    hence it strongly terminates.

    In the case \eqref{rusumlam}, $t = \lambda \abstr{x} t_1$, $u =
    \lambda \abstr{x} u_1$, by induction hypothesis, the proof $t_1 \plus
    u_1$ strongly terminates, thus so does the proof $\lambda
    \abstr{x}(t_1 \plus u_1)$.  The proof is similar for the cases
    \eqref{rusumpair} to \eqref{rusuminlr}.  \qedhere
  \end{proof}

  \begin{definition}
    \label{def:interpretation}
    We define, by induction on the proposition $A$, a set of proofs
    $\llbracket A \rrbracket$:
    \begin{itemize}
      \item $t \in \llbracket \top \rrbracket$ if $t$ strongly terminates,

      \item $t \in \llbracket \bot \rrbracket$ if $t$ strongly terminates,

      \item $t \in \llbracket A \Rightarrow B \rrbracket$ if $t$ strongly
	terminates and whenever it reduces to a proof of the form $\lambda
	\abstr{x}u$, then for every $v \in \llbracket A \rrbracket$, $(v/x)u \in
	\llbracket B \rrbracket$,

      \item $t \in \llbracket A \wedge B \rrbracket$ if $t$ strongly
	terminates and whenever it reduces to a proof of the form
	$\pair{u}{v}$, then $u \in \llbracket A \rrbracket$ and $v \in \llbracket
	B \rrbracket$,

      \item $t \in \llbracket A \vee B \rrbracket$ if $t$ strongly
	terminates and whenever it reduces to a proof of the form
	$\inl(u)$ (resp. $\inr(v)$, $\inl(u)\plus\inr(v)$), then $u \in
	\llbracket A \rrbracket$ (resp. $v \in \llbracket B \rrbracket$,
	$u \in \llbracket A \rrbracket$ and $v \in \llbracket B \rrbracket$).
    \end{itemize}
  \end{definition}

  \begin{proposition}[Variables]
    \label{Var}
    For any $A$, the set $\llbracket A \rrbracket$ contains all the variables.
  \end{proposition}

  \begin{proof}
    A variable is irreducible, hence it strongly terminates. Moreover, it
    never reduces to an introduction.
    \qedhere
  \end{proof}   

  \begin{proposition}[Closure by reduction]
    \label{closure}
    If $t \in \llbracket A \rrbracket$ and $t \longrightarrow^* t'$, then 
    $t' \in \llbracket A \rrbracket$.
  \end{proposition}

  \begin{proof}
    If $t \longrightarrow^* t'$ and $t$ strongly terminates, then $t'$
    strongly terminates.

    Furthermore, if $A$ has the form $B \Rightarrow C$ and $t'$ reduces to
    $\lambda \abstr{x}u$, then so does $t$, hence for every $v \in \llbracket B
    \rrbracket$, $(v/x)u \in \llbracket C \rrbracket$.

    If $A$ has the form $B \wedge C$ and $t'$ reduces to $\pair{u}{v}$,
    then so does $t$, hence $u \in \llbracket B \rrbracket$ and $v \in
    \llbracket C \rrbracket$.

    If $A$ has the form $B \vee C$ and $t'$ reduces to $\inl(u)$
    (resp. $\inr(v)$, $\inl(u)\plus\inr(v)$), then so does $t$, hence
    $u \in   \llbracket A \rrbracket$ (resp. $v \in \llbracket B \rrbracket$,
    $u \in   \llbracket A \rrbracket$ and $v \in \llbracket B \rrbracket$).
    \qedhere
  \end{proof}

  \begin{proposition}[Girard's lemma]
    \label{CR3}
    Let $t$ be a proof that is not an introduction, 
    such that all the one-step reducts of $t$
    are in $\llbracket A \rrbracket$. Then, $t \in \llbracket A \rrbracket$.
  \end{proposition}

  \begin{proof}
    Let $t, t_2, \ldots$ be a reduction sequence issued from $t$. If it has a
    single element, it is finite. Otherwise, we have $t \longrightarrow
    t_2$. As $t_2 \in \llbracket A \rrbracket$, it strongly terminates and
    the reduction sequence is finite. Thus, $t$ strongly terminates.

    Furthermore, if $A$ has the form $B \Rightarrow C$ and $t
    \longrightarrow^* \lambda \abstr{x}u$, then let $t , t_2, \ldots, t_n$ be a
    reduction sequence from $t$ to $\lambda \abstr{x}u$.  As $t_n$ is an
    introduction and $t$ is not, $n \geq 2$. Thus, $t \longrightarrow t_2
    \longrightarrow^* t_n$. We have $t_2 \in \llbracket A \rrbracket$,
    thus for all $v \in \llbracket B \rrbracket$, $(v/x)u \in \llbracket C
    \rrbracket$.

    If $A$ has the form $B \wedge C$ and $t \longrightarrow^* \pair{u}{v}$, then let $t , t_2, \ldots, t_n$ be a reduction sequence
    from $t$ to $\pair{u}{v}$.  As $t_n$ is an introduction and
    $t$ is not, $n \geq 2$. Thus, $t \longrightarrow t_2 \longrightarrow^*
    t_n$. We have $t_2 \in \llbracket A \rrbracket$, thus $u \in
    \llbracket B \rrbracket$ and $v \in \llbracket C \rrbracket$.

    And if $A$ has the form $B \vee C$ and $t \longrightarrow^* 
    \inl(u)$ (resp. $\inr(v)$, $\inl(u)\plus\inr(v)$)
    then
    let $t , t_2, \ldots, t_n$ be a reduction sequence from $t$ to 
    $\inl(u)$ (resp. $\inr(v)$,  $\inl(u)\plus\inr(v)$).
    As $t_n$ is an introduction and $t$ is not, $n \geq 2$. Thus, $t
    \longrightarrow t_2 \longrightarrow^* t_n$. We have $t_2 \in
    \llbracket A \rrbracket$, thus 
    $u \in \llbracket B \rrbracket$ (resp. $v \in \llbracket C \rrbracket$, 
    $u \in \llbracket B \rrbracket$ and $v \in \llbracket C \rrbracket$).
    \qedhere
  \end{proof}

  In \cref{sum,star,abstraction,pair,inl,inr,elimtop,elimbot,application,elimand1,elimand2,elimor}, we prove the adequacy of each proof constructor.

  \begin{proposition}[Adequacy of $\plus$]
    \label{sum}
    If $t_1 \in \llbracket A \rrbracket$ and $t_2 \in \llbracket A
    \rrbracket$, then $t_1 \plus t_2 \in \llbracket A \rrbracket$.
  \end{proposition}

  \begin{proof}

    By induction on $A$.

    The proofs $t_1$ and $t_2$ strongly terminate.  Thus, by
    \cref{terminationsum}, the proof $t_1 \plus t_2$ strongly
    terminates.

    Furthermore:
    \begin{itemize}
      \item If the proposition $A$ has the form $\top$ or $\bot$, then
	the strong termination of $t_1 \plus t_2$ is sufficient to conclude
	that $t_1 \plus t_2 \in \llbracket A \rrbracket$.

      \item If the proposition $A$ has the form $B \Rightarrow C$, and $t_1
	\plus t_2 \lra^* \lambda \abstr{x} v$ then $t_1 \lra^* \lambda
	\abstr{x} u_1$, $t_2 \lra^* \lambda \abstr{x} u_2$, and $u_1 \plus
	u_2 \lra^* v$.

	As $t_1 \in \llbracket B \Rightarrow C \rrbracket$ and $t_1 \lra^*
	\lambda \abstr{x} u_1$, we have, for every $w$ in $\llbracket B
	\rrbracket$, $(w/x)u_1 \in \llbracket C \rrbracket$.  In the same
	way, for every $w$ in $\llbracket B \rrbracket$, $(w/x)u_2 \in
	\llbracket C \rrbracket$.  By induction hypothesis, $(w/x)(u_1 \plus
	u_2) = (w/x)u_1 \plus (w/x)u_2 \in \llbracket C \rrbracket$ and by
	\cref{closure}, $(w/x)v \in \llbracket C \rrbracket$.

      \item If the proposition $A$ has the form $B \wedge C$, and $t_1
	\plus t_2 \lra^* \pair{v}{v'}$ then
	$t_1 \lra^* \pair{u_1}{u'_1}$,
	$t_2 \lra^* \pair{u_2}{u'_2}$, $u_1 \plus u_2 \lra^* v$, and $u'_1
	\plus u'_2 \lra^* v'$.

	As $t_1 \in \llbracket B \wedge C \rrbracket$ and $t_1 \lra^*
	\pair{u_1}{u'_1}$, we have $u_1 \in \llbracket B \rrbracket$
	and $u'_1 \in \llbracket C \rrbracket$. In the same way, 
	we have $u_2 \in \llbracket B \rrbracket$
	and $u'_2 \in \llbracket C \rrbracket$. 
	By induction hypothesis, $u_1 \plus u_2 \in \llbracket B \rrbracket$
	and $u'_1 \plus u'_2 \in \llbracket C \rrbracket$, and, by
	\cref{closure}, $v \in \llbracket B \rrbracket$ and $v' \in
	\llbracket C \rrbracket$.

      \item If the proposition $A$ has the form $B \vee C$, and $t_1 \plus
	t_2 \lra^* \inl(v)$ then $t_1 \lra^* \inl(u_1)$, $t_2 \lra^*
	\inl(u_2)$, and $u_1 \plus u_2 \lra^* v$.  As $t_1 \in \llbracket B
	\vee C \rrbracket$ and $t_1 \lra^* \inl(u_1)$, we have $u_1 \in
	\llbracket B \rrbracket$. In the same way, $u_2 \in \llbracket B
	\rrbracket$.  By induction hypothesis, $u_1 \plus u_2 \in \llbracket
	B \rrbracket$ and, by \cref{closure}, $v \in \llbracket B
	\rrbracket$.

      \item If the proposition $A$ has the form $B \vee C$, and $t_1 \plus
	t_2 \lra^* \inr(v)$ the proof is similar.

      \item
	If the proposition $A$ has the form $B \vee C$, and $t_1 \plus
	t_2 \lra^* \inl(v)\plus\inr(v')$ then, either
	\begin{itemize}
	  \item
	    $t_1 \lra^* \inl(u_1)$, $t_2 \lra^* \inr(u'_2)$, and
	    $u_1 \lra^* v$ and  $u'_2 \lra^* v'$;

	  \item
	    $t_1 \lra^* \inl(u_1)$, $t_2 \lra^* \inl(u_2)\plus\inr(u'_2)$, and
	    $u_1 \plus u_2 \lra^* v$ and  $u'_2 \lra^* v'$;

	  \item
	    $t_1 \lra^* \inr(u'_1)$, $t_2 \lra^* \inl(u_2)$, and
	    $u_2 \lra^* v$ and  $u'_1 \lra^* v'$;

	  \item
	    $t_1 \lra^* \inr(u'_1)$, $t_2 \lra^* \inl(u_2)\plus\inr(u'_2)$, and
	    $u_2 \lra^* v$ and  $u'_1 \plus u'_2 \lra^* v'$;

	  \item
	    $t_1 \lra^* \inl(u_1)\plus\inr(u'_1)$, $t_2 \lra^* \inl(u_2)$, and
	    $u_1 \plus u_2 \lra^* v$ and  $u'_1 \lra^* v'$;

	  \item
	    $t_1 \lra^* \inl(u_1)\plus\inr(u'_1)$, $t_2 \lra^* \inr(u'_2)$, and
	    $u_1 \lra^* v$ and  $u'_1 \plus u'_2 \lra^* v'$;

	  \item
	    or $t_1 \lra^* \inl(u_1)\plus\inr(u'_1)$, $t_2 \lra^* \inl(u_2)\plus\inr(u'_2)$, and
	    $u_1 \plus u_2 \lra^* v$ and  $u'_1 \plus u'_2 \lra^* v'$.
	\end{itemize}

	As these seven cases are similar, we consider only the last one.

	As $t_1 \in \llbracket B \vee C \rrbracket$ and $t_1 \lra^*
	\inl(u_1)\plus\inr(u'_1)$, we have 
	$u_1 \in \llbracket B \rrbracket$
	and
	$u'_1 \in \llbracket C \rrbracket$.
	In the same way, 
	$u_2 \in \llbracket B \rrbracket$
	and
	$u'_2 \in \llbracket C \rrbracket$.
	By induction hypothesis,
	$u_1 \plus u_2 \in \llbracket B \rrbracket$
	and
	$u'_1 \plus u'_2 \in \llbracket C \rrbracket$.
	As $u_1 \plus u_2 \lra^* v$ and $u'_1\plus u'_2\to v'$,
	we get by  \cref{closure}, 
	$v \in \llbracket B \rrbracket$ and
	$v' \in \llbracket C \rrbracket$.
	\qedhere
    \end{itemize}
  \end{proof}

  \begin{proposition}[Adequacy of $\star$]
    \label{star}
    We have $\star \in \llbracket \top \rrbracket$.
  \end{proposition}

  \begin{proof}
    As $\star$ is irreducible, it strongly terminates, hence
    $\star \in \llbracket \top \rrbracket$. 
    \qedhere
  \end{proof}

  \begin{proposition}[Adequacy of $\lambda$]
    \label{abstraction}
    If, for all $u \in \llbracket A \rrbracket$, $(u/x)t \in \llbracket B
    \rrbracket$, then $\lambda \abstr{x}t \in \llbracket A \Rightarrow B
    \rrbracket$.
  \end{proposition}

  \begin{proof}
    By \cref{Var}, $x \in \llbracket A \rrbracket$, thus
    $t = (x/x)t \in \llbracket B \rrbracket$. Hence, $t$ strongly
    terminates.  Consider a reduction sequence issued from $\lambda
    \abstr{x}t$.  This sequence can only reduce $t$ hence it is finite. Thus,
    $\lambda \abstr{x}t$ strongly terminates.

    Furthermore, if $\lambda \abstr{x}t \longrightarrow^* \lambda \abstr{x}t'$, then
    $t \lra^* t'$.  Let $u \in \llbracket A \rrbracket$,
    $(u/x)t \lra^* (u/x)t'$. 
    By \cref{closure}, $(u/x)t' \in \llbracket B \rrbracket$.
    \qedhere
  \end{proof}

  \begin{proposition}[Adequacy of $\pair{}{}$]
    \label{pair}
    If $t_1 \in \llbracket A \rrbracket$ and $t_2 \in \llbracket B
    \rrbracket$, then $\pair{t_1}{t_2} \in \llbracket A \wedge B
    \rrbracket$.
  \end{proposition}

  \begin{proof}
    The proofs $t_1$ and $t_2$ strongly terminate. Consider a reduction
    sequence issued from $\pair{t_1}{t_2}$.  This sequence can only
    reduce $t_1$
    and $t_2$, hence it is finite.  Thus, $\pair{t_1}{t_2}$
    strongly terminates.

    Furthermore, if $\pair{t_1}{t_2} \longrightarrow^* \pair
    {t'_1}{t'_2}$, then $t_1 \lra^* t'_1$ and $t_2 \lra^* t'_2$.  By
    \cref{closure}, $t'_1 \in \llbracket A \rrbracket$ and
    $t'_2 \in \llbracket B \rrbracket$.
    \qedhere
  \end{proof}

  \begin{proposition}[Adequacy of $\inl$]
    \label{inl}
    If $t \in \llbracket A \rrbracket$, then $\inl(t) \in \llbracket A
    \vee B \rrbracket$.
  \end{proposition}

  \begin{proof}
    The proof $t$ strongly terminates. Consider a reduction
    sequence issued from $\inl(t)$.  This sequence can only
    reduce $t$, hence it is finite.  Thus, $\inl(t)$
    strongly terminates.

    Furthermore, if $\inl(t) \longrightarrow^* \inl
    (t')$, then $t \lra^* t'$. By 
    \cref{closure}, $t' \in \llbracket A \rrbracket$.
    And the proof $\inl(t)$ never reduces to a proof of the form $\inr(t'_2)$ or
    $\inl(t'_1)\plus\inr(t'_2)$.
    \qedhere
  \end{proof}

  \begin{proposition}[Adequacy of $\inr$]
    \label{inr}
    If $t \in \llbracket B
    \rrbracket$, then $\inr(t) \in \llbracket A \vee B
    \rrbracket$.
  \end{proposition}

  \begin{proof}
    Similar to that of \cref{inl}.
    \qedhere
  \end{proof}

  \begin{proposition}[Adequacy of $\elimtop$]
    \label{elimtop}
    If $t_1 \in \llbracket \top \rrbracket$ and $t_2 \in \llbracket C \rrbracket$, 
    then $\elimtop(t_1,t_2) \in \llbracket C \rrbracket$.
  \end{proposition}

  \begin{proof}
    The proofs $t_1$ and $t_2$ strongly terminate.  We prove, by
    induction on $|t_1| + |t_2|$, that $\elimtop(t_1,t_2)
    \in \llbracket C \rrbracket$.  Using \cref{CR3}, we only
    need to prove that every of its one step reducts is in $\llbracket C
    \rrbracket$.  If the reduction takes place in $t_1$ or $t_2$, then we
    apply \cref{closure} and the induction hypothesis.

    Otherwise, the proof $t_1$ is $\star$ and the
    reduct is $t_2$. 
    \qedhere
  \end{proof}

  \begin{proposition}[Adequacy of $\elimbot{C}$]
    \label{elimbot}
    If $t \in \llbracket \bot \rrbracket$, 
    then $\elimbot{C}(t) \in \llbracket C \rrbracket$.
  \end{proposition}

  \begin{proof}
    The proof $t$ strongly terminates.  Consider a reduction sequence
    issued from $\elimbot{C}(t)$.  This sequence can only reduce $t$, hence it
    is finite.  Thus, $\elimbot{C}(t)$ strongly terminates.  Moreover, it
    never reduces to an introduction.
    \qedhere
  \end{proof}

  \begin{proposition}[Adequacy of application]
    \label{application}
    If $t_1 \in \llbracket A \Rightarrow B \rrbracket$ and $t_2 \in
    \llbracket A \rrbracket$, then $t_1~t_2 \in \llbracket B
    \rrbracket$.
  \end{proposition}

  \begin{proof}
    The proofs $t_1$ and $t_2$ strongly terminate. We prove, by induction
    on $|t_1| + |t_2|$, that $t_1~t_2 \in \llbracket B \rrbracket$. Using
    \cref{CR3}, we only need to prove that every of its one
    step reducts is in $\llbracket B \rrbracket$.  If the reduction takes
    place in $t_1$ or in $t_2$, then we apply \cref{closure}
    and the induction hypothesis.

    Otherwise, the proof $t_1$ has the form $\lambda \abstr{x}u$ and the reduct
    is $(t_2/x)u$.  As $\lambda \abstr{x}u \in \llbracket A \Rightarrow B
    \rrbracket$, we have $(t_2/x)u \in \llbracket B \rrbracket$.
    \qedhere
  \end{proof}

  \begin{proposition}[Adequacy of $\pi_1$]
    \label{elimand1}
    If $t \in \llbracket A \wedge B \rrbracket$ 
    then $\pi_1(t) \in \llbracket A \rrbracket$.
  \end{proposition}

  \begin{proof}
    The proof $t$ strongly terminates. 
    We prove, by induction on $|t|$, that $\pi_1(t) \in \llbracket A \rrbracket$.  
    Using \cref{CR3}, we only
    need to prove that every of its one step reducts is in $\llbracket A
    \rrbracket$.  If the reduction takes place in $t$, then we
    apply \cref{closure} and the induction hypothesis.

    Otherwise, the proof $t$ has the form $\pair{u}{v}$ and the
    reduct is $u$.  As $\pair{u}{v} \in \llbracket A
    \wedge B \rrbracket$, we have $u \in \llbracket A \rrbracket$.
    \qedhere
  \end{proof}

  \begin{proposition}[Adequacy of $\pi_2$]
    \label{elimand2}
    If $t \in \llbracket A \wedge B \rrbracket$ and,
    then $\pi_2(t) \in \llbracket B \rrbracket$.
  \end{proposition}

  \begin{proof}
    Similar to that of \cref{elimand1}.
    \qedhere
  \end{proof}

  \begin{proposition}[Adequacy of $\elimor$]
    \label{elimor}
    If $t_1 \in \llbracket A \vee B \rrbracket$, for all $u$ in $\llbracket A
    \rrbracket$, $(u/x)t_2 \in \llbracket C \rrbracket $, and, for all $v$
    in $\llbracket B \rrbracket$, $(v/y)t_3 \in \llbracket C \rrbracket $,
    then $\elimor(t_1, \abstr{x}t_2, \abstr{y}t_3) \in \llbracket C \rrbracket$.
  \end{proposition}

  \begin{proof}
    By \cref{Var}, $x \in \llbracket A \rrbracket$, thus $t_2 =
    (x/x)t_2 \in \llbracket C \rrbracket$. In the same way, $t_3 \in
    \llbracket C \rrbracket$.  Hence, $t_1$, $t_2$, and $t_3$ strongly
    terminate.  We prove, by induction on $|t_1| + |t_2| + |t_3|$, that
    $\elimor(t_1, \abstr{x}t_2, \abstr{y}t_3) \in \llbracket C
    \rrbracket$.  Using \cref{CR3}, we only need to prove that
    every of its one step reducts is in $\llbracket C \rrbracket$.  If the
    reduction takes place in $t_1$, $t_2$, or $t_3$, then we apply
    \cref{closure} and the induction hypothesis.

    Otherwise, the proof $t_1$ has the form $\inl(w_2)\plus\inr(w_3)$
    and the reduct is $(w_2/x)t_2 \plus (w_3/x)t_3$,
    or the proof $t_1$ has the form
    $\inl(w_2)$ and the reduct is $(w_2/x)t_2$, or the proof $t_1$ has the form
    $\inr(w_3)$ and the reduct is $(w_3/x)t_3$.

    In the first case, as $\inl(w_2)\plus\inr(w_3) \in \llbracket A \vee B \rrbracket$ we
    have $w_2 \in \llbracket A \rrbracket$ and $w_3 \in \llbracket B \rrbracket$.
    Hence, $(w_2/x)t_2 \in \llbracket C \rrbracket$ and $(w_3/y)t_3 \in
    \llbracket C \rrbracket$.  Moreover, by \cref{sum}, $(w_2/x)t_2
    \plus (w_3/x)t_3 \in \llbracket C \rrbracket$.

    In the second, as $\inl(w_2) \in \llbracket A \vee B \rrbracket$, we
    have $w_2 \in \llbracket A \rrbracket$.  Hence, $(w_2/x)t_2 \in
    \llbracket C \rrbracket$.

    In the third, as $\inr(w_3) \in \llbracket A \vee B \rrbracket$, we
    have $w_3 \in \llbracket B \rrbracket$.  Hence, $(w_3/x)t_3 \in
    \llbracket C \rrbracket$.
    \qedhere
  \end{proof}

  \begin{proposition}[Adequacy]
    \label{prop:adequacy}
    Let $t$ be a proof of $A$ in a context $\Gamma = x_1:A_1, \ldots, x_n:A_n$ and
    $\sigma$ be a substitution mapping each variable $x_i$ to an element
    of $\llbracket A_i \rrbracket$, then $\sigma t \in \llbracket A
    \rrbracket$.
  \end{proposition}

  \begin{proof}
    By induction on the structure of $t$.

    If $t$ is a variable, then, by definition of $\sigma$, $\sigma t \in
    \llbracket A \rrbracket$.  For the other proof constructors, we use
    the \cref{sum,star,abstraction,pair,inl,inr,elimtop,elimbot,application,elimand1,elimand2,elimor}.
    \qedhere
  \end{proof}

  \begin{proof}[Proof of \cref{thm:SN}]
    Let $\sigma$ be the substitution mapping each variable $x_i:A_i$ of
    $\Gamma$ to
    itself. Note that, by \cref{Var}, this variable is an
    element of $\llbracket A_i \rrbracket$.  Then, $t = \sigma t$ is an
    element of $\llbracket A \rrbracket$. Hence, it strongly terminates.
    \qedhere
  \end{proof}

  \subsection[Proof of Theorem~\ref{thm:IP}]{Proof of \cref{thm:IP}}
  \begin{proof}[Proof of \cref{thm:IP}]
    Let $t$ be a closed irreducible proof of some proposition $A$. We prove,
    by induction on the structure of $t$ that $t$ is an introduction.

    As the proof $t$ is closed, it is not a variable.

    It cannot be a sum $u \plus v$, as if it were $u$ and $v$ would be
    closed irreducible proofs of the same proposition, hence, by induction
    hypothesis, they would either be both introductions of $\top$, both
    introductions of $\Rightarrow$, both introductions of $\wedge$, or
    both introductions of $\vee$, and the proof $t$ would be reducible.

    It cannot be an elimination as if it were of the form $\elimtop(u,v)$,
    $u~v$, $\pi_1(u)$, $\pi_2(u)$, or
    $\elimor(u,\abstr{x}v,\abstr{y}w)$, then $u$ would a closed
    irreducible proof, hence, by induction hypothesis, it would an
    introduction and the proof $t$ would be reducible.  If it were
    $\elimbot(u)$, then $u$ would be a closed irreducible proof of $\bot$,
    by induction hypothesis, it would be an introduction of $\bot$ and no
    such introduction rule exists.

    Hence, it is an introduction. \qedhere
  \end{proof}

  \section[Full proofs from Section~\ref{sec:model}]{Full proofs from \cref{sec:model}}\label{B}
  \subsection[Lemma~\ref{lem:substitution}]{\cref{lem:substitution}\label{B.1}}
  \restate{Lemma}{lem:substitution}{Substitution}{
    If $x:A,\Gamma\vdash t:B$ and $\Gamma\vdash u:A$, then
    the following diagram commutes.
  }
  \[
    \begin{tikzcd}[column sep=1cm]
      \sem\Gamma\ar[r,"(u/x)t"]\ar[d,"\Delta"'] & \sem B \\
      \sem\Gamma\times\sem\Gamma\ar[r,"u\times\Id"'] & \sem A\times\sem\Gamma\ar[u,"t"']
    \end{tikzcd}
  \]
  \begin{proof}
    By induction on $t$.
    \begin{itemize}
      \item Case $t = x$.
	Then $B=A$, $(u/x)t=u$. The diagram is
	\begin{center}
	  \begin{tikzcd}[column sep=1.5cm]
	    \Gamma\ar[r,"u"]\ar[d,"\Delta"'] &  A \\
	    \Gamma\times\Gamma\ar[r,"u\times\Id"'] &  A\times\Gamma\ar[u,"\pi_1"']
	  \end{tikzcd}
	\end{center}
	Straightforward.
      \item Case $t = y$. Then $(u/x)t=y$, $\Gamma=y:B,\Xi$. The diagram is
	\begin{center}
	  \begin{tikzcd}[column sep=1.5cm]
	    B\times \Xi\ar[r,"\pi_{  B}"]\ar[d,"\Delta"'] &   B \\
	    (  B\times \Xi)\times(  B\times \Xi)\ar[r,"u\times\Id"'] &   A\times(  B\times \Xi)\ar[u,"\pi_{  B}"']
	  \end{tikzcd}
	\end{center}
	Straightforward.
      \item Case $t = t_1 \plus t_2$. Then $(u/x)t = (u/x)t_1\plus(u/x)t_2$. The diagram is
	\begin{center}
	  \begin{tikzcd}[column sep=2.5cm,row sep=5mm]
	    \Gamma\ar[rrr,"(u/x)t_1\plus(u/x)t_2"]\ar[ddd,"\Delta"']\ar[rd,"\Delta",dashed,sloped] &[-1.5cm] & &[-1.5cm]   B \\
	    &  \Gamma\times \Gamma\ar[d,"\Delta\times\Delta"',dashed]\ar[r,"(u/x)t_1\times(u/x)t_2",dashed] &   B\times  B\ar[ur,"\sumhat{  B}",dashed,sloped] & \\[5mm]
	    &  \Gamma\times \Gamma\times \Gamma\times \Gamma\ar[r,"u\times\Id\times u\times\Id"',dashed] &   A\times \Gamma\times  A\times \Gamma\ar[u,"t_1\times t_2"',dashed] & \\
	    \Gamma\times \Gamma\ar[rrr,"u\times\Id"']\ar[ru,"\Delta",dashed,sloped] & & &   A\times \Gamma\ar[uuu,"t_1\plus t_2"']\ar[ul,"\Delta",dashed,sloped]
	  \end{tikzcd}
	\end{center}
	\begin{itemize}
	  \item The upper and right diagrams commute by definition.
	  \item The left diagram commutes by coherence.
	  \item The central diagram commutes by induction hypothesis and functoriality of $\times$.
	  \item The bottom diagram commutes by naturality of $\Delta$.
	\end{itemize}
      \item Case $t = \star$. Then $(u/x)t = \star$. The diagram is
	\begin{center}
	  \begin{tikzcd}[column sep=1.5cm]
	    \Gamma\ar[r,"!"]\ar[d,"\Delta"'] &  \{\star\} \\
	    \Gamma\times\Gamma\ar[r,"u\times\Id"'] &  A\times\Gamma\ar[u,"!"']
	  \end{tikzcd}
	\end{center}
	Straightforward, since $\{\star\}$ is terminal.
      \item Case $t = \elimtop(t_1,t_2)$. Then $(u/x)t = \elimtop((u/x)t_1,(u/x)t_2)$. The diagram is
	\begin{center}
	  \begin{tikzcd}[column sep=2.5cm,row sep=5mm]
	    \Gamma\ar[rrr,"{\elimtop((u/x)t_1,(u/x)t_2)}"]\ar[ddd,"\Delta"']\ar[rd,"\Delta",dashed,sloped] &[-1.5cm] & &[-1.5cm]   B \\
	    &  \Gamma\times \Gamma\ar[d,"\Delta\times\Delta"',dashed]\ar[r,"(u/x)t_1\times(u/x)t_2",dashed] &   \{\star\}\times  B\ar[ur,"\pi_2",dashed,sloped] & \\[5mm]
	    &  \Gamma\times \Gamma\times \Gamma\times \Gamma\ar[r,"u\times\Id\times u\times\Id"',dashed] &   A\times \Gamma\times  A\times \Gamma\ar[u,"t_1\times t_2"',dashed] & \\
	    \Gamma\times \Gamma\ar[rrr,"u\times\Id"']\ar[ru,"\Delta",dashed,sloped] & & &   A\times \Gamma\ar[uuu,"{\elimtop(t_1,t_2)}"']\ar[ul,"\Delta",dashed,sloped]
	  \end{tikzcd}
	\end{center}
	\begin{itemize}
	  \item The upper and right diagrams commute by definition.
	  \item The left diagram commutes by coherence.
	  \item The central diagram commutes by induction hypothesis and functoriality of $\times$.
	  \item The bottom diagram commutes by naturality of $\Delta$.
	\end{itemize}
      \item Case $t = \elimbot(t')$. Then $(u/x)t = \elimbot((u/x)t')$. The diagram is
	\begin{center}
	  \begin{tikzcd}[column sep=1.5cm]
	    \Gamma\ar[rd,"(u/x)t'",dashed,sloped]\ar[rr,"{\elimbot((u/x)t')}"]\ar[dd,"\Delta"'] & & B \\
	    & \emptyset\ar[ru,"\emptyset",dashed,sloped] \\
	    \Gamma\times\Gamma\ar[rr,"u\times\Id"'] & & A\times\Gamma\ar[uu,"{\elimbot(t')}"']\ar[ul,"t'",dashed,sloped]
	  \end{tikzcd}
	\end{center}
	\begin{itemize}
	  \item The upper and right diagrams commute by definition.
	  \item The left diagram commutes by the induction hypothesis.
	\end{itemize}
      \item Case $t = \lambda \abstr{y}t'$. Then $B=C\Rightarrow D$, and $(u/x)t = \lambda \abstr{y}(u/x)t'$.
	Notice that since $\Gamma\vdash u:A$, we also have $\Gamma,y:C\vdash u:A$. We write $u_\Gamma = \sem{\Gamma\vdash u:A}$ and $u_{C\times\Gamma} = \sem{C\times\Gamma\vdash u:A} = u_\Gamma\circ\pi_2$.

	The diagram is
	\begin{center}
	  \begin{tikzcd}[column sep=1.5cm]
	    \Gamma\ar[rr,"\lambda\abstr{y}(u/x)t'"]\ar[rd,"\eta^C",dashed,sloped]\ar[ddddd,"\Delta"'] &&  \home{C}{D} \\
	    &\home C{C\times\Gamma}\ar[ru,"\home C{(u/x)t'}",dashed,sloped]\ar[d,"\home C\Delta"',dashed]& \\
	    &\home C{C\times\Gamma\times C\times\Gamma}\ar[dd,"\home C{u_{C\times\Gamma}\times\Id}"',dashed]& \\
	    &\\
	    &\home C{A\times C\times\Gamma}\ar[uuuur,"\home C{t'}",dashed,sloped,pos=0.6]& \\
	    \Gamma\times\Gamma\ar[rr,"u\times\Id"']\ar[uuur,"\eta^C",dashed,sloped,bend left] &&  C\times\Gamma\ar[uuuuu,"\lambda y.t'"']\ar[ul,"\eta^C",dashed,sloped]
	  \end{tikzcd}
	\end{center}
	\begin{itemize}
	  \item The upper and right-bottom diagrams commute by definition.
	  \item The left and bottom diagrams commute by naturality of $\eta^C$.
	  \item The right-top diagram commutes by induction hypothesis and the functoriality of the hom.
	\end{itemize}
      \item Case $t = t_1~t_2$. Then $(u/x)t = (u/x)t_1~(u/x)t_2$. The diagram is
	\begin{center}
	  \begin{tikzcd}[column sep=2.5cm,row sep=5mm]
	    \Gamma\ar[rrr,"{(u/x)t_1~(u/x)t_2}"]\ar[ddd,"\Delta"']\ar[rd,"\Delta",dashed,sloped] &[-1.5cm] & &[-1.5cm]   B \\
	    &  \Gamma\times \Gamma\ar[d,"\Delta\times\Delta"',dashed]\ar[r,"(u/x)t_1\times(u/x)t_2",dashed] &   \home CB\times C\ar[ur,"\varepsilon",dashed,sloped] & \\[5mm]
	    &  \Gamma\times \Gamma\times \Gamma\times \Gamma\ar[r,"u\times\Id\times u\times\Id"',dashed] &   A\times \Gamma\times  A\times \Gamma\ar[u,"t_1\times t_2"',dashed] & \\
	    \Gamma\times \Gamma\ar[rrr,"u\times\Id"']\ar[ru,"\Delta",dashed,sloped] & & &   A\times \Gamma\ar[uuu,"{t_1~t_2}"']\ar[ul,"\Delta",dashed,sloped]
	  \end{tikzcd}
	\end{center}
	\begin{itemize}
	  \item The upper and right diagrams commute by definition.
	  \item The left diagram commutes by coherence.
	  \item The central diagram commutes by induction hypothesis and functoriality of $\times$.
	  \item The bottom diagram commutes by naturality of $\Delta$.
	\end{itemize}
      \item Case $t = \pair{t_1}{t_2}$. Then $B=B_1\wedge B_2$ and $(u/x)t = \pair{(u/x)t_1}{(u/x)t_2}$. The diagram is
	\begin{center}
	  \begin{tikzcd}[column sep=1.5cm,row sep=5mm]
	    \Gamma\ar[rr,"{\pair{(u/x)t_1}{(u/x)t_2}}"]\ar[rd,"\Delta",dashed,sloped]\ar[dddd,"\Delta"'] &&  B_1\times B_2 \\
	    &\Gamma\times\Gamma\ar[ru,"(u/x)t_1\times(u/x)t_2",dashed,sloped]\ar[d,"\Delta\times\Delta"',dashed]& \\[5mm]
	    &{\Gamma\times\Gamma\times\Gamma\times\Gamma}\ar[d,"u\times\Id\times u\times\Id"',dashed]& \\[5mm]
	    &A\times\Gamma\times A\times\Gamma\ar[uuur,"t_1\times t_2",dashed,sloped]& \\
	    \Gamma\times\Gamma\ar[rr,"u\times\Id"']\ar[uur,"\Delta",dashed,sloped,bend left] &&  C\times\Gamma\ar[uuuu,"\pair{t_1}{t_2}"']\ar[ul,"\Delta",dashed,sloped]
	  \end{tikzcd}
	\end{center}
	\begin{itemize}
	  \item The upper and right-bottom diagrams commute by definition.
	  \item The left and bottom diagrams commute by naturality of $\Delta$.
	  \item The right-top diagram commutes by induction hypothesis and the functoriality of the $\times$.
	\end{itemize}
      \item Case $t = \pi_1(t')$. Then $(u/x)t = \pi_1((u/x)t')$. The diagram is
	\begin{center}
	  \begin{tikzcd}[column sep=1.5cm]
	    \Gamma\ar[rd,"(u/x)t'",dashed,sloped]\ar[rr,"{\pi_1((u/x)t')}"]\ar[dd,"\Delta"'] & & B \\
	    & B\times C\ar[ru,"\pi_1",dashed,sloped] \\
	    \Gamma\times\Gamma\ar[rr,"u\times\Id"'] & & A\times\Gamma\ar[uu,"{\pi_1(t')}"']\ar[ul,"t'",dashed,sloped]
	  \end{tikzcd}
	\end{center}
	\begin{itemize}
	  \item The upper and right diagrams commute by definition.
	  \item The left diagram commutes by the induction hypothesis.
	\end{itemize}
      \item Case $t = \pi_2(t')$. This case is analogous to the previous one.
      \item Case $t = \inl(t')$. Then $B=B_1\cp B_2$ and $(u/x)t = \inl((u/x)t')$. The diagram is
	\begin{center}
	  \begin{tikzcd}[column sep=1.5cm]
	    \Gamma\ar[rd,"(u/x)t'",dashed,sloped]\ar[rr,"{\inl((u/x)t')}"]\ar[dd,"\Delta"'] & & B_1\cp B_2 \\
	    & B_1\ar[ru,"i_1",dashed,sloped] \\
	    \Gamma\times\Gamma\ar[rr,"u\times\Id"'] & & A\times\Gamma\ar[uu,"{t}"']\ar[ul,"t'",dashed,sloped]
	  \end{tikzcd}
	\end{center}
	\begin{itemize}
	  \item The upper and right diagrams commute by definition.
	  \item The left diagram commutes by the induction hypothesis.
	\end{itemize}
      \item Case $t = \inr(t')$. This case is analogous to the previous one.
      \item Case $t = \elimor(t_1,\abstr{y}t_2,\abstr{z}t_3)$. Then $(u/x)t$ is equal to $\elimor((u/x)t_1,\abstr{y}(u/x)t_2,\abstr{z}(u/x)t_3)$. The diagram is given in \cref{fig:substitutionelimor}.
	\begin{figure}[th!]
	  \[
	    \begin{tikzcd}[labels=description,column sep=12mm,row sep=1.2cm,
		execute at end picture={
		  \path (\tikzcdmatrixname-1-1) -- (\tikzcdmatrixname-1-3) coordinate[pos=0.4](aux1)
		  (\tikzcdmatrixname-2-1) -- (\tikzcdmatrixname-2-3) coordinate[pos=0.4](aux2)
		  (\tikzcdmatrixname-2-1) -- (\tikzcdmatrixname-8-1) coordinate[pos=0.5](aux3)
		  (aux3) -- (\tikzcdmatrixname-5-1) node[red,midway,xshift=-5mm]{\small ({1})}
		  (aux1) -- (aux2) node[red,midway]{\small ({2})}
		  (aux2) -- (\tikzcdmatrixname-3-2) node[red,midway,xshift=-2mm]{\small ({3})}
		  (aux3) -- (\tikzcdmatrixname-4-2) node[red,midway]{\small ({4})}
		  (aux3) -- (\tikzcdmatrixname-6-2) node[red,midway]{\small ({5})}
		  (\tikzcdmatrixname-5-1) -- (\tikzcdmatrixname-8-1) coordinate[midway](aux4)
		  (aux4) -- (\tikzcdmatrixname-8-2) node[red,midway]{\small ({6})}
		  (\tikzcdmatrixname-8-1) -- (\tikzcdmatrixname-9-3) node[red,pos=0.4]{\small ({7})}
		  (\tikzcdmatrixname-3-2) -- (\tikzcdmatrixname-3-3) node[red,midway]{\small ({8})}
		  (\tikzcdmatrixname-3-2) -- (\tikzcdmatrixname-5-3) node[red,pos=0.5,xshift=3mm]{\small ({9})}
		  (\tikzcdmatrixname-3-2) -- (\tikzcdmatrixname-5-3) node[red,pos=0.7,xshift=-1.4cm]{\small ({10})}
		  (\tikzcdmatrixname-4-2) -- (\tikzcdmatrixname-8-3) node[red,midway]{\small ({11})}
		  (\tikzcdmatrixname-6-2) -- (\tikzcdmatrixname-7-3) node[red,pos=0.3]{\small ({12})}
		  (\tikzcdmatrixname-6-2) -- (\tikzcdmatrixname-7-2) node[red,midway]{\small ({13})}
		  (\tikzcdmatrixname-7-2) -- (\tikzcdmatrixname-9-3) coordinate[midway](aux5)
		  (aux5) -- (\tikzcdmatrixname-8-3) node[red,pos=0.7]{\small ({14})}
		  (aux5) -- (\tikzcdmatrixname-8-2) node[red,midway]{\small ({15})}
		  (\tikzcdmatrixname-2-3) -- (\tikzcdmatrixname-3-3) node[red,midway,xshift=1cm]{\small ({16})}
		  (\tikzcdmatrixname-3-3) -- (\tikzcdmatrixname-5-3) node[red,midway,xshift=1.1cm]{\small ({17})}
		  (\tikzcdmatrixname-7-3) -- (\tikzcdmatrixname-8-3) node[red,midway,xshift=1cm]{\small ({18})}
		  ;
		}
	      ]
	      \Gamma
	      \arrow["\Delta", from=1-1, to=9-1,
		rounded corners,
		to path={[pos=0.75]
		  -- ([xshift=-5mm]\tikztostart.west)
		  -| ([xshift=-1cm]\tikztotarget.west)\tikztonodes
		-- (\tikztotarget.west)}
	      ]
	      &[2mm] &[-1mm] B \\
	      {(C\cp D)\times\Gamma} && {(C\times\Gamma)\cp(D\times\Gamma)} \\
	      & {
		\begin{array}{c}
		  (C\times(\Gamma\times\Gamma)) \\
		  \cp                           \\
		  (D\times(\Gamma\times\Gamma))
		\end{array}
	      } & {
		\begin{array}{c}
		  ((C\times\Gamma)\times(C\times\Gamma)) \\
		  \cp                                    \\
		  ((D\times\Gamma)\times(D\times\Gamma))
		\end{array}
	      }
	      \arrow["{(u_{C\times\Gamma}\times\Id)\cplabel(u_{C\times\Gamma}\times\Id)}", sloped,out=-45,in=45,looseness=0.9, dashed, from=3-3, to=5-3]
	      \\
	      & {(C\cp D)\times(\Gamma\times\Gamma)} & {
		\begin{array}{c}
		  (\Gamma\times(C\times\Gamma)) \\
		  \cp                           \\
		  (\Gamma\times(D\times\Gamma))
		\end{array}
	      } \\
	      {(A\times\Gamma)\times\Gamma} & {(A\times\Gamma)\times(\Gamma\times\Gamma)} & {
		\begin{array}{c}
		  (A\times C\times\Gamma) \\
		  \cp                     \\
		  (B\times D\times\Gamma)
		\end{array}
	      } \\
	      & {(A\times\Gamma)\times(\Gamma\times\Gamma)} & {
		\begin{array}{c}
		  (C\times A\times\Gamma) \\
		  \cp                     \\
		  (D\times A\times\Gamma)
		\end{array}
	      }
	      \arrow["{\coproducto{t_2}{t_3}}",sloped, from=6-3, to=1-3,
		rounded corners,dashed,
		to path={[pos=0.75]
		  -- ([xshift=1mm]\tikztostart.east)
		  -| ([xshift=1.6cm,yshift=-2mm]\tikztotarget.east)\tikztonodes
		-- (\tikztotarget.east)}
	      ]
	      \\
	      & {(A\times A)\times(\Gamma\times\Gamma)} & {(C\cp D)\times (A\times\Gamma)} \\
	      {(\Gamma\times\Gamma)\times\Gamma} & {(A\times A)\times\Gamma} & {(A\times\Gamma)\times (A\times\Gamma)} \\
	      \Gamma\times\Gamma
	      \arrow["{(u/x)t_1\times\Id}", from=9-1, to=2-1,
		rounded corners, sloped,dashed,
		to path={[pos=0.75]
		  -- ([xshift=-1mm,yshift=1mm]\tikztostart.west)
		  -| ([xshift=-2mm]\tikztotarget.west)\tikztonodes
		-- (\tikztotarget.west)}
	      ]
	      && A\times\Gamma
	      \arrow["{\elimor(t_1,\abstr{y}t_2,\abstr{z}t_3)}",sloped, from=9-3, to=1-3,
		rounded corners,
		to path={[pos=0.75]
		  -- ([xshift=3mm]\tikztostart.east)
		  -| ([xshift=1.8cm]\tikztotarget.east)\tikztonodes
		-- (\tikztotarget.east)}
	      ]
	      \arrow["{(\Id\times u_\Gamma)\times(\Id\times\Id)}", out=215,in=160,sloped, dashed, from=5-2, to=7-2]
	      \arrow["{\elimor((u/x)t_1,\abstr{y}(u/x)t_2,\abstr{z}(u/x)t_3)}", from=1-1, to=1-3]
	      \arrow["d"', from=2-1, to=2-3,dashed]
	      \arrow["\Id\times\Delta", dashed, from=2-1, to=4-2,sloped]
	      \arrow["{\coproducto{(u/x)t_2}{(u/x)t_3}}"', from=2-3, to=1-3,dashed]
	      \arrow["{{\Delta}\cplabel{\Delta}}"', dashed, from=2-3, to=3-3]
	      \arrow["{(\sigma\times\Id)\cplabel(\sigma\times\Id)}", dashed, to=3-2, from=4-3,sloped]
	      \arrow["{(\Id\times(u_\Gamma\times\Id))\cplabel (\Id\times(u_\Gamma\times\Id))}", dashed, from=3-2, to=6-3,sloped,out=-45,in=135]
	      \arrow["{(\pi_2\times\Id)\cplabel(\pi_2\times\Id)}"', dashed, from=3-3, to=4-3]
	      \arrow["d", from=4-2, to=3-2,dashed]
	      \arrow["{\Id\times(u\times\Id)}", dashed, from=4-2, to=7-3,sloped]
	      \arrow["{(u_{\Gamma}\times\Id)\cplabel(u_{\Gamma}\times\Id)}"', dashed, from=4-3, to=5-3]
	      \arrow["{t_1\times\Id}"', dashed, from=5-1, to=2-1]
	      \arrow["\Id\times\Delta", dashed, from=5-1, to=5-2]
	      \arrow["{(\Id\times u_\Gamma)\times\Id}", dashed, from=5-1, to=8-2,sloped,bend right=10]
	      \arrow["{t_1\times\Id}", dashed, from=5-2, to=4-2]
	      \arrow["{\Id\times\sigma\times\Id}", dashed, from=5-2, to=6-2]
	      \arrow["{\Id\times (u_\Gamma\times\Id)}", dashed, from=5-2, to=8-3,sloped,out=-45,in=140]
	      \arrow["(\sigma\times\Id)\cplabel(\sigma\times\Id)", dashed, to=6-3, from=5-3]
	      \arrow["{\Id\times(u_\Gamma\times\Id)}", dashed, from=6-2, to=8-3,sloped]
	      \arrow["\Id\times\sigma\times\Id", dashed, from=7-2, to=8-3,sloped]
	      \arrow["d"', from=7-3, to=6-3,dashed]
	      \arrow["{(u_\Gamma\times\Id)\times\Id}"', dashed, from=8-1, to=5-1]
	      \arrow["{(u_\Gamma\times u_\Gamma)\times\Id}", dashed, from=8-1, to=8-2]
	      \arrow["\Id\times\Delta", dashed, from=8-2, to=7-2]
	      \arrow["\Delta\times\Delta", dashed, from=9-3, to=7-2,sloped]
	      \arrow["{t_1\times\Id}"', from=8-3, to=7-3,dashed]
	      \arrow["\Delta\times\Id"', dashed, from=9-1, to=8-1]
	      \arrow["{u_\Gamma\times\Id}"', from=9-1, to=9-3]
	      \arrow["\Delta\times\Id", dashed, from=9-3, to=8-2,sloped]
	      \arrow["\Delta"', from=9-3, to=8-3,dashed]
	      \arrow["(\Id\times\Delta)\cplabel(\Id\times\Delta)", dashed, from=2-3, to=3-2,sloped]
	    \end{tikzcd}
	  \]
	  \caption{Substitution for $\elimor$. Proof of \cref{lem:substitution} in \appref{B.1}.}
	  \label{fig:substitutionelimor}
	\end{figure}
	The commutation of each subdiagram is justified as follows:
	\begin{itemize}
	  \item Diagram (1): Induction hypothesis and functoriality of $\times$.
	  \item Diagrams (2) and (18): By definition.
	  \item Diagrams (3) and (10): Naturality of $d$.
	  \item Diagrams (4), (5), (6), (11), and (15): functoriality of $\times$.
	  \item Diagram (7): Naturality of $\Delta$ and functoriality of $\times$.
	  \item Diagram (8): $(\pi_2\times\Id_{C\times\Gamma})\circ\Delta = (\sigma\times\Id_{\Gamma\times\Gamma})\circ(\Id_C\times\Delta)$, and functoriality of $\cp$ (\cref{lem:cpbifunctorial}).
	  \item Diagram (9): Naturality of $\sigma$ and functoriality of $\times$ and $\cp$ (\cref{lem:cpbifunctorial}).
	  \item Diagram (12): This diagram commutes only when precomposed with 
	    \[
	      A\times\Gamma\xlra{(\Id_{A\times\Gamma})\circ(\Id_A\times\Delta)} A\times\Gamma\times\Gamma\times\Gamma
	    \]
	    which is the case here, since all the $\Gamma$ that appear in this diagram starts from the same point.

	    Hence, with this precomposition, the diagram can be proven by evaluation. On the top side we have
	    \[
	      (a,g)\mapsto ((a,g),(g,g))
	      \mapsto ((a,g),(u(g),g))
	    \]
	    On the bottom side we have
	    \[
	      (a,g)\mapsto ((a,g),(g,g))
	      \mapsto ((a,g),(g,g))
	      \mapsto ((a,g),(u(g),g))
	    \]
	  \item Diagram (13): Naturality of $\sigma$.
	  \item Diagram (14): Straightforward.
	  \item Diagram (16): Induction hypothesis and functoriality of $\cp$ (\cref{lem:cpbifunctorial}).
	  \item Diagram (17): Weakening.
	    \qedhere
	\end{itemize}
    \end{itemize}
  \end{proof}

  \subsection[Theorem~\ref{thm:soundness}]{\cref{thm:soundness}}\label{B.2}
  \restate{Theorem}{thm:soundness}{Soundness}{
    If $t\lra r$ and $\Gamma\vdash t:A$, then
    $\sem{\Gamma\vdash t:A} = \sem{\Gamma\vdash r:A}$.
  }
  \begin{proof}
    By induction on the relation $\longrightarrow$.
    \begin{itemize}
      \item Rule $\elimtop(\star,t)  \longrightarrow t  $. The commuting diagram is
	\[
	  \begin{tikzcd}[column sep=1cm,row sep=2mm]
	    \Gamma &&& A \\
	    & {g} & {t(g)} \\[3mm]
	    & {(g,g)} & {(\star,t(g))} \\
	    \Gamma\times\Gamma &&& {\{\star\}\times A}
	    \arrow["t", from=1-1, to=1-4]
	    \arrow["\Delta"', from=1-1, to=4-1]
	    \arrow[dotted, blue, no head, from=2-2, to=1-1]
	    \arrow[maps to,  from=2-2, to=2-3]
	    \arrow[maps to,  from=2-2, to=3-2]
	    \arrow[dotted, blue, no head, from=2-3, to=1-4]
	    \arrow[maps to,  from=3-2, to=3-3]
	    \arrow[dotted, blue, no head, from=3-2, to=4-1]
	    \arrow[maps to,  from=3-3, to=2-3]
	    \arrow[dotted, blue, no head, from=3-3, to=4-4]
	    \arrow["{{!}\times t}"', from=4-1, to=4-4]
	    \arrow["{\pi_2}"', from=4-4, to=1-4]
	  \end{tikzcd}
	\]
      \item Rule $(\lambda \abstr{x}t)~u  \longrightarrow (u/x)t  $. The commuting diagram is
	\[
	  \begin{tikzcd}[column sep=2cm,row sep=5mm]
	    \Gamma\ar[rr,"(u/x)t"]\ar[dd,"\Delta"'] & & A\\
	    & B\times\Gamma\ar[ur,"t",sloped,dashed]\ar[d,"\Id\times\eta^B"',dashed,bend right] & \\[1cm]
	    \Gamma\times\Gamma\ar[ru,"u\times\Id",sloped,dashed]\ar[r,"u\times\eta^B"'] & B\times\home B{B\times\Gamma}\ar[u,"\varepsilon"',dashed,bend right]\ar[r,"\Id\times\home B{t}"'] & B\times\home BA\ar[uu,"\varepsilon"']
	  \end{tikzcd}
	\]
	\begin{itemize}
	  \item The upper diagram commutes by \cref{lem:substitution}
	  \item The bottom-left diagram commutes by functoriality of $\times$.
	  \item The bottom-middle diagram commutes by axiom of the adjunction.
	  \item The bottom-right diagram commutes by naturality of $\varepsilon$.
	\end{itemize}
      \item Rule $\pi_1\pair{t}{u}  \longrightarrow t  $. The commuting diagram is
	\[
	  \begin{tikzcd}[column sep=1cm,row sep=2mm]
	    \Gamma &&& A \\
	    & {g} & {t(g)} \\[3mm]
	    & {(g,g)} & {(t(g),u(g))} \\
	    \Gamma\times\Gamma &&& {A\times B}
	    \arrow["t", from=1-1, to=1-4]
	    \arrow["\Delta"', from=1-1, to=4-1]
	    \arrow[dotted, blue, no head, from=2-2, to=1-1]
	    \arrow[maps to,  from=2-2, to=2-3]
	    \arrow[maps to,  from=2-2, to=3-2]
	    \arrow[dotted, blue, no head, from=2-3, to=1-4]
	    \arrow[maps to,  from=3-2, to=3-3]
	    \arrow[dotted, blue, no head, from=3-2, to=4-1]
	    \arrow[maps to,  from=3-3, to=2-3]
	    \arrow[dotted, blue, no head, from=3-3, to=4-4]
	    \arrow["{t\times g}"', from=4-1, to=4-4]
	    \arrow["{\pi_1}"', from=4-4, to=1-4]
	  \end{tikzcd}
	\]
      \item Rule $\pi_2\pair{t}{u}  \longrightarrow u  $. This case is analogous to the previous one.
      \item Rule $\elimor(\inl(t),\abstr{x}v,\abstr{y}w) \longrightarrow (t/x)v  $.
	The commuting diagram is
	\[
	  \begin{tikzcd}[column sep=2cm]
	    \Gamma\ar[r,"(t/x)v"]\ar[d,"\Delta"'] & A \\
	    \Gamma\times\Gamma \ar[d,"t\times\Id"'] & (B\times\Gamma)\cp(C\times\Gamma)\ar[u,"\coproducto{v}{w}"']\\
	    B\times\Gamma\ar[uur,"v",sloped,dashed]\ar[r,"i_1\times\Id"']\ar[ur,"i_1",sloped,dashed] & (B\cp C)\times\Gamma\ar[u,"d"']
	  \end{tikzcd}
	\]
	\begin{itemize}
	  \item The upper diagram commutes by \cref{lem:substitution}.
	  \item The middle diagram commutes by \cref{lem:coprod}.
	  \item The bottom diagram commutes since for any $b\in B$ and $g\in\Gamma$ we have
	    \[
	      d(i_1(b),g)=d((b,0),g)=((b,g),0) = i_1(b,g)
	    \]
	\end{itemize}
      \item Rule $\elimor(\inr(u),\abstr{x}v,\abstr{y}w) \longrightarrow (u/y)w  $.
	This case is analogous to the previous one.
      \item Rule $\elimor(\inl(t)\plus\inr(u),\abstr{x}v,\abstr{y}w)\longrightarrow (t/x)v\plus (u/y)w$. 
	The commuting diagram is
	\[
	  \begin{tikzcd}[column sep=1cm,row sep=5mm]
	    \Gamma \\
	    {\Gamma\times\Gamma} &[-5mm] &[-4mm]&[-1.4cm] {(\Gamma\times\Gamma)\times\Gamma} \\
	    & {(\Gamma\times\Gamma)\times(\Gamma\times\Gamma)} \\
	    & {(\Gamma\times\Gamma)\times(\Gamma\times\Gamma)} \\
	    && {(A\times B)\times(\Gamma\times\Gamma)} \\
	    {C\times C} & {(A\times\Gamma)\times (B\times\Gamma)} && {(A\times B)\times\Gamma} \\
	    C &&& {(A\cp B)\times(A\cp B)\times\Gamma} \\
	    {(A\times\Gamma)\cp(B\times\Gamma)} &&& {(A\cp B)\times\Gamma}
	    \arrow["\Delta"', from=1-1, to=2-1]
	    \arrow["{\Delta\times\Id}", from=2-1, to=2-4]
	    \arrow["{\Delta\times\Delta}", dashed, from=2-1, to=3-2,sloped]
	    \arrow["{(t/x)v\times (u/x)w}"',sloped, from=2-1, to=6-1]
	    \arrow["{\Id\times\Delta}", dashed, from=2-4, to=3-2,sloped]
	    \arrow["{(t\times u)\times\Id}", from=2-4, to=6-4]
	    \arrow["{\Id\times\sigma\times\Id}", dashed, from=3-2, to=4-2]
	    \arrow["{(t\times\Id)\times (u\times\Id)}"', out=200,in=160, dashed, from=3-2, to=6-2]
	    \arrow["{(t\times\Id)\times (u\times\Id)}", dashed, from=4-2, to=6-2,pos=0.3]
	    \arrow["{\Id\times\sigma\times\Id}", dashed, from=5-3, to=6-2,sloped]
	    \arrow["{\sumhat C}"', from=6-1, to=7-1]
	    \arrow["{v\times w}", dashed, from=6-2, to=6-1]
	    \arrow["{\Id\times\Delta}"', dashed, from=6-4, to=5-3]
	    \arrow["\delta", dashed, from=6-4, to=6-2]
	    \arrow["{(i_1\times i_2)\times\Id}", from=6-4, to=7-4]
	    \arrow["{\sumcp\times\Id}", from=7-4, to=8-4]
	    \arrow["{\coproducto{v}{w}}", from=8-1, to=7-1]
	    \arrow["d", from=8-4, to=8-1]
	    \arrow["\inclusion", from=6-2, to=8-1,sloped,dashed]
	    \arrow["\inclusion"', from=6-4, to=8-4,dashed,out=185,in=175,looseness=1.4]
	  \end{tikzcd}
	\]
	\begin{itemize}
	  \item The top diagram commutes by the functoriality of $\times$.
	  \item The upper-left diagram commutes by \cref{lem:substitution} and the functoriality of $\times$.
	  \item The upper-middle diagram commutes by coherence, when precomposed with $\Delta$s.
	  \item The upper-right-top diagram commutes by the naturality of $(\Id\times\sigma\times\Id)\circ(\Id\times\Delta)$.
	  \item The upper-right-bottom diagram is the definition of the map $\delta$.
	  \item The bottom-left diagram commutes by definition of $\coproducto vw$.
	  \item The bottom-right diagram commutes by definition of $\sumcp$.
	  \item The bottom-middle diagram commutes since $\delta$ is a restriction of $d$.
	\end{itemize}

      \item Rule ${\star} \plus \star  \longrightarrow \star  $. The commuting diagram is
	\[
	  \begin{tikzcd}[column sep=1.5cm]
	    \Gamma\ar[r,"{!}"]\ar[d,"\Delta"'] & \{\star\} \\
	    \Gamma\times\Gamma\ar[r,"{!}\times{!}"'] & \{\star\}\times\{\star\}\ar[u,"\sumhat{\{\star\}}"']
	  \end{tikzcd}
	\]
	Straightforward.
      \item Rule $(\lambda \abstr{x}t) \plus (\lambda \abstr{x}u) \longrightarrow \lambda \abstr{x}(t \plus u) $. The commuting diagram is
	\[
	  \begin{tikzcd}[column sep=1.5cm]
	    \Gamma & {\home A{A\times\Gamma}} \\
	    \Gamma\times\Gamma & {\home A{(A\times\Gamma)\times(A\times\Gamma)}} \\
	    {\home A{A\times\Gamma}\times\home A{A\times\Gamma}} & {\home A{B\times B}} \\
	    {\home AB\times\home AB} & {\home AB}
	    \arrow["{\eta^A}"', from=1-1, to=1-2]
	    \arrow["\Delta"', from=1-1, to=2-1]
	    \arrow["{\home A\Delta}", from=1-2, to=2-2]
	    \arrow["\Delta", from=1-2, to=3-1,dashed,sloped]
	    \arrow["{\eta^A\times\eta^A}"', from=2-1, to=3-1]
	    \arrow["{\home A{t\times u}}", from=2-2, to=3-2]
	    \arrow["{\home At\times\home Au}"', from=3-1, to=4-1]
	    \arrow["{\home A{\sumhat B}}", from=3-2, to=4-2]
	    \arrow["\sumhat{\home AB}"', from=4-1, to=4-2]
	  \end{tikzcd}
	\]
	\begin{itemize}
	  \item The top diagram commutes by naturality of $\Delta$.
	  \item The bottom diagram commutes since
	    \[
	      \sumhat{\home AB}\circ(t\times u)\circ\Delta\circ f = (t\circ f)\sumhat B(u\circ f)
	    \]
	\end{itemize}
      \item Rule $\pair{t_1}{u_1} \plus \pair{t_2}{u_2}  \longrightarrow \pair{t_1 \plus t_2}{u_1 \plus u_2} $. The commuting diagram is
	\[
	  \begin{tikzcd}[column sep=2.5cm]
	    \Gamma\times\Gamma\ar[d,"\Delta\times\Delta"'] & \Gamma\ar[l,"\Delta"']\\
	    (\Gamma\times\Gamma)\times(\Gamma\times\Gamma)\ar[r,"(t_1\times t_2)\times(u_1\times u_2)"]\ar[d,"(t_1\times u_1)\times(t_2\times u_2)"'] & (A\times A)\times(B\times B)\ar[d,"\sumhat A\times\sumhat B"]\\
	    (A\times B)\times(A\times B)\ar[r,"\sumhat{A\times B}"']\ar[ur,"\simeq",sloped,dashed] & A\times B
	  \end{tikzcd}
	\]
	\begin{itemize}
	  \item The top diagram commutes by coherence.
	  \item The bottom diagram commutes since
	    \[
	      (a,b)\sumhat{A\times B}(a',b') = (a\sumhat A a',b\sumhat B b')
	    \]
	\end{itemize}
      \item Rule $\inl(t_1) \plus \inl(t_2)  \longrightarrow \inl(t_1 \plus t_2) $. The commuting diagram is
	\[
	  \begin{tikzcd}[column sep=1.2cm]
	    \Gamma\times\Gamma\ar[d,"t_1\times t_2"'] && \Gamma\ar[ll,"\Delta"']\\
	    A\times A\ar[r,"\sumhat A"]\ar[dr,"i_1\times i_1"',sloped] & A\ar[r,"i_1"] & A\cp B\\
	    & (A\cp B)\times (A\cp B)\ar[ur,"\sumcp"',sloped]
	  \end{tikzcd}
	\]
	The diagram commutes by \cref{lem:injhomomorphism}.

      \item Rule $\inl(t_1) \plus (\inl(t_2)\plus\inr(u_1))  \longrightarrow \inl(t_1\plus t_2)\plus\inr(u_1) $. The commuting diagram is
	\[
	  \begin{tikzcd}[column sep=1.2cm]
	    { \Gamma\times\Gamma} & \Gamma \\
	    {\Gamma\times(\Gamma\times\Gamma)} && {(\Gamma\times\Gamma)\times\Gamma} \\
	    {A\times (A\times B)} && {(A\times A)\times B} \\
	    {(A\cp B)\times(((A\cp B)\times (A\cp B)))} && {A\times B} \\
	    {(A\cp B)\times(A\cp B)} & {A\cp B} & {(A\cp B)\times(A\cp B)}
	    \arrow["{\Id\times\Delta}"', from=1-1, to=2-1] 
	    \arrow["{\Delta\times\Id}", from=1-1, to=2-3,sloped] 
	    \arrow["\Delta"', from=1-2, to=1-1] 
	    \arrow["{t_1\times (t_2\times u_1)}"', from=2-1, to=3-1] 
	    \arrow["\alpha"', dashed, from=2-3, to=2-1] 
	    \arrow["{(t_1\times t_2)\times u_1}", from=2-3, to=3-3]
	    \arrow["{i_1\times (i_1\times i_2)}"', from=3-1, to=4-1]
	    \arrow["{i_1\times\inclusion}", dashed, from=3-1, to=5-1,out=-10,in=10,pos=0.3,looseness=2] 
	    \arrow["\alpha"', dashed, from=3-3, to=3-1] 
	    \arrow["{\sumhat A\times\Id}", from=3-3, to=4-3] 
	    \arrow["{\Id\times\sumcp}", from=4-1, to=5-1]
	    \arrow["\inclusion", dashed, from=4-3, to=5-2,sloped]
	    \arrow["{i_1\times i_2}", from=4-3, to=5-3] 
	    \arrow["\sumcp"', from=5-1, to=5-2] 
	    \arrow["\sumcp", from=5-3, to=5-2]
	  \end{tikzcd}
	\]

	\begin{itemize}
	  \item The top diagram commutes because $\alpha\circ(\Delta\times\Id)\circ\Delta = (\Id\circ\Delta)\circ\Delta$. Notice that it only commutes when precomposed with $\Delta$.
	  \item The middle diagram commutes by naturality of $\alpha$.
	  \item The bottom-left diagram commutes by functoriality of $\times$ and \cref{lem:inclusion}.
	  \item The bottom-right diagram commutes by \cref{lem:inclusion}.
	  \item The bottom-center diagram commutes since $(a,0)\sumcp (a',b) = (a\sumhat A a',b)$.
	\end{itemize}

      \item $\inr(u_1)\parallel\inl(t_1)\longrightarrow\inl(t_1)\parallel\inr(u_1)$. The commuting diagram is 
	\[
	  \begin{tikzcd}[column sep=1.5cm]
	    \Gamma \\
	    { \Gamma\times\Gamma} & {A\times B} & {(A\cp B)\times(A\cp B)} \\
	    {B\times A} & {(A\cp B)\times(A\cp B)} & {(A\cp B)}
	    \arrow["\Delta", from=1-1, to=2-1]
	    \arrow["{t_1\times u_1}", from=2-1, to=2-2]
	    \arrow["{u_1\times t_1}"', from=2-1, to=3-1]
	    \arrow["{i_1\times i_2}", from=2-2, to=2-3]
	    \arrow["\simeq", dashed, from=2-2, to=3-1,sloped]
	    \arrow["\sumcp", from=2-3, to=3-3]
	    \arrow["{i_2\times i_1}"', from=3-1, to=3-2]
	    \arrow["\sumcp"', from=3-2, to=3-3]
	  \end{tikzcd}
	\]
	\begin{itemize}
	  \item The left diagram commutes by coherence.
	  \item The right diagram commutes since 
	    \begin{align*}
	      \sumcp\circ(i_2\times i_1)(b,a)
	      &=(b,1)\sumcp (a,0) \\
	      &= (a,b)\\
	      &= (a,0)\sumcp (b,1)
	      = \sumcp\circ(i_1\times i_2)(a,b)
	    \end{align*}
	\end{itemize}

      \item Rule $\inr(u_1) \plus \inr(u_2)  \longrightarrow \inr(u_1 \plus u_2)$. This case is analogous to the case of $\inl(t_1)\plus\inl(t_2)\longrightarrow \inl(t_1\plus t_2)$.

      \item Rule $\inr(u_1) \plus (\inl(t_1)\plus\inr(u_2))  \longrightarrow \inl(t_1)\plus\inr(u_1 \plus u_2) $. The commuting diagram is
	\[
	  \begin{tikzcd}[column sep=1cm,row sep=1cm]
	    { \Gamma\times\Gamma} & \Gamma \\
	    {\Gamma\times(\Gamma\times\Gamma)} &{B\times (A\times B)}& {(A\cp B)\times((A\cp B)\times (A\cp B))} \\
	    {A\times(B\times B)} && {(A\cp B)\times (A\cp B)} \\
	    {A\times B} &{(A\cp B)\times(A\cp B)}& {A\cp B}
	    \arrow["{\Id\times \Delta}"', from=1-1, to=2-1]
	    \arrow["\Delta"', from=1-2, to=1-1]
	    \arrow["{u_1\times (t_1\times u_2)}",sloped, from=2-1, to=2-2]
	    \arrow["{t_1\times(u_1\times u_2)}"', from=2-1, to=3-1]
	    \arrow["{\Id\times\sumcp}", from=2-3, to=3-3]
	    \arrow["{i_2\times(i_1\times i_2)}",sloped, from=2-2, to=2-3]
	    \arrow["{i_2\times\inclusion}"',sloped, dashed, from=2-2, to=3-3]
	    \arrow["{\simeq}"',sloped, dashed, from=3-1, to=2-2]
	    \arrow["{\Id\times\sumhat B}"', from=3-1, to=4-1]
	    \arrow["\sumcp", from=3-3, to=4-3]
	    \arrow["\inclusion", dashed, from=4-1, to=4-3,bend left=15,sloped]
	    \arrow["{i_1\times i_2}"', from=4-1, to=4-2,sloped]
	    \arrow["\sumcp"', from=4-2, to=4-3,sloped]
	  \end{tikzcd}
	\]
	\begin{itemize}
	  \item The left diagram commutes by coherence.
	  \item The right diagram commutes by the functoriality of $\times$ and \cref{lem:inclusion}.
	  \item The bottom diagram commutes by \cref{lem:inclusion}.
	  \item The middle diagram commutes since
	    \[
	      \sumcp\circ (i_2\times\inclusion) (b_1,(a,b_2))
	      = (b_1,1)\sumcp (a,b_2)
	      = (a,b_1\sumhat B b_2)
	    \]
	\end{itemize}

      \item Rule $(\inl(t_1)\plus\inr(u_1)) \plus \inl(t_1)  \longrightarrow \inl(t_1\plus t_2)\plus\inr(u_1) $. This case is analogous to the case of $\inr(u_1)\plus(\inl(t_1)\plus\inr(u_2))\longrightarrow\inl(t_1)\plus\inr(u_1\plus u_2)$, where the commutation of the middle diagram is given by
	\[
	  \sumcp\circ(\inclusion\times i_1)((a_1,b),a_2)
	  = (a_1,b)\sumcp (a_2,0)
	  = (a_1\sumhat A a_2,b)
	\]

      \item Rule $(\inl(t_1)\plus\inr(u_1)) \plus \inr(u_2)  \longrightarrow \inl(t_1)\plus\inr(u_1\plus u_2) $. This case is analogous to the case $\inl(t_1)\plus(\inl(t_2)\plus\inr(u_1))\longrightarrow\inl(t_1\plus t_2)\plus\inr(u_1)$, where the bottom-center diagram commutes since 
	$(a,b)\sumcp(b',1) = (a,b\sumhat B b')$.

      \item Rule $(\inl(t_1)\plus\inr(u_1)) \plus (\inl(t_2)\plus\inr(u_2))  \longrightarrow \inl(t_1\plus t_2)\plus\inr(u_1\plus u_2) $. The commuting diagram is
	\[
	  \begin{tikzcd}[column sep=1cm]
	    { \Gamma\times\Gamma} & \Gamma \\
	    {(\Gamma\times\Gamma)\times(\Gamma\times\Gamma)} \\
	    {(A\times B)\times (A\times B)} && {(A\times A)\times (B\times B)} \\
	    {((A\cp B)\times(A\cp B))\times (A\cp B)} && {A\times B} \\
	    {(A\cp B)\times(A\cp B)} & {A\cp B} & {(A\cp B)\times(A\cp B)}
	    \arrow["{\Delta\times\Delta}"', from=1-1, to=2-1]
	    \arrow["\Delta"', from=1-2, to=1-1]
	    \arrow["{(t_1\times u_1)\times (t_2\times u_2)}"', from=2-1, to=3-1]
	    \arrow["{(t_1\times t_2)\times (u_1\times u_2)}", from=2-1, to=3-3,sloped]
	    \arrow["{(i_1\times i_2)\times (i_1\times i_2)}"', from=3-1, to=4-1]
	    \arrow["{\inclusion\times\inclusion}", dashed, from=3-1, to=5-1,out=-10,in=10,pos=0.3,looseness=1.5]
	    \arrow["\simeq", dashed, from=3-3, to=3-1]
	    \arrow["{\sumhat A\times\sumhat B}", from=3-3, to=4-3]
	    \arrow["{\sumcp\times\sumcp}"', from=4-1, to=5-1]
	    \arrow["\inclusion", dashed, from=4-3, to=5-2,sloped]
	    \arrow["{i_1\times i_2}", from=4-3, to=5-3]
	    \arrow["\sumcp"', from=5-1, to=5-2]
	    \arrow["\sumcp", from=5-3, to=5-2]
	  \end{tikzcd}
	\]

	\begin{itemize}
	  \item The upper diagram commutes by coherence.
	  \item The bottom-left diagram commutes by functoriality of $\times$ and \cref{lem:inclusion}.
	  \item The bottom-right diagram commutes by \cref{lem:inclusion}.
	  \item The bottom-middle diagram commutes since
	    \[
	      (a,b)\sumcp(a',b') = (a\sumhat A a',b\sumhat B b')
	      \qedhere
	    \]
	\end{itemize}
    \end{itemize}
  \end{proof}

  \subsection[Theorem~\ref{thm:adequacy}]{\cref{thm:adequacy}}\label{B.3}
  \restate{Theorem}{thm:adequacy}{Adequacy}{
    If $\sem{\vdash t:A} = \sem{\vdash u:A}$, then $t\sim u$.
  }
  \begin{proof}
    By induction on the structure of $A$.
    \begin{itemize}
      \item Let $A=\top$. Then $t\sim u$, since there is no elimination context such that $[\cdot]:\top\vdash K:\top\vee\top$.
      \item Let $A=\bot$. This is impossible, since there is no closed proof of $\bot$.
      \item Let $A=B\Rightarrow C$, then, by \cref{thm:SR,thm:SN,thm:IP}, $t\longrightarrow^*\lambda x.t'$ and $u\longrightarrow\lambda x.u'$.
	Hence, since by \cref{thm:soundness}, $\sem{\vdash\lambda x.t':B\Rightarrow C}=\sem{\vdash t:A}=\sem{\vdash u:A}=\sem{\vdash\lambda x.u':B\Rightarrow C}$, we have 
	\[
	  \begin{tikzcd}[column sep=1.5cm]
	    \{\star\}\ar[r,"{\eta^B}"] & \home B{B\times\{\star\}}\ar[r,"\home B{t'}",bend left]\ar[r,"\home B{u'}"',bend right] & \home BC
	  \end{tikzcd}
	\]
	Thus, $\sem{x:B\vdash t':C}=\sem{x:B\vdash u':C}$.

	By \cref{lem:substitution}, for all $\vdash v:B$, $\sem{\vdash (v/x)t':C}=\sem{\vdash (v/x)u':C}$. Thus, by the induction hypothesis, 
	\begin{equation}
	  \label{eq:IHlambdaadequacyAlg}
	  (v/x)t'\sim(v/x)u'
	\end{equation}

	In addition, we have that for any $[\cdot]:B\Rightarrow C\vdash K:\top\vee\top$, $K = K'[[\cdot]~v]$ for some $\vdash v:B$.
	So, $K[\lambda x.t'] = K'[(\lambda x.t')~v]\longrightarrow K'[(v/x)t']$ and $K[\lambda x.u'] = K'[ (\lambda x.u')~v]\longrightarrow K'[(v/x)u']$.
	By~\eqref{eq:IHlambdaadequacy}, there exists $\vdash w:\top\vee\top$ such that $K'[(v/x)t']\lra^* w$ and $K'[(v/x)u']\lra^* w$.
	Therefore, 
	\[
	  K[\lambda x.t'] \longrightarrow K'[(v/x)t']\longrightarrow^* w\lla^* K'[(v/x)u']\lla K[\lambda x.u'] 
	\]
	Hence, $t\sim u$.

      \item Let $A=B\wedge C$. Then, by \cref{thm:SR,thm:SN,thm:IP}, $t\longrightarrow^*\pair{t_1}{t_2}$ and $u\longrightarrow^*\pair{u_1}{u_2}$.
	Hence, since by \cref{thm:soundness}, $\sem{\vdash\pair{t_1}{t_2}:B\wedge C}=\sem{\vdash t:A}=\sem{\vdash u:A}=\sem{\vdash\pair{u_1}{u_2}:B\wedge C}$, we have
	\[
	  \begin{tikzcd}[column sep=1.5cm]
	    \{\star\}\ar[r,"\Delta"] & \{\star\}\times\{\star\}\ar[r,"t_1\times t_2",bend left]\ar[r,"u_1\times u_2"',bend right] & B\times C
	  \end{tikzcd}
	\]
	Thus, $\sem{\vdash t_1:B}=\sem{\vdash u_1:B}$ and $\sem{\vdash t_2:C}=\sem{\vdash u_2:C}$, and so, by the induction hypothesis $t_1\sim u_1$ and $t_2\sim u_2$. Hence, $t\sim u$.

      \item Let $A=B\vee C$. Then, by \cref{thm:SR,thm:SN,thm:IP},
	\[
	  t\lra^*\inl(t')\text{,}\quad
	  t\lra^*\inr(t')\text{,}\quad\text{or}\quad 
	  t\lra^*\inl(t')\plus\inr(t'')\text{,}
	\]
	and
	\[
	  u\lra^*\inl(u')\text{,}\quad
	  u\lra^*\inr(u')\text{,}\quad\text{or}\quad
	  u\lra^*\inl(u')\plus\inr(u'')\text{.}
	\]

	However, by \cref{thm:soundness}, and the fact that
	$\sem{\vdash t:A}=\sem{\vdash u:A}$, we have that
	\begin{itemize}
	  \item if $t\lra^*\inl(t')$ then $u\lra^*\inl(u')$, 
	  \item if $t\lra^*\inr(t')$ then $u\lra^*\inr(u')$, and 
	  \item if $t\lra^*\inl(t')\plus\inr(t'')$ then $u\lra^*\inl(u')\plus\inr(u'')$.
	\end{itemize}
	We must consider these three cases. 
	\begin{itemize}
	  \item In the first case, we have
	    \[
	      \begin{tikzcd}[column sep=1.5cm]
		\{\star\}\ar[r,"t'",bend left,pos=0.45]\ar[r,"u'"',bend right,pos=0.45] &A \ar[r,"i_1"] & A\cp B \\
	      \end{tikzcd}
	    \]
	    Thus, $\sem{\vdash t':A} = \sem{\vdash u':A}$, and so, by the induction hypothesis $t'\sim u'$. Hence, $t\sim u$.
	  \item In the second case, we have
	    \[
	      \begin{tikzcd}[column sep=1.5cm]
		\{\star\}\ar[r,"t'",bend left,pos=0.45]\ar[r,"u'"',bend right,pos=0.45] &B \ar[r,"i_2"] & A\cp B \\
	      \end{tikzcd}
	    \]
	    Thus, $\sem{\vdash t':B} = \sem{\vdash u':B}$, and so, by the induction hypothesis $t'\sim u'$. Hence, $t\sim u$.
	  \item In the third case, we have
	    \[
	      \begin{tikzcd}[column sep=1.5cm,row sep=1cm]
		\{\star\}\ar[r,"\Delta"] & \{\star\}\times\{\star\}\ar[r,"t'\times t''",bend left]\ar[r,"u'\times u''"',bend right] &[-1cm]A\times B \ar[d,"i_1\times i_2"] \\
		A\cp B&& (A\cp B)\times (A\cp B)\ar[ll,"\sumhat\times"]
	      \end{tikzcd}
	    \]
	    Thus, $\sem{\vdash t':A} = \sem{\vdash u':A}$ and $\sem{\vdash t'':B} = \sem{\vdash u'':B}$, and so, by the induction hypothesis $t'\sim u'$ and $t''\sim u''$. Hence, $t\sim u$.
	    \qedhere
	\end{itemize}
    \end{itemize}
  \end{proof}

  \section{Admissibility of commutation between the parallel operator and the elimination of disjunction}\label{C}
  The rule $\elimor(t\plus u,\abstr{x}v,\abstr{y}w)  \longrightarrow \elimor(t,\abstr{x}v,\abstr{y}w) \plus \elimor(u,\abstr{x}v,\abstr{y}w)$ could have been chosen instead of the rules implementing the commutation of the parallel operator with the elimination of disjunction. We show in this section that this rule is admissible in our model.

  The commuting diagram is the following
  \[
    \begin{tikzcd}[column sep=1cm,row sep=5mm]
      \Gamma
      \arrow[rdddddd, "{\elimor(t\plus u,\abstr{x}v,\abstr{y}w)}",sloped,
	rounded corners,
	to path={[pos=0.25]
	  -- ([yshift=5mm]\tikztostart.north)
	  -| ([xshift=1.7cm]\tikztotarget.east)\tikztonodes
	-- (\tikztotarget)}
      ]
      \arrow[rdddddd, "{\elimor(t,\abstr{x}v,\abstr{y}w) \plus \elimor(u,\abstr{x}v,\abstr{y}w)}"',sloped,
	rounded corners,
	to path={[pos=0.75]
	  -- ([xshift=-3.1cm]\tikztostart.west)
	  |- ([yshift=-5mm]\tikztotarget.south)\tikztonodes
	-- (\tikztotarget)}
      ]
      & {\Gamma\times\Gamma} \\
      {\Gamma\times\Gamma\times \Gamma\times\Gamma} & {\Gamma\times\Gamma\times\Gamma} \\
      {A\times\Gamma\times B\times\Gamma} & {A\times B\times\Gamma} \\
      & {(A\cp B)\times(A\cp B)\times\Gamma} \\
      {((A\cp B)\times\Gamma)\times((A\cp B)\times\Gamma)} & {(A\cp B)\times\Gamma} \\
      {((A\times\Gamma)\cp(B\times\Gamma))\times((A\times\Gamma)\cp(B\times\Gamma))} & {(A\times\Gamma)\cp(B\times\Gamma)} \\
      {C\times C} & C 
      \arrow["\Delta", from=1-1, to=1-2]
      \arrow["{\Delta\times\Delta}", from=1-2, to=2-1,sloped]
      \arrow["{\Delta\times\Id}", from=1-2, to=2-2]
      \arrow["{t\times\Id\times u\times\Id}"', from=2-1, to=3-1]
      \arrow["{t\times u\times\Id}", from=2-2, to=3-2]
      \arrow["{i_1\times\Id\times i_2\times\Id}"', from=3-1, to=5-1]
      \arrow["\delta"', dashed, from=3-2, to=3-1]
      \arrow["{i_1\times i_2\times\Id}", from=3-2, to=4-2]
      \arrow["{\sumcp\times\Id}", from=4-2, to=5-2]
      \arrow["{d\times d}"', from=5-1, to=6-1]
      \arrow["d", from=5-2, to=6-2]
      \arrow["{\coproducto vw\times\coproducto vw}"', from=6-1, to=7-1]
      \arrow["{\coproducto vw}", from=6-2, to=7-2]
      \arrow["{\sumhat C}"', from=7-1, to=7-2]
    \end{tikzcd}
  \]
  where the top diagram commutes by coherence and the bottom diagram commutes
  since 
  \[
    \adjustbox{valign=b}{
      \begin{tikzcd}[column sep=2cm,row sep=8mm]
	((a, g), (b, g)) & ((a,b),g) \\
	& ((a,0),(b,1), g) \\
	(((a,0), g),((b,1), g)) & ((a,b), g) \\
	(((a,g),0),((b, g),1)) & ((a, g),(b, g)) \\
	(v(a,g),w(b,g)) &  v(a,g)\sumhat C w(b,g)
	\arrow[maps to,"\delta"', dashed, from=1-2, to=1-1]
	\arrow[maps to,"{i_1\times\Id\times i_2\times\Id}"', from=1-1, to=3-1]
	\arrow[maps to,"{i_1\times i_2\times\Id}", from=1-2, to=2-2]
	\arrow[maps to,"{\sumcp\times\Id}", from=2-2, to=3-2]
	\arrow[maps to,"{d\times d}"', from=3-1, to=4-1]
	\arrow[maps to,"d", from=3-2, to=4-2]
	\arrow[maps to,"{\coproducto vw\times\coproducto vw}"', from=4-1, to=5-1]
	\arrow[maps to,"{\coproducto vw}", from=4-2, to=5-2]
	\arrow[maps to,"{\sumhat C}"', from=5-1, to=5-2]
      \end{tikzcd}
    }
  \]

  \section[Full proofs from Section~\ref{sec:modelAlg}]{Full proofs from \cref{sec:modelAlg}}\label{D}

  \subsection[Lemma~\ref{lem:substitutionAlg}]{\cref{lem:substitutionAlg}}\label{D.1}
  \restate{Lemma}{lem:substitutionAlg}{Substitution}
  {
    If $x:A,\Gamma\vdash t:B$ and $\Gamma\vdash u:A$, then
    the following diagram commutes.
  }
  \[
    \begin{tikzcd}[column sep=1cm]
      \sem\Gamma\ar[r,"(u/x)t"]\ar[d,"\Delta"'] & \sem B \\
      \sem\Gamma\times\sem\Gamma\ar[r,"u\times\Id"'] & \sem A\times\sem\Gamma\ar[u,"t"']
    \end{tikzcd}
  \]
  \begin{proof}
    By induction on $t$. 
    We only prove the cases of $\sstar s$, $\elimtop(t,u)$, and $\scal s\bullet t$, since all the other cases are analogous to those in the proof of \cref{lem:substitution}.
    \begin{itemize}
      \item Case $t = \sstar s$. Then $(u/x)t=\sstar s$ and $B=\top$. The diagram is
	\[
	  \begin{tikzcd}
	    \Gamma && {\mathcal S} \\
	    & {\{\star\}} \\
	    {\Gamma\times\Gamma} && {A\times\Gamma}
	    \arrow["{(u/x)\sstar s}", from=1-1, to=1-3]
	    \arrow["{!}", from=1-1, to=2-2,dashed,sloped]
	    \arrow["\Delta"', from=1-1, to=3-1]
	    \arrow["\scal s", from=2-2, to=1-3,dashed,sloped]
	    \arrow["{u\times \Id}"', from=3-1, to=3-3]
	    \arrow["{\sstar s}"', from=3-3, to=1-3]
	    \arrow["{!}", from=3-3, to=2-2,dashed,sloped]
	  \end{tikzcd}
	\]
	\begin{itemize}
	  \item The upper and the right diagrams are the definitions.
	  \item The left diagram commutes since $\{\star\}$ is a terminal object.
	\end{itemize}
      \item Case $t=\elimtop(t_1,t_2)$. The diagram is
	\[
	  \begin{tikzcd}[column sep=1cm]
	    \Gamma &[-1.1cm]&[1.8cm]&[-1.1cm] B \\[-2mm]
	    & {\Gamma\times\Gamma} & {\mathcal S\times B} \\
	    & {(\Gamma\times\Gamma)\times(\Gamma\times\Gamma)} & {(A\times\Gamma)\times(A\times\Gamma)} \\[-2mm]
	    {\Gamma\times\Gamma} &&& {A\times\Gamma}
	    \arrow["{(u/x)\elimtop(t_1,t_2)}", from=1-1, to=1-4]
	    \arrow["\Delta", from=1-1, to=2-2,dashed,sloped]
	    \arrow["\Delta"', from=1-1, to=4-1]
	    \arrow["{(u/x)t_1\times (u/x)t_2}", from=2-2, to=2-3,dashed]
	    \arrow["{\Delta\times\Delta}"', from=2-2, to=3-2,dashed]
	    \arrow["{\prodhat B}", from=2-3, to=1-4,dashed,sloped]
	    \arrow["{(u\times\Id)\times(u\times\Id)}", from=3-2, to=3-3,dashed]
	    \arrow["{t_1\times t_2}"', from=3-3, to=2-3,dashed]
	    \arrow["\Delta", from=4-1, to=3-2,dashed,sloped]
	    \arrow["{u\times\Id}"', from=4-1, to=4-4]
	    \arrow["{\elimtop(t_1,t_2)}"', from=4-4, to=1-4]
	    \arrow["\Delta", from=4-4, to=3-3,dashed,sloped]
	  \end{tikzcd}
	\]
	\begin{itemize}
	  \item The upper and the right diagrams are the definitions.
	  \item The left and the bottom diagram commutes by the naturality of $\Delta$.
	  \item The central diagram commutes by the induction hypothesis and the functoriality of $\times$.
	\end{itemize}
      \item Case $t=\scal s\bullet t'$. The diagram is
	\[
	  \begin{tikzcd}[column sep=1.5cm]
	    \Gamma && B \\
	    & B \\
	    {\Gamma\times\Gamma} && {A\times\Gamma}
	    \arrow["{(u/x)(\scal s\bullet t')}", from=1-1, to=1-3]
	    \arrow["{(u/x)t}", dashed,sloped, from=1-1, to=2-2]
	    \arrow["\Delta"', from=1-1, to=3-1]
	    \arrow["{\hat{\scal s}}", dashed,sloped, from=2-2, to=1-3]
	    \arrow["{u\times \Id}", from=3-1, to=3-3]
	    \arrow["{\scal s\bullet t}"', from=3-3, to=1-3]
	    \arrow["t", dashed,sloped, from=3-3, to=2-2]
	  \end{tikzcd}
	\]
	\begin{itemize}
	  \item The upper and the right diagrams are the definitions.
	  \item The left diagram commutes by the induction hypothesis.
	    \qedhere
	\end{itemize}
    \end{itemize}
  \end{proof}

  \subsection[{Theorem~\ref{thm:soundnessAlg}}]{\cref{thm:soundnessAlg}}\label{D.2}
  \restate{Theorem}{thm:soundnessAlg}{Soundness}
  {
    If $t\lra r$ and $\Gamma\vdash t:A$, then
    $\sem{\Gamma\vdash t:A} = \sem{\Gamma\vdash r:A}$.
  }
  \begin{proof}
    By induction on the relation $\longrightarrow$.
    We only prove the cases that differ from the reduction rules of $\lambda_\parallel$, since all the other cases are analogous to those in the proof of \cref{thm:soundness}.
    \begin{itemize}
      \item Rule $\elimtop(\sstar s,t) \longrightarrow  \scal s\bullet t$. The commuting diagram is
	\[
	  \begin{tikzcd}
	    \Gamma & {\Gamma\times \Gamma} & {\{\star\}\times A} \\
	    A & A & {\mathcal S\times A}
	    \arrow["\Delta", from=1-1, to=1-2]
	    \arrow["t"', from=1-1, to=2-1]
	    \arrow["{{!}\times t}", from=1-2, to=1-3]
	    \arrow["{\scal s\times \Id}", from=1-3, to=2-3]
	    \arrow["\rho", dashed, from=2-1, to=1-3,sloped]
	    \arrow["{\hat{\scal s}}"', from=2-1, to=2-2]
	    \arrow["{\prodhat A}", from=2-3, to=2-2]
	  \end{tikzcd}
	\]
	\begin{itemize}
	  \item The commutation of the left diagram is trivial by the definition of $\rho$.
	  \item The right diagram is the definition of $\hat{\scal s}$.
	\end{itemize}
      \item Rule $\sstare{\scal s_1}\plus \sstare{\scal s_2} \longrightarrow  \sstare{\scal s_1+\scal s_2}$. The commuting diagram is
	\[
	  \begin{tikzcd}
	    \Gamma & {\{\star\}} & {\mathcal S} \\
	    {\Gamma\times\Gamma} & {\{\star\}\times \{\star\}} & {\mathcal S\times \mathcal S}
	    \arrow["{{!}}", from=1-1, to=1-2]
	    \arrow["\Delta"', from=1-1, to=2-1]
	    \arrow["{\scal s_1+\scal s_2}", from=1-2, to=1-3]
	    \arrow["\Delta"', from=1-2, to=2-2,dashed]
	    \arrow["{{!}\times{!}}"', from=2-1, to=2-2]
	    \arrow["{\scal s_1\times\scal s_2}"', from=2-2, to=2-3]
	    \arrow["\sumhat{\mathcal S}"', from=2-3, to=1-3]
	  \end{tikzcd}
	\]
	\begin{itemize}
	  \item The left diagram commutes by naturality of $\Delta$.
	  \item The right diagram commutes since the magma operation of $\mathcal S$ is defined with respect to the bi-magma operation: $\sumhat{\mathcal S}=+$.
	\end{itemize}
      \item Rule $\scal s_1\bullet\sstare{\scal s_2} \longrightarrow  \sstare{\scal s_1\cdot s_2}$. The commuting diagram is
	\[
	  \begin{tikzcd}[column sep=1.5cm]
	    \Gamma & {\{\star\}} && {\mathcal S} \\
	    && {\mathcal S}
	    \arrow["{{!}}", from=1-1, to=1-2]
	    \arrow["{\scal s_1\cdot\scal s_2}", from=1-2, to=1-4]
	    \arrow["{\scal s_2}"', from=1-2, to=2-3,sloped]
	    \arrow["{\hat{\scal s}_1}"', from=2-3, to=1-4,sloped]
	  \end{tikzcd}
	\]
	The diagram commutes since the action of $\mathcal S$ is defined with respect to the bi-magma operation: $\prodhat{\mathcal S}=\cdot$.
      \item Rule $\scal s\bullet \lambda \abstr{x}t \longrightarrow  \lambda \abstr{x}\scal s\bullet t$. The commuting diagram is
	\[
	  \begin{tikzcd}[column sep=1cm]
	    \Gamma & {\home{A}{A\times\Gamma}} && {\home AB} & {\home AB}
	    \arrow["{\eta^A}", from=1-1, to=1-2]
	    \arrow["{\home At}", from=1-2, to=1-4]
	    \arrow["{\hat{\scal s}}"', bend right, from=1-4, to=1-5]
	    \arrow["{\home A{\hat{\scal s}}}", bend left, from=1-4, to=1-5]
	  \end{tikzcd}
	\]
	The diagram commutes since $(\scal s\prodhat{\home AB}f)(a) =\scal s\prodhat B f(a)$ by definition.
      \item Rule $\scal s\bullet\pair{t}{v} \longrightarrow  \pair{\scal s\bullet t}{\scal s\bullet v}$. The commuting diagram is
	\[
	  \begin{tikzcd}[column sep=1.5cm]
	    \Gamma & {\Gamma\times\Gamma} & {A\times B} & {A\times B}
	    \arrow["\Delta", from=1-1, to=1-2]
	    \arrow["{t\times v}", from=1-2, to=1-3]
	    \arrow["{\hat{\scal s}}"',bend right, from=1-3, to=1-4]
	    \arrow["{\hat{\scal s}\times\hat{\scal s}}", bend left, from=1-3, to=1-4]
	  \end{tikzcd}
	\]
	The diagram commutes since $\scal s\prodhat{A\times B}(a,b) = (\scal s\prodhat A a,\scal s\prodhat B b)$ by definition.
      \item Rule $\scal s\bullet\inl(t) \longrightarrow \inl(\scal s\bullet t)$. The commuting diagram is
	\[
	  \begin{tikzcd}[column sep=1.5cm]
	    \Gamma & A & {A\cp B} \\
	    & A & {A\cp B}
	    \arrow["t", from=1-1, to=1-2]
	    \arrow["{i_1}", from=1-2, to=1-3]
	    \arrow["{\hat{\scal s}}"', from=1-2, to=2-2]
	    \arrow["{\hat{\scal s}}", from=1-3, to=2-3]
	    \arrow["{i_1}"', from=2-2, to=2-3]
	  \end{tikzcd}
	\]
	The diagram commutes since $\scal s\prodcp(a,0) = (\scal s\prodhat A a,0)$.
      \item Rule $\scal s\bullet\inr(t)\longrightarrow \inr(\scal s\bullet t)$. The commuting diagram is
	\[
	  \begin{tikzcd}[column sep=1.5cm]
	    \Gamma & B & {A\cp B} \\
	    & B & {A\cp B}
	    \arrow["t", from=1-1, to=1-2]
	    \arrow["{i_2}", from=1-2, to=1-3]
	    \arrow["{\hat{\scal s}}"', from=1-2, to=2-2]
	    \arrow["{\hat{\scal s}}", from=1-3, to=2-3]
	    \arrow["{i_2}"', from=2-2, to=2-3]
	  \end{tikzcd}
	\]
	The diagram commutes since $\scal s\prodcp(b,1) = (\scal s\prodhat B b,0)$.

      \item Rule $\scal s\bullet(t\plus v) \longrightarrow \scal s\bullet t\plus \scal s\bullet v$. The commuting diagram is
	\[
	  \begin{tikzcd}
	    {\Gamma\times\Gamma} & \Gamma \\
	    {A\times B} & {(A\cp B)\times (A\cp B)} \\
	    {A\times B} & {A\cp B} \\
	    {(A\cp B)\times (A\cp B)} & {A\cp B}
	    \arrow["{t\times v}", from=1-1, to=2-1]
	    \arrow["\Delta", from=1-2, to=1-1]
	    \arrow["{i_1\times i_2}", from=2-1, to=2-2]
	    \arrow["{\hat{\scal s}\times\hat{\scal s}}"', from=2-1, to=3-1]
	    \arrow["inc", from=2-1, to=3-2,sloped]
	    \arrow["\sumcp", from=2-2, to=3-2]
	    \arrow["{i_1\times i_2}"', from=3-1, to=4-1]
	    \arrow["inc", from=3-1, to=4-2,sloped]
	    \arrow["{\hat{\scal s}}", from=3-2, to=4-2]
	    \arrow["\sumcp"', from=4-1, to=4-2]
	  \end{tikzcd}
	\]
	The diagram commutes since $\hat{\scal s}\times\hat{\scal s} = \hat{\scal s}$.
	\qedhere
    \end{itemize}
  \end{proof}

  \subsection[Theorem~\ref{thm:adequacyAlg}]{\cref{thm:adequacyAlg}}\label{D.3}
  \restate{Theorem}{thm:adequacyAlg}{Adequacy}{
    If $\sem{\vdash t:A} = \sem{\vdash u:A}$, then $t\sim u$.
  }
  \begin{proof}
    By induction on the structure of $A$.
    \begin{itemize}
      \item Let $A=\top$, then, $t\longrightarrow^*\sstare{\scal s_1}$ and $u\longrightarrow^*\sstare{\scal s_2}$.
	Hence, since by \cref{thm:soundnessAlg}, $\sem{\vdash \sstare{\scal s_1}:\top}=\sem{\vdash t:A}=\sem{\vdash u:A}=\sem{\vdash \sstare{\scal s_2}:\top}$, we have
	\[
	  \begin{tikzcd}[column sep=1.5cm]
	    \{\star\}\ar[r,"!"] & \{\star\}\ar[r,"\scal s_1",pos=0.45,bend left]\ar[r,"\scal s_2"',pos=0.45,bend right] & \mathcal S
	  \end{tikzcd}
	\]
	Thus, $\scal s_1=\scal s_2$. Hence, $t\sim u$.

      \item Let $A=\bot$. This is impossible, since there is no closed proof of $\bot$.

      \item Let $A=B\Rightarrow C$, then, $t\longrightarrow^*\lambda x.t'$ and $u\longrightarrow\lambda x.u'$.
	Hence, since by \cref{thm:soundnessAlg}, $\sem{\vdash\lambda x.t':B\Rightarrow C}=\sem{\vdash t:A}=\sem{\vdash u:A}=\sem{\vdash\lambda x.u':B\Rightarrow C}$, we have 
	\[
	  \begin{tikzcd}[column sep=1.5cm]
	    \{\star\}\ar[r,"{\eta^B}"] & \home B{B\times\{\star\}}\ar[r,"\home B{t'}",bend left]\ar[r,"\home B{u'}"',bend right] & \home BC
	  \end{tikzcd}
	\]
	Thus, $\sem{x:B\vdash t':C}=\sem{x:B\vdash u':C}$.

	By \cref{lem:substitutionAlg}, for all $\vdash v:B$, $\sem{\vdash (v/x)t':C}=\sem{\vdash (v/x)u':C}$. Thus, by the induction hypothesis, 
	\begin{equation}
	  \label{eq:IHlambdaadequacy}
	  (v/x)t'\sim(v/x)u'
	\end{equation}

	In addition, we have that for any $[\cdot]:B\Rightarrow C\vdash K:\top$, $K = K'[[\cdot]~v]$ for some $\vdash v:B$.
	So, $K[\lambda x.t'] = K'[(\lambda x.t')~v]\longrightarrow K'[(v/x)t']$ and $K[\lambda x.u'] = K'[ (\lambda x.u')~v]\longrightarrow K'[(v/x)u']$.
	By~\eqref{eq:IHlambdaadequacy}, there exists $\vdash v:\top$ such that $K'[(v/x)t']\lra^* v$ and $K'[(v/x)u']\lra^* v$.
	Therefore, 
	\begin{align*}
	  K[\lambda x.t'] & \longrightarrow K'[(v/x)t']\longrightarrow^* v\\
	  K[\lambda x.u'] & \longrightarrow K'[(v/x)u']\longrightarrow^* v
	\end{align*}
	Hence, $t\sim u$.

      \item Let $A=B\wedge C$, then, $t\longrightarrow^*\pair{t_1}{t_2}$ and $u\longrightarrow^*\pair{u_1}{u_2}$.
	Hence, since by \cref{thm:soundnessAlg}, $\sem{\vdash\pair{t_1}{t_2}:B\wedge C}=\sem{\vdash t:A}=\sem{\vdash u:A}=\sem{\vdash\pair{u_1}{u_2}:B\wedge C}$, we have
	\[
	  \begin{tikzcd}[column sep=1.5cm]
	    \{\star\}\ar[r,"\Delta"] & \{\star\}\times\{\star\}\ar[r,"t_1\times t_2",bend left]\ar[r,"u_1\times u_2"',bend right] & B\times C
	  \end{tikzcd}
	\]
	Thus, $\sem{\vdash t_1:B}=\sem{\vdash u_1:B}$ and $\sem{\vdash t_2:C}=\sem{\vdash u_2:C}$, and so, by the induction hypothesis $t_1\sim u_1$ and $t_2\sim u_2$. Hence, $t\sim u$.

      \item Let $A=B\vee C$, then,
	$t\lra^*\inl(t')$,
	$t\lra^*\inr(t')$, or 
	$t\lra^*\inl(t')\plus\inr(t'')$,
	and
	$u\lra^*\inl(u')$,
	$u\lra^*\inr(u')$, or
	$u\lra^*\inl(u')\plus\inr(u'')$.

	By \cref{thm:soundness}, and since
	$\sem{\vdash t:A}=\sem{\vdash u:A}$, we have that if
	$t\lra^*\inl(t')$ then $u\lra^*\inl(u')$, if
	$t\lra^*\inr(t')$ then $u\lra^*\inr(u')$, and if
	$t\lra^*\inl(t')\plus\inr(t'')$ then
	$u\lra^*\inl(u')\plus\inr(u'')$.

	We consider the three cases.
	\begin{itemize}
	  \item In the first case, we have
	    \[
	      \begin{tikzcd}[column sep=1.5cm]
		\{\star\}\ar[r,"t'",bend left,pos=0.45]\ar[r,"u'"',pos=0.45,bend right] &A \ar[r,"i_1"] & A\cp B \\
	      \end{tikzcd}
	    \]
	    Thus, $\sem{\vdash t':A} = \sem{\vdash u':A}$, and so, by the induction hypothesis $t'\sim u'$. Hence, $t\sim u$.
	  \item In the second case, we have
	    \[
	      \begin{tikzcd}[column sep=1.5cm]
		\{\star\}\ar[r,"t'",bend left,pos=0.45]\ar[r,"u'"',pos=0.45,bend right] &B \ar[r,"i_2"] & A\cp B \\
	      \end{tikzcd}
	    \]
	    Thus, $\sem{\vdash t':B} = \sem{\vdash u':B}$, and so, by the induction hypothesis $t'\sim u'$. Hence, $t\sim u$.
	  \item In the third case, we have
	    \[
	      \begin{tikzcd}[column sep=1.5cm,row sep=1cm]
		\{\star\}\ar[r,"\Delta"] & \{\star\}\times\{\star\}\ar[r,"t'\times t''",bend left]\ar[r,"u'\times u''"',bend right] &[-1cm]A\times B \ar[d,"i_1\times i_2"] \\
		A\cp B&& (A\cp B)\times (A\cp B)\ar[ll,"\sumhat\times"]
	      \end{tikzcd}
	    \]
	    Thus, $\sem{\vdash t':A} = \sem{\vdash u':A}$ and $\sem{\vdash t'':B} = \sem{\vdash u'':B}$, and so, by the induction hypothesis $t'\sim u'$ and $t''\sim u''$. Hence, $t\sim u$.
	    \qedhere
	\end{itemize}
    \end{itemize}
  \end{proof}

\fi
\end{document}